\setlist[enumerate,1]{label={(\roman*)}}
\Crefname{property}{Property}{Properties}
\theoremstyle{plain}
\newtheorem{theorem}{Theorem}[section]
\newtheorem{lemma}[theorem]{Lemma}
\newtheorem{proposition}[theorem]{Proposition}
\newtheorem{corollary}[theorem]{Corollary}
\newtheorem{remark}[theorem]{Remark}
\newtheorem{example}[theorem]{Example}
\theoremstyle{definition}
\newtheorem{definition}[theorem]{Definition}
\pgfplotsset{compat=1.16}
\pgfplotsset{
  spectrumplotstyle/.style={
    grid=both, 
    axis lines=middle,
    xmin=0,
    xmax=1.1,
    xtick={1},
    extra x ticks={0},
    xlabel=$\alpha$,
    xlabel style={anchor=north},
    ylabel style={anchor=south east},
    width=0.6\textwidth,
    height=0.45\textwidth,
  },
  rateplotstyle/.style={
    grid=both, 
    axis lines=middle,
    xmin=0,
    xmax=0.7,
    extra x ticks={0},
    xlabel=$r$,
    xlabel style={anchor=north},
    ylabel=$R$,
    ylabel style={anchor=south east},
    width=0.6\textwidth, 
    height=0.45\textwidth, 
  },
  plotline/.style={
    black,
    semithick,
    mark=none,
  },
  plotlinewithmarks/.style={
    black,
    semithick,
    mark=*,
    mark size=1 pt,
  },
  exampleline/.style={
    black,
    semithick,
    dashed,
    mark=none,
  },
  partofspectrumregion/.style={
    pattern=north west lines,
    pattern color=black!50!white
  },
  notpartofspectrumregion/.style={
    black!50!white,
    fill,
    opacity=0.5
  },
  achievableregion/.style={
    black!50!white,
    fill,
    opacity=0.5
  },
  notachievableregion/.style={
    pattern=north west lines,
    pattern color=black!50!white
  },
}
\tikzset{
  examplepoint/.style={
    black,
    fill,
    circle,
    scale=0.3,
  },
}
\newcommand{\setbuild}[2]{\left\{#1\middle|#2\right\}}
\newcommand{\ed}{\mathop{}\!\mathrm{d}}
\newcommand{\norm}[2][]{\left\|#2\right\|_{#1}}
\newcommand{\ket}[1]{\left|#1\right\rangle}
\newcommand{\bra}[1]{\left\langle #1\right|}
\newcommand{\ketbra}[2]{\left|#1\middle\rangle\!\middle\langle#2\right|}
\newcommand{\braket}[2]{\left\langle#1\middle|#2\right\rangle}
\newcommand{\vectorstate}[1]{\ketbra{#1}{#1}}
\newcommand{\EPR}{\textnormal{EPR}}
\newcommand{\GHZ}{\textnormal{GHZ}}
\newcommand{\W}{\textnormal{W}}
\newcommand{\ball}[2]{B_{#1}(#2)}
\newcommand{\closedball}[2]{\overline{B}_{#1}(#2)}
\DeclareMathOperator{\boundeds}{\mathcal{B}}
\newcommand{\entropy}{H}
\newcommand{\relativeentropy}[3][]{\mathop{D_{#1}}\mathopen{}\left(#2\middle\|#3\right)\mathclose{}}
\newcommand{\floor}[1]{\lfloor #1\rfloor}
\newcommand{\distributions}[1][]{\mathcal{P}_{#1}}
\newcommand{\reals}{\mathbb{R}}
\newcommand{\complexes}{\mathbb{C}}
\newcommand{\naturals}{\mathbb{N}}
\newcommand{\integers}{\mathbb{Z}}
\newcommand{\nonnegativereals}{\mathbb{R}_{\ge 0}}
\newcommand{\positivereals}{\mathbb{R}_{>0}}
\newcommand{\unitcircle}{\mathbb{T}}
\newcommand{\positiveintegers}{\mathbb{N}_{>0}}
\newcommand{\distance}{d}
\let\Re\undefined
\let\Im\undefined
\DeclareMathOperator{\Re}{Re}
\DeclareMathOperator{\Im}{Im}
\DeclareMathOperator{\support}{supp}
\DeclareMathOperator{\supp}{supp}
\DeclareMathOperator{\Hom}{Hom}
\title{Error exponents for entanglement transformations from degenerations}
\author[1]{D\'avid Bug\'ar}
\author[1]{P\'eter Vrana}
\affil[1]{Department of Algebra and Geometry, Institute of Mathematics, Budapest University of Technology and Economics, M\H uegyetem~rkp. 3., H-1111 Budapest, Hungary.}
\date{}
\begin{document}

\maketitle

\begin{abstract}
This paper explores the trade-off relation between the rate and the strong converse exponent for asymptotic LOCC transformations between pure multipartite states. Any single-copy probabilistic transformation between a pair of states implies that an asymptotic transformation at rate $1$ is possible with an exponentially decreasing success probability. However, it is possible that an asymptotic transformation is feasible with nonzero probability, but there is no transformation between any finite number of copies with the same rate, even probabilistically. In such cases it is not known if the optimal success probability decreases exponentially or faster. A fundamental tool for showing the feasibility of an asymptotic transformation is degeneration. Any degeneration gives rise to a sequence of stochastic LOCC transformations from copies of the initial state plus a sublinear number of GHZ states to the same number of copies of the target state. These protocols involve parameters that can be freely chosen, but the choice affects the success probability. In this paper, we characterize an asymptotically optimal choice of the parameters and derive a single-letter expression for the error exponent of the resulting protocol. In particular, this implies an exponential lower bound on the success probability when the stochastic transformation arises from a degeneration.
\end{abstract}

\section{Introduction}

Local operations and classical communication (LOCC) involve multiple parties who manipulate their respective subsystems using local quantum operations and coordinate these actions through classical communication channels \cite{bennett1996mixed}. This operational framework is central to understanding, characterizing, and quantifying entanglement. Entanglement distillation \cite{bennett1996concentrating}, entanglement swapping, and quantum teleportation \cite{bennett1993teleporting} are prime examples of protocols that rely on LOCC operations and utilize entanglement. 

In an asymptotic setting with given one copy states $\rho$ and $\sigma$ we aim to study transformations $\rho^{\otimes n}\to\sigma^{\otimes m}$ while $n$ and $m$ grow. The main object of interest here is the transformation rate $R=\frac{m}{n}$, which encapsulates the efficiency of the transformation. It reflects the underlying entanglement structure and the operational feasibility of converting one state into another within the LOCC framework. Given the difficulty of finding the best achievable transformation rates, this problem has several relaxations. Instead of exact transformations one can allow the success probability of the transformation to be less than one, or allow the output state to differ from the target. When relaxing on the probability one can require exponentially decaying failure probability (direct regime), exponentially decaying success probability (converse regime) or even arbitrary nonzero probability (asymptotic SLOCC paradigm). For the special case of bipartite entanglement concentration, i.e., transforming many copies of an arbitrary bipartite pure state into $\EPR$ pairs, the optimal rate is known in the full range between asymptotic SLOCC and deterministic transformations \cite{bennett1996concentrating,morikoshi2001deterministic,hayashi2002error}.

In the SLOCC paradigm we ask if the pure state $\psi$ can be transformed into the pure state $\varphi$ with arbitrarily small, but nonzero probability. It was found in \cite{bennett2000exact,dur2000three} that this is equivalent to asking if there exist linear maps $A_j$ such that $(A_1\otimes\dots\otimes A_k)\psi=\varphi$. In that case, we say that the state $\psi$ can be transformed by SLOCC into $\varphi$ or, in the language of tensors, the tensor $\psi$ restricts to $\varphi$. 
In \cite{chitambar2008tripartite} Chitambar, Duan, and Shi noted that the asymptotic restrictions over complex numbers can be seen as asymptotic transformations between pure multipartite states with SLOCC. This is defined as follows: we say that $\psi$ asymptotically restricts to $\varphi$ if there exists a sequence of natural numbers $(r_n)_{n\in\naturals}$ such that $\sqrt[n]{r_n}\to 1$ and $\GHZ_{r_n} \otimes \psi^{\otimes n}$ restricts to $\varphi^{\otimes n}$, where $\GHZ_{r_n}$ denotes the $r_n$-level $\GHZ$ state. Asymptotic entanglement transformations from this viewpoint have been studied in several papers \cite{yu2010tensor,chen2010tensor,yu2014obtaining,vrana2015asymptotic,vrana2017entanglement,christandl2016asymptotic,christandl2023universal,gharahi2024persistent}.

It should be noted that asymptotic SLOCC transformations are a rather weak notion of entanglement transformations, in principle allowing the success probability to approach $0$ arbitrarily fast. As a refinement of the previous setting, in the converse regime we allow transformations $\psi^{\otimes n}\to\varphi^{\otimes Rn+o(n)}$ with exponentially decaying probability $2^{-rn+o(n)}$ for a specified \emph{converse error exponent} $r$. In the bipartite case, the precise trade-off relation between the rate $R$ and the error exponent $r$ is known \cite{hayashi2002error,jensen2019asymptotic}. For more than two parties, the problem of explicitly describing the achievable pairs $(R,r)$ is wide open (see \cite{jensen2019asymptotic,vrana2023family,bugar2022interpolating,bugar2024explicit} for partial results). By finding specific protocols, one can show the achievability of certain pairs $(R,r)$ for given pure states $\psi$ and $\varphi$. In the simplest case, suppose that a single-copy SLOCC transformation exists from $\psi$ to $\varphi$. If the transformation is successful with probability $p$ then, by running the protocol independently on many copies, we obtain an asymptotic transformation with rate $1$ and error exponent $-\log p$. More generally, by accepting an outcome with at least $Rn$ successful runs for some $R\in(0,1)$, it follows from a standard tail bound for the binomial distribution (see, e.g., \cite[Problem 2.8]{csiszar2011information}) that the error exponent $d(R\|p)$ is achievable, where $d(q\|p)=q\log\frac{q}{p}+(1-q)\log\frac{1-q}{1-p}$ is the Kullback--Leibler divergence between two Bernoulli distributions.

Interestingly, when the number of parties is at least $3$, there exist states $\psi$ and $\varphi$ such that $\psi$ cannot be transformed into $\varphi$ by SLOCC, but $\varphi$ can still be approximated arbitrarily closely with states in the SLOCC orbit of $\psi$. This was recognized in the quantum information literature in \cite{walther2005local} for transformations of $\GHZ$ states into approximate $\W$ states, and previously in the context of algebraic computations \cite{bini1979complexity,bini1980relations,schonhage1981partial}, where it is known as a tensor degeneration.

The significance of degenerations is that, like single- or multicopy SLOCC transformations, they imply asymptotic SLOCC transformations in the sense that if $\psi$ degenerates to $\varphi$, then $\psi^{\otimes n}$ together with a sublinear number of $\GHZ$ states can be transformed into $\varphi^{\otimes n}$ by SLOCC \cite{bini1980relations}. However, it is not immediately clear if the transformation is possible with a finite error exponent. In this paper, we show that degeneration does imply a finite strong converse exponent, and find a single-letter upper bound on the exponent in terms of the data specifying the degeneration.

Our starting point is the algebraic definition of degeneration (see \cite{alder1984grenzrang,strassen1987relative} or \cite[(20.24) Theorem]{burgisser2013algebraic} for the equivalence with the aforementioned topological one): we say that $\psi$ degenerates to $\varphi$ if there exists local linear map-valued complex Laurent polynomials $A_j(z)$ such that $(A_1(z)\otimes\dots\otimes A_k(z))\psi=\varphi +O(z)$ (see \cref{sec:DegenLOCC} for more details). Our main result is the following:
\begin{theorem}\label{thm:main}
Let $\psi$ and $\varphi$ be $k$-partite state vectors (normalized as $\norm{\psi}=\norm{\varphi}=1$), and let $(A_1(z)\otimes\dots\otimes A_k(z))\psi=\varphi +O(z)$ be a degeneration, where the right hand side is a polynomial of degree $e$. Suppose that $A_1(z)\otimes\dots\otimes A_k(z)$ is nowhere zero and contains terms with positive as well as negative powers of $z$. Then $\psi$ can be asymptotically transformed into $\varphi$ with rate $R=1$ and strong converse exponent at most
\begin{equation}
2 \inf_{\sigma\in\distributions(\complexes)}\sup_{z\in \supp \sigma} \left[
        \log \norm{A_1(z)\otimes\dots\otimes A_k(z)}+
        e\int_\complexes
        \log\frac{\lvert t\rvert}{\lvert z-t\rvert}\ed\sigma(t)\right],
\end{equation}
where $\distributions(\complexes)$ denotes the set of probability distributions on the complex plane.
\end{theorem}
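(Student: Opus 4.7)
For each $n$ my plan is to construct an explicit single-branch LOCC protocol converting $\GHZ_{en+1}\otimes\psi^{\otimes n}$ into $\varphi^{\otimes n}$. Because $\log(en+1)=o(n)$ the GHZ resource is sublinear, so the rate is $R=1$, and a lower bound on the success probability $p$ of the branch supplies an upper bound on the strong converse exponent. The construction is the operator-valued analogue of Bini's polynomial interpolation trick from the theory of approximate tensor rank.

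The degeneration hypothesis $A(z)\psi=\sum_{j=0}^{e}z^{j}B_{j}$ with $B_{0}=\varphi$ makes $A(z)^{\otimes n}\psi^{\otimes n}$ a polynomial in $z$ of degree at most $en$ with constant coefficient $\varphi^{\otimes n}$; hence for any $N=en+1$ distinct nonzero points $z_{1},\dots,z_{N}$, Lagrange interpolation at the origin gives
\begin{equation*}
\varphi^{\otimes n}=\sum_{i=1}^{N} c_{i}\,A(z_{i})^{\otimes n}\psi^{\otimes n},\qquad c_{i}=\prod_{j\ne i}\frac{-z_{j}}{z_{i}-z_{j}}.
\end{equation*}
I realise this identity by LOCC using a $k$-partite $\GHZ_{N}$ as a shared coordinate: choose $\gamma_{i,\ell}\in\complexes$ with $\prod_{\ell}\gamma_{i,\ell}=c_{i}$ and set $\tilde A_{\ell}(\ket{i}\otimes\xi)=\gamma_{i,\ell}A_{\ell}(z_{i})^{\otimes n}\xi$; then $\bigotimes_{\ell}\tilde A_{\ell}(\GHZ_{N}\otimes\psi^{\otimes n})=\tfrac{1}{\sqrt{N}}\varphi^{\otimes n}$, and implementing each $\tilde A_{\ell}/\|\tilde A_{\ell}\|$ as a Kraus element of a local POVM realises an LOCC branch with success probability at least $(N\prod_{\ell}\|\tilde A_{\ell}\|^{2})^{-1}$. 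Equalising the parties by $|\gamma_{i,\ell}|\,\|A_{\ell}(z_{i})\|^{n}=(|c_{i}|\,\|A(z_{i})\|^{n})^{1/k}$, which is available because $A_{1}(z_{i})\otimes\cdots\otimes A_{k}(z_{i})\ne 0$ by hypothesis, and applying a Cauchy--Schwarz bound on $\|\tilde A_{\ell}\|^{2}$ yields $\prod_{\ell}\|\tilde A_{\ell}\|^{2}\le N^{k}\bigl(\max_{i}|c_{i}|\,\|A(z_{i})\|^{n}\bigr)^{2}$, so
\begin{equation*}
-\tfrac{1}{n}\log p\;\le\;\tfrac{(k+1)\log N}{n}+\tfrac{2}{n}\max_{i}\bigl[\log|c_{i}|+n\log\|A(z_{i})\|\bigr].
\end{equation*}

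Since $\log|c_{i}|=\sum_{j\ne i}\log\tfrac{|z_{j}|}{|z_{i}-z_{j}|}$ is $en$ times an empirical average of $\log\tfrac{|t|}{|z_{i}-t|}$, choosing the $N$ points so that their empirical distribution approximates $\sigma\in\distributions(\complexes)$ produces $\tfrac{1}{n}\log|c_{i}|\to e\!\int_{\complexes}\log\tfrac{|t|}{|z-t|}\,\ed\sigma(t)$ for $z_{i}\to z\in\supp\sigma$, while $\log\|A(z_{i})\|\to\log\|A(z)\|$ by continuity; passing the maximum to a supremum over $\supp\sigma$ and infimising over $\sigma$ delivers the stated bound, the factor $2$ coming from the squared-norm passage from amplitudes to probabilities. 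The main obstacle is making this last step quantitative: the kernel $\log\tfrac{1}{|z-t|}$ is singular on the diagonal, so convergence of $\tfrac{1}{n}\log|c_{i}|$ to the potential-theoretic integral is not a direct consequence of the strong law of large numbers and must be controlled uniformly over the $en+1$ labels. For i.i.d.\ samples from $\sigma$ the minimum pairwise separation is of order $N^{-1/2}$, producing only an $O(\log n)=o(n)$ diagonal correction per label, but taming this across the maximum requires either a concentration-plus-union-bound argument or a derandomised Fekete-type quadrature; in parallel one must approximate a general $\sigma$ by measures whose support avoids the singular locus of $\|A(\cdot)\|$ (which is nonempty because $A$ is a Laurent polynomial with negative powers of $z$, making $\|A(z)\|$ blow up at $0$), using convexity and lower semicontinuity of the potential-theoretic functional to preserve the infimum in the limit.
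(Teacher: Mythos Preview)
Your proposal is correct and follows essentially the same route as the paper: the Lagrange-interpolation LOCC protocol you describe coincides with the one in Section~\ref{sec:DegenLOCC}, your probability bound matches Corollary~\ref{cor:finalcountable} up to polynomial factors, and the discrete-to-integral passage you flag as the main obstacle is exactly what Section~\ref{sec:limitofoptprob} carries out via deterministic coarse-graining and lattice discretization (your ``derandomised Fekete-type quadrature''), together with Proposition~\ref{prop:rFinalForm} to pass from compactly supported $\sigma$ to general $\sigma\in\distributions(\complexes)$. The paper additionally proves the reverse inequality (Proposition~\ref{prop:deltaupperbound}) to show that the integral formula is not just an upper bound but the exact optimum for this family of protocols, but that direction is not needed for Theorem~\ref{thm:main} as stated.
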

We note that the conditions on $A_1(z)\otimes\dots\otimes A_k(z)$ are not too restrictive, as any degeneration that does not satisfy these can either be turned into a restriction or a degeneration satisfying the conditions (\cref{rem:degenInterestingcases}).

The outline of the paper is the following. In \cref{sec:DegenLOCC} we present a family of LOCC protocols based on the standard proof showing that degenerations can be turned into SLOCC transformations. The family depends on the choice of a set of complex numbers for each number of copies, which one needs to optimize in order to obtain the best possible error exponent. In \cref{sec:limitofoptprob} we give an expression for the optimal error exponent, which depends only on the norm of the linear map valued Laurent polynomials present in the degeneration, and the approximation degree $e$. This corresponds to showing achievable error exponents for the transformation rate $R=1$. In \cref{sec:bounds} we derive upper and lower bounds on the optimized expression, and discuss in detail the special case when $\norm{A_1(z)\otimes\dots\otimes A_k(z)}$ is centrally symmetric. In \cref{sec:errorexptrfratelessone} we present a generalization of the protocol that allows us to trade the rate for the success probability, leading to smaller achievable strong converse exponents for $R<1$.

\section{LOCC transformations from degeneration}\label{sec:DegenLOCC}

We recall the algebraic definition of a degeneration (see, e.g., \cite[(15.19) Definition]{burgisser2013algebraic}).
\begin{definition}
    Let $\psi\in\mathcal{H}_1\otimes\dots\otimes\mathcal{H}_k$ and $\varphi\in\mathcal{K}_1\otimes\dots\otimes\mathcal{K}_k$ be unit vectors and $A_j\in\complexes[z,z^{-1}]\otimes\Hom(\mathcal{H}_j,\mathcal{K}_j)$ be linear map-valued Laurent polynomials ($j=1,\dots,k$) such that
\begin{equation}\label{eq:degenpolynomial}
(A_1(z)\otimes\dots\otimes A_k(z))\psi=\sum_{h=0}^e z^h\varphi_h=\varphi+O(z),
\end{equation}
where the right hand side is a polynomial in $z$ with constant term $\varphi_0=\varphi$.
We call this a \emph{degeneration} between $\psi$ and $\varphi$.
The degree $e$ of the polynomial is called the error degree. 
\end{definition}

\begin{remark}\label{rem:degenInterestingcases}
The only interesting case of degeneration is when $A_1(z)\otimes\dots\otimes A_k(z)$ contains both positive and negative powers of $z$. If it contains no negative powers, this relation reduces to a \emph{restriction}. On the other hand, if there are no positive powers in the Laurent polynomial, then the negative powers only contribute to the negative powers on the right hand side (which vanishes by assumption), therefore these can be omitted as well, which means that the states are also related by a restriction.
 
Sometimes we also need the assumption that $A_1(z)\otimes\dots\otimes A_k(z)$ is nowhere zero. Even if we start with a degeneration for which this does not hold, we can derive another degeneration with this property. If for some $z_0$ we have $A_1(z_0)\otimes\dots\otimes A_k(z_0)=0$, then for some $j$ every matrix element of $A_j(z)$ is divisible by $z-z_0$, so dividing it by $z-z_0$ leads to a degeneration with smaller-degree polynomials. Repeating this procedure for all zeros leads to a degeneration satisfying the assumption.
\end{remark}

By taking the $n$-th tensor power of \eqref{eq:degenpolynomial} we get
\begin{equation}
(A_1(z)\otimes\dots\otimes A_k(z))^{\otimes n}\psi^{\otimes n}=\varphi^{\otimes n}+O(z),    
\end{equation}
where the error degree, i.e., the the degree of the polynomial on the right hand side is $ne$.

Our aim is to turn this into a probabilistic LOCC transformation from a (low-rank, weighted) GHZ state times $\psi$ to $\varphi$ with a large success probability.

To this end, let $t\in\naturals$ such that $t\ge ne+1$, and consider complex numbers $z_i$, $c_i$ and $\gamma_{j,i}\neq 0$ with $j=1,\dots,k$ and $i=1,\dots,t$ subject to
\begin{equation}\label{eq:czconditions}
\underbrace{\begin{bmatrix}
1 & 1 & \dots & 1  \\
z_1 & z_2 & \dots & z_t  \\
\vdots & \vdots & \ddots & \vdots  \\
z_1^e & z_2^e & \dots & z_t^e
\end{bmatrix}}_Z
\cdot\underbrace{\begin{bmatrix}
c_1  \\
c_2  \\
\vdots  \\
c_t
\end{bmatrix}}_c
=\underbrace{\begin{bmatrix}
1  \\
0  \\
\vdots  \\
0
\end{bmatrix}}_{e_1}
\end{equation}
and
\begin{equation}\label{eq:gammasumbound}
\forall j:\norm{\sum_{i=1}^t\lvert\gamma_{j,i}\rvert^2 A(z_i)^{\otimes n}A^*(z_i)^{\otimes n}}\le 1.
\end{equation}
Then the maps
\begin{equation}
\tilde{A}_j:=\sum_{i=1}^t\gamma_{j,i}\bra{i}\otimes A_j(z_i)^{\otimes n}\in\Hom(\complexes^t\otimes\mathcal{H}_j,\mathcal{K}_j)
\end{equation}
are contractions and with $g_i=\prod_{j=1}^k\gamma_{j,i}$ we have
\begin{equation}\label{eq:procedureTophi}
\begin{split}
(\tilde{A}_1\otimes\dots\otimes\tilde{A}_k)\left(\sum_{i=1}^t \frac{c_i}{g_i}\ket{i}\otimes\psi^{\otimes n}\right)
 & = \sum_{i=1}^t \frac{c_i}{g_i}((\gamma_{1,i}A_1(z_i)^{\otimes n})\otimes\dots\otimes (\gamma_{k,i}A_k(z_i)^{\otimes n}))\psi^{\otimes n}  \\
 & = \sum_{i=1}^t c_i A_1(z_i)^{\otimes n}\otimes\dots\otimes A_k(z_i)^{\otimes n}\psi^{\otimes n}  \\
 & = \sum_{i=1}^t c_i \sum_{h=0}^{ne} z_i^h\varphi_h^{\otimes n}  \\
 & = \sum_{h=0}^{ne}\left(\sum_{i=1}^t z_i^hc_i\right)\varphi_h^{\otimes n}  \\
 & = \varphi^{\otimes n},
\end{split}
\end{equation}
which implies that a rank-$t$ uniform $\GHZ$ state and $\psi^{\otimes n}$ can be transformed into $\varphi^{\otimes n}$ with probability at least
\begin{equation}\label{eq:problowerbound}
p\left(\GHZ_t\otimes\psi^{\otimes n}\to\varphi^{\otimes n}\right) \ge \left(\sum_{i=1}^t \frac{\lvert c_i\rvert^2}{\lvert g_i\rvert^2}\right)^{-1},
\end{equation}
where the right hand side implicitly depends on $z_1,\dots,z_t$ and $\gamma_{1,1},\dots\gamma_{k,t}$ through the conditions \eqref{eq:czconditions} and \eqref{eq:gammasumbound}.

\begin{example}\label{ex:rootsofunity}
For $t\ge e+1$ we can choose $z_i=\omega^{i-1}$ where $\omega$ is a primitive $t$-th root of unity, $c_1=c_2=\dots=c_t=\frac{1}{t}$, $\gamma_{j,i}=\frac{1}{\sqrt{t}\norm{A(z_i)}^n}$. This gives the lower bound
\begin{equation}
p\left(\GHZ_t\otimes\psi^{\otimes n}\to\varphi^{\otimes n}\right)\ge t^{1-k}\frac{1}{\max_i\prod_{j=1}^k\norm{A_j(z_i)}^{2n}}\ge t^{1-k}\frac{1}{\max_{z\in\unitcircle}\prod_{j=1}^k\norm{A_j(z)}^{2n}},
\end{equation}
where $\unitcircle=\setbuild{z\in\complexes}{\lvert z\rvert=1}$.
\end{example}

\subsection{Optimizing the coefficients}

The protocol described above depends on several choices in addition to the degeneration itself: the number and the position of the interpolation points $z_1,\dots,z_t$, the coefficients $c_1,\dots,c_t$ of the (unnormalized GHZ state), and the factors $\gamma_{j,i}$, subject to \eqref{eq:czconditions} and \eqref{eq:gammasumbound}. Our goal is to maximize the resulting lower bound \eqref{eq:problowerbound} on the probability as a function of the number $n$ of copies, up to subexponential factors. In this section, considering fixed $z_1,\dots,z_t$, we first find exactly the optimal coefficients $c_1,\dots,c_{t}$, then an approximately optimal choice of $\gamma_{j,i}$, and use these to show that we may assume $t=ne+1$ (i.e., the minimal value), finally arriving at an expression that has the same exponential behaviour as the optimal probability, and that depends only on the $ne+1$ points $z_0,\dots,z_{ne}\in\complexes\setminus\{0\}$. For technical reasons it will be convenient to restrict the optimization to a compact subset $K\subseteq\complexes\setminus\{0\}$ (independent of $n$). We show that this can be done with a small loss (that vanishes as $K\to\complexes$) in the error exponent. In the following we abbreviate the transformation probability by $p$.

\begin{proposition}\label{prop:optbyc}
    Let us fix $t$, $z_1,\dots,z_t$ distinct and $\gamma_{1,1},\ldots,\gamma_{k,t}$. Then by optimizing the right hand side of \eqref{eq:problowerbound}
over the $c_i$ factors we get
\begin{equation}\label{eq:pAz}
p\ge
\bra{e_1}
\left(\sum_{i=1}^t\lvert g_i\rvert^2\begin{bmatrix}
1  \\
z_i  \\
\vdots  \\
z_i^{ne}
\end{bmatrix}\cdot\begin{bmatrix}
1 & \overline{z_i} & \cdots & \overline{z_i}^{ne}
\end{bmatrix}
\right)^{-1}
\ket{e_1}
\end{equation}
which is the upper left corner of the inverse of an $(ne+1)\times(ne+1)$ matrix, which is the corresponding minor divided by the determinant.

\end{proposition}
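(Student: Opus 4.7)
The idea is to recognize the optimization over $c$ as a convex quadratic program and solve it in closed form. Since \eqref{eq:problowerbound} bounds $p$ from below by $\left(\sum_{i=1}^{t}|c_i|^2/|g_i|^2\right)^{-1}$ for each admissible $c$, the best bound obtainable by varying $c$ is the reciprocal of the minimum of
\begin{equation*}
  S(c):=\sum_{i=1}^{t}\frac{|c_i|^2}{|g_i|^2}
\end{equation*}
subject to the affine constraint $Zc=e_1$, where $Z$ is the $(ne+1)\times t$ matrix with entries $Z_{h,i}=z_i^h$ (the $n$-copy analogue of the matrix in \eqref{eq:czconditions}). Under the hypotheses, $Z$ has full row rank (distinct $z_i$ and $t\ge ne+1$) and all weights $1/|g_i|^2$ are finite and positive (since every $\gamma_{j,i}\ne 0$, hence $g_i\ne 0$), so $S$ is strictly convex on the feasible affine subspace and has a unique minimizer.

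Introducing a complex Lagrange multiplier $\lambda\in\complexes^{ne+1}$ for the constraint and differentiating by the Wirtinger rules, the stationarity condition gives $c_i=|g_i|^2 (Z^{*}\lambda)_i$; inserting into $Zc=e_1$ one obtains the normal equation $M\lambda=e_1$ with
\begin{equation*}
  M \;:=\; Z\,\operatorname{diag}(|g_i|^2)\,Z^{*} \;=\; \sum_{i=1}^{t}|g_i|^2\, v_i v_i^{*}, \qquad v_i=(1,z_i,\dots,z_i^{ne})^T.
\end{equation*}
Substituting back yields $S_{\min}=\lambda^{*}e_1=\bra{e_1}M^{-1}\ket{e_1}$, the upper-left entry of $M^{-1}$, which by Cramer's rule equals the principal minor of $M$ (delete the first row and column) divided by $\det M$. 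Combined with the reciprocal in \eqref{eq:problowerbound}, this produces the claimed improved lower bound on $p$.

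The only real bookkeeping is the identification $Z\,\operatorname{diag}(|g_i|^2)\,Z^{*}=\sum_i |g_i|^2\, v_i v_i^{*}$ and the invertibility of $M$, both immediate from the Vandermonde structure together with $|g_i|^2>0$. Apart from this, the proof is a textbook weighted least-norm / Lagrange-multiplier calculation, so I do not anticipate any genuine obstacle.
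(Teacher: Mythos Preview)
Your proposal is correct and follows essentially the same route as the paper: both recognize the optimization as minimizing the weighted quadratic form $c^{*}G^{-1}c$ (with $G=\operatorname{diag}(|g_i|^{2})$) subject to $Zc=e_1$, and both arrive at $S_{\min}=\bra{e_1}(ZGZ^{*})^{-1}\ket{e_1}$. The only cosmetic difference is that the paper performs the substitution $x=G^{-1/2}c$ and invokes the Moore--Penrose inverse of $G^{1/2}Z^{*}ZG^{1/2}$ via a full-rank factorization identity, whereas you go directly through Lagrange multipliers; your route is arguably shorter and avoids the pseudoinverse machinery, but the two computations are equivalent line-for-line once one unwinds the substitution.
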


\begin{proof}
First we introduce the diagonal matrix $G$ with diagonal entries $\lvert g_i\rvert^{2}$. The task is to minimize $\langle c,G^{-1}c\rangle$ subject to the condition \eqref{eq:czconditions} written as $Zc=e_1$. Let $x=\sqrt{G^{-1}}c$ so that the objective function is $\langle c,G^{-1}c\rangle=\norm{x}^2$ and the condition is $ZG^{1/2}x=e_1$, which is equivalent to
\begin{equation}
G^{1/2}Z^*ZG^{1/2}x=G^{1/2}Z^*e_1
\end{equation}
since $G^{1/2}Z^*$ is injective. If $x=x_\perp+x_\parallel$ with $x_\parallel\in\supp G^{1/2}Z^*ZG^{1/2}$ and $x_\perp\in\ker G^{1/2}Z^*ZG^{1/2}$, then $G^{1/2}Z^*ZG^{1/2}x=G^{1/2}Z^*ZG^{1/2}x_\parallel$ and $\norm{x}^2=\norm{x_\perp}^2+\norm{x_\parallel}^2\ge\norm{x_\parallel}^2$, therefore the optimal $x$ satisfies $x\in\supp G^{1/2}Z^*ZG^{1/2}$. On this subspace $G^{1/2}Z^*ZG^{1/2}$ is injective, therefore
\begin{equation}
x=(G^{1/2}Z^*ZG^{1/2})^+G^{1/2}Z^*e_1,
\end{equation}
where $(\cdot)^+$ is the Moore--Penrose inverse.

Since $G^{1/2}Z^*\in\complexes^{t\times(ne+1)}$ and $ZG^{1/2}\in\complexes^{(ne+1)\times t}$ both have rank $ne+1$, i.e., $(G^{1/2}Z^*)\cdot(ZG^{1/2})$ is a full-rank factorization and we can use \cite[Fact 6.4.9.]{bernstein2009matrix}\footnote{If $A\in\complexes^{n\times r}$ and $B\in\complexes^{r\times m}$ both have rank $r$, then $(AB)^+=B^*(BB^*)^{-1}(A^*A)^{-1}A^*$} to find the inverse:
\begin{equation}
(G^{1/2}Z^*ZG^{1/2})^+=G^{1/2}Z^*(ZGZ^*)^{-2}ZG^{1/2},
\end{equation}
therefore
\begin{equation}
\begin{split}
x
 & = G^{1/2}Z^*(ZGZ^*)^{-2}ZG^{1/2}G^{1/2}Z^*e_1  \\
 & = G^{1/2}Z^*(ZGZ^*)^{-1}e_1.
\end{split}
\end{equation}
This gives for the optimal $c$
\begin{equation}\label{eq:optcGc}
\begin{split}
\langle c,Gc\rangle
 & = \norm{x}^2  \\
 & = \langle G^{1/2}Z^*(ZGZ^*)^{-1}e_1,G^{1/2}Z^*(ZGZ^*)^{-1}e_1\rangle  \\
 & = \langle e_1,(ZGZ^*)^{-1}e_1\rangle.
\end{split}
\end{equation}
Note that this is the upper left corner of the inverse of an $(ne+1)\times(ne+1)$ matrix, which is the corresponding minor divided by the determinant. That matrix can be written as
\begin{equation}\label{eq:ZGZ}
ZGZ^*=\sum_{i=1}^t\lvert g_i\rvert^2\begin{bmatrix}
1  \\
z_i  \\
\vdots  \\
z_i^{ne}
\end{bmatrix}\cdot\begin{bmatrix}
1 & \overline{z_i} & \cdots & \overline{z_i}^{ne}
\end{bmatrix}
\end{equation}
\end{proof}

\begin{lemma}\label{lem:optzgz}
The optimal $G$ in \eqref{eq:ZGZ} (matching up to a subexponential factor) can be written
    \begin{equation}\label{eq:optzgz}
ZGZ^*
 = \sum_{i=1}^t t^{-k}\frac{1}{\norm{A(z_i)}^{2n}}\begin{bmatrix}
1  \\
z_i  \\
\vdots  \\
z_i^{ne}
\end{bmatrix}\cdot\begin{bmatrix}
1 & \overline{z_i} & \cdots & \overline{z_i}^{ne}.
\end{bmatrix}
\end{equation}
\end{lemma}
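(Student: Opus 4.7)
The plan is to prove the lemma in two directions: first exhibit an explicit feasible choice of $\gamma_{j,i}$ realizing the $G$ appearing in \eqref{eq:optzgz}, and second show that any other admissible choice of $\gamma_{j,i}$ can improve the objective $\bra{e_1}(ZGZ^*)^{-1}\ket{e_1}$ by at most a factor subexponential in $n$.

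For feasibility, I would set $|\gamma_{j,i}|^2 = \frac{1}{t\norm{A_j(z_i)}^{2n}}$ for all $j$ and $i$. Since $A_j(z_i)^{\otimes n}A_j^*(z_i)^{\otimes n} \le \norm{A_j(z_i)}^{2n}I$ in the operator order, summing over $i$ yields $\sum_{i=1}^t|\gamma_{j,i}|^2A_j(z_i)^{\otimes n}A_j^*(z_i)^{\otimes n} \le I$, verifying \eqref{eq:gammasumbound}. By submultiplicativity of the operator norm on tensor products, $\norm{A(z_i)}=\prod_j\norm{A_j(z_i)}$, so $|g_i|^2=\prod_j|\gamma_{j,i}|^2 = t^{-k}/\norm{A(z_i)}^{2n}$, which are exactly the diagonal entries of $G$ that produce the right hand side of \eqref{eq:optzgz}.

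For near-optimality, I would first extract the pointwise bound $|\gamma_{j,i}|^2 \le 1/\norm{A_j(z_i)}^{2n}$ from \eqref{eq:gammasumbound}. Testing the operator inequality against a unit vector $v$ achieving $\norm{A_j^*(z_i)^{\otimes n}v}=\norm{A_j(z_i)}^n$, the $i$-th term of the sum contributes $|\gamma_{j,i}|^2\norm{A_j(z_i)}^{2n}$ while the remaining terms are nonnegative, forcing the claimed bound. Multiplying over $j$ gives $|g_i|^2 \le 1/\norm{A(z_i)}^{2n}$, so denoting by $G^{\text{naive}}$ the diagonal matrix with entries $t^{-k}/\norm{A(z_i)}^{2n}$ used in \eqref{eq:optzgz}, any admissible $G$ satisfies $G \le t^k G^{\text{naive}}$ and therefore $ZGZ^* \le t^k ZG^{\text{naive}}Z^*$ in the operator order. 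Anti-monotonicity of matrix inversion on positive definite matrices then yields $\bra{e_1}(ZGZ^*)^{-1}\ket{e_1} \ge t^{-k}\bra{e_1}(ZG^{\text{naive}}Z^*)^{-1}\ket{e_1}$. Since $t$ will be taken polynomial in $n$ (eventually $t=ne+1$), the loss factor $t^k$ is subexponential in $n$.

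The main subtlety is ensuring $ZGZ^*$ is positive definite so that the inverse and its anti-monotonicity are meaningful; this follows from the Vandermonde structure of $Z$, which has full row rank $ne+1$ as soon as the $z_i$ are distinct and $t\ge ne+1$, and then $G^{\text{naive}}$ is strictly positive on the diagonal, so $ZG^{\text{naive}}Z^*\succ 0$. All other steps are routine manipulations with operator inequalities and tensor-product operator norms.
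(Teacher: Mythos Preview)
Your proposal is correct and follows essentially the same route as the paper: both establish feasibility of the choice $|\gamma_{j,i}|^2=\frac{1}{t\,\|A_j(z_i)\|^{2n}}$ and derive the pointwise bound $|\gamma_{j,i}|^2\le 1/\|A_j(z_i)\|^{2n}$ from the constraint \eqref{eq:gammasumbound}, concluding that the resulting $|g_i|^2$ are sandwiched between $t^{-k}/\|A(z_i)\|^{2n}$ and $1/\|A(z_i)\|^{2n}$. Your explicit use of operator anti-monotonicity of the inverse to transfer the factor $t^k$ from $G$ to the objective $\bra{e_1}(ZGZ^*)^{-1}\ket{e_1}$ is a clean elaboration that the paper leaves implicit, but the underlying argument is the same.
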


\begin{proof}
If we consider the numbers $z_i$ and $c_i$ fixed and increase $\lvert\gamma_{j,i}\rvert$ then the lower bound increases, therefore the optimal choice saturates \eqref{eq:gammasumbound}. We have the inequalities
\begin{equation*}
\begin{split}
\max_{i}\lvert\gamma_{j,i}\rvert^2\norm{A_j(z_i)}^{2n}
 & \le \norm{\sum_{i=1}^t\lvert\gamma_{j,i}\rvert^2 A_j(z_i)^{\otimes n}A^*_j(z_i)^{\otimes n}}  \\
 & \le \sum_{i=1}^t\norm{\lvert\gamma_{j,i}\rvert^2 A_j(z_i)^{\otimes n}A^*_j(z_i)^{\otimes n}}  \\
 & \le t\max_{i}\lvert\gamma_{j,i}\rvert^2\norm{A_j(z_i)}^{2n},
\end{split}
\end{equation*}
therefore for every $i,j$ we have $\lvert\gamma_{j,i}\rvert^2\norm{A_j(z_i)}^{2n}\le 1$ and for all $j$ there is an $i$ such that
\begin{equation}
\lvert\gamma_{j,i}\rvert^2\norm{A_j(z_i)}^{2n}\ge\frac{1}{t}.
\end{equation}

\begin{equation}
\lvert g_i\rvert^2
 = \prod_{j=1}^k\lvert\gamma_{j,i}\rvert^2
 \le \prod_{j=1}^k\frac{1}{\norm{A_j(z_i)}^{2n}}
 = \frac{1}{\norm{A(z_i)}^{2n}},
\end{equation}
where $A(z)=A_1(z)\otimes\dots\otimes A_k(z)$.

On the other hand, it is possible to choose for all $i,j$: $\lvert\gamma_{j,i}\rvert^2=\frac{1}{t}\frac{1}{\norm{A_j(z_i)}^{2n}}$, this gives
\begin{equation}
\lvert g_i\rvert^2
 = \prod_{j=1}^k\lvert\gamma_{j,i}\rvert^2
 = \prod_{j=1}^k \frac{1}{t}\frac{1}{\norm{A_j(z_i)}^{2n}}
 = t^{-k}\frac{1}{\norm{A(z_i)}^{2n}},
\end{equation}
i.e., we have upper and lower bounds on the optimal $G$, matching up to a subexponential factor. These give
\begin{equation}
\begin{split}
ZGZ^*
 & = \sum_{i=1}^t\lvert g_i\rvert^2\begin{bmatrix}
1  \\
z_i  \\
\vdots  \\
z_i^{ne}
\end{bmatrix}\cdot\begin{bmatrix}
1 & \overline{z_i} & \cdots & \overline{z_i}^{ne}
\end{bmatrix}  \\
 & = \sum_{i=1}^t t^{-k}\frac{1}{\norm{A(z_i)}^{2n}}\begin{bmatrix}
1  \\
z_i  \\
\vdots  \\
z_i^{ne}
\end{bmatrix}\cdot\begin{bmatrix}
1 & \overline{z_i} & \cdots & \overline{z_i}^{ne}
\end{bmatrix}
\end{split}
\end{equation}
\end{proof}

Note that if we keep only $ne+1$ terms in the sum, then the matrix decreases, therefore its inverse increases. In the following we show that by doing so we do not lose anything in terms of the error exponent.

\begin{lemma}\label{lem:optbyt}
Let $d,t\in\naturals$, $1\le d\le t$, $\ket{v_1},\ket{v_2},\dots,\ket{v_t}\in\complexes^d$ vectors in general position, and $\ket{u}\in\complexes^d$. Then there is a subset $T\in\binom{[t]}{d}$ such that
\begin{equation}
\bra{u}\left(\sum_{i\in T}\ketbra{v_i}{v_i}\right)^{-1}\ket{u}\le (t-d+1)\bra{u}\left(\sum_{i=1}^t\ketbra{v_i}{v_i}\right)^{-1}\ket{u}
\end{equation}
\end{lemma}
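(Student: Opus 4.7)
My strategy is to show that the claimed bound holds on average under the \emph{volume sampling} distribution $p(T)=\lvert\det V_T\rvert^2/\det M$ on $\binom{[t]}{d}$, where $V=[\ket{v_1}\mid\cdots\mid\ket{v_t}]$ is the $d\times t$ matrix whose columns are the $\ket{v_i}$, $V_T$ is the $d\times d$ submatrix with columns indexed by $T$, and $M=VV^{*}=\sum_i\ketbra{v_i}{v_i}$. The general-position hypothesis ensures $\det V_T\ne 0$ for every $T\in\binom{[t]}{d}$, so $p$ is well defined, and the Cauchy--Binet formula gives $\sum_T\lvert\det V_T\rvert^2=\det M$, so $p$ is an honest probability distribution with full support on $\binom{[t]}{d}$. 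Once the expectation of $\bra{u}M_T^{-1}\ket{u}$ under $p$ is identified with $(t-d+1)\bra{u}M^{-1}\ket{u}$, any $T$ attaining at most the average yields the lemma.

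The cornerstone identity I plan to establish is that for every $d\times s$ matrix $W$ of full row rank ($s\ge d$),
\[
\bra{u}(WW^{*})^{-1}\ket{u}\cdot\det(WW^{*})=\sum_{S\in\binom{[s]}{d-1}}\lvert\det[W_S\mid u]\rvert^2,
\]
where $[W_S\mid u]$ denotes the $d\times d$ matrix obtained by concatenating the columns of $W$ indexed by $S$ with the vector $u$. I would derive this by computing $\det(WW^{*}+z\ketbra{u}{u})$ in two ways: the matrix-determinant lemma gives $\det(WW^{*})(1+z\bra{u}(WW^{*})^{-1}\ket{u})$, while writing the perturbed Gram matrix as $[W\mid\sqrt{z}\,u][W\mid\sqrt{z}\,u]^{*}$ and applying Cauchy--Binet expresses it as $\det(WW^{*})+z\sum_S\lvert\det[W_S\mid u]\rvert^2$; matching coefficients of $z$ produces the identity.

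Applying this identity with $W=V$ (so $s=t$) and with $W=V_T$ (so $s=d$) yields, respectively,
\[
\bra{u}M^{-1}\ket{u}\det M=\sum_{S\in\binom{[t]}{d-1}}\lvert\det[V_S\mid u]\rvert^2,\qquad \bra{u}M_T^{-1}\ket{u}\cdot\lvert\det V_T\rvert^2=\sum_{j\in T}\lvert\det[V_{T\setminus\{j\}}\mid u]\rvert^2.
\]
Summing the second formula over $T\in\binom{[t]}{d}$ and swapping the order of summation, each $(d-1)$-subset $S$ arises as $T\setminus\{j\}$ for exactly $t-(d-1)=t-d+1$ pairs $(T,j)$, namely as $j$ ranges over $[t]\setminus S$. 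Hence $\sum_T\lvert\det V_T\rvert^2\bra{u}M_T^{-1}\ket{u}=(t-d+1)\sum_S\lvert\det[V_S\mid u]\rvert^2=(t-d+1)\det M\,\bra{u}M^{-1}\ket{u}$, which after dividing by $\det M$ says $\mathbb{E}_{T\sim p}[\bra{u}M_T^{-1}\ket{u}]=(t-d+1)\bra{u}M^{-1}\ket{u}$, and the conclusion follows. The only nontrivial step is spotting the volume-sampling identity for $\bra{u}M^{-1}\ket{u}$; once it is in hand, the double counting is immediate and the factor $t-d+1$ emerges naturally as the number of ways to extend a $(d-1)$-set to a $d$-set inside $[t]$.
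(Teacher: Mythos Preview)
Your proof is correct and takes a genuinely different route from the paper's. The paper argues by induction on $t$: after normalizing so that $\sum_j\ketbra{w_j}{w_j}=I$, it uses the Sherman--Morrison formula to compute the ratio obtained by deleting a single vector $\ket{v_j}$, then shows via a convex combination (with weights $(1-\norm{w_j}^2)/(t-d)$) that some deletion increases the quadratic form by at most the factor $(t-d+1)/(t-d)$; iterating this $t-d$ times and telescoping gives the bound. Your argument, by contrast, is non-inductive: you identify the quantity $\bra{u}M^{-1}\ket{u}\det M$ with a Cauchy--Binet sum over $(d-1)$-subsets, and the volume-sampling average over $d$-subsets $T$ then reduces to a double count in which every $(d-1)$-subset is extended in exactly $t-d+1$ ways, producing the factor in one stroke.

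What each approach buys: the paper's induction is more elementary (only rank-one updates, no Cauchy--Binet) and is algorithmically greedy---it exhibits a specific sequence of deletions. Your volume-sampling argument is slicker and yields the stronger statement that the \emph{expectation} under $p(T)\propto\lvert\det V_T\rvert^2$ equals $(t-d+1)\bra{u}M^{-1}\ket{u}$ exactly, which in particular shows the constant $t-d+1$ is tight on average; it also connects the lemma to the well-known volume-sampling technique from randomized numerical linear algebra.
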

\begin{proof}
We prove by induction on $t$. If $t=d$, the two sides are equal, therefore the inequality is true. Otherwise let
\begin{equation}
\ket{w_j}=\left(\sum_{i=1}^t\ketbra{v_i}{v_i}\right)^{-\frac{1}{2}}\ket{v_j}
\end{equation}
and
\begin{equation}
\ket{f}=\left(\sum_{i=1}^t\ketbra{v_i}{v_i}\right)^{-\frac{1}{2}}\ket{u}.
\end{equation}
For each $j\in[t]$ we have
\begin{equation}
\begin{split}
\frac{\displaystyle\bra{u}\left(\sum_{i\in[t]\setminus\{j\}}\ketbra{v_i}{v_i}\right)^{-1}\ket{u}}{\displaystyle\bra{u}\left(\sum_{i=1}^t\ketbra{v_i}{v_i}\right)^{-1}\ket{u}}
 & = \frac{\displaystyle\bra{f}\left(\sum_{i\in[t]\setminus\{j\}}\ketbra{w_i}{w_i}\right)^{-1}\ket{f}}{\displaystyle\bra{f}\left(\sum_{i=1}^t\ketbra{w_i}{w_i}\right)^{-1}\ket{f}}  \\
 & = \frac{\displaystyle\bra{f}\left(I-\ketbra{w_j}{w_j}\right)^{-1}\ket{f}}{\displaystyle\braket{f}{f}}  \\
 & = \frac{\displaystyle\bra{f}\left(I+\frac{1}{1-\norm{w_j}^2}\ketbra{w_j}{w_j}\right)\ket{f}}{\displaystyle\braket{f}{f}}  \\
 & = 1+\frac{\braket{f}{w_j}\braket{w_j}{f}}{\norm{f}^2(1-\norm{w_j}^2)}.
\end{split}
\end{equation}
Note that $0\le\norm{w_j}<1$ and
\begin{equation}
\sum_{j=1}^t(1-\norm{w_j}^2)=t-d,
\end{equation}
therefore we can form the convex combination
\begin{equation}
\sum_{j=1}^t\frac{1-\norm{w_j}^2}{t-d}\frac{\braket{f}{w_j}\braket{w_j}{f}}{\norm{f}^2(1-\norm{w_j}^2)}=\frac{1}{t-d}\frac{1}{\norm{f}^2}\bra{f}\sum_{j=1}^t\ketbra{w_j}{w_j}\ket{f}=\frac{1}{t-d}.
\end{equation}
It follows that there is an index $j$ such that
\begin{equation}
1+\frac{\braket{f}{w_j}\braket{w_j}{f}}{\norm{f}^2(1-\norm{w_j}^2)}\le 1+\frac{1}{t-d}=\frac{t-d+1}{t-d}
\end{equation}

Choosing a $j$ with this property, the set of $t-1$ vectors $\ket{v_1},\dots,\ket{v_{j-1}},\ket{v_{j+1}},\dots,\ket{v_t}$ are in general position, therefore, by the induction hypothesis, there is a subset $T\subseteq[t]\setminus\{j\}$ such that
\begin{equation}
\begin{split}
\bra{e}\left(\sum_{i\in T}\ketbra{v_i}{v_i}\right)^{-1}\ket{e}
 & \le (t-d)\bra{e}\left(\sum_{i\in[t]\setminus\{j\}}\ketbra{v_i}{v_i}\right)^{-1}\ket{e}  \\
 & \le (t-d)\frac{t-d+1}{t-d}\bra{e}\left(\sum_{i=1}^t\ketbra{v_i}{v_i}\right)^{-1}\ket{e}  \\
 & \le (t-d+1)\bra{e}\left(\sum_{i=1}^t\ketbra{v_i}{v_i}\right)^{-1}\ket{e}.
\end{split}
\end{equation}
\end{proof}

This proposition implies that it suffices to consider $t=ne+1$ terms (i.e., the minimal number) for the $n$th tensor power, as any subexponential sequence $t_n$ would only increase the success probability by a subexponential factor (at most $\frac{t_n}{ne+1}$).

\begin{definition}\label{def:errorexp}
We say that $r\in\nonnegativereals$ is an achievable exponent if 
\begin{equation}
r=-\liminf_{n\to\infty}\frac{1}{n}\log p_n(A_1(z)\otimes\dots\otimes A_k(z)),
\end{equation}
where $p_n$ is the probability of the transformation depending on the number of copies $n$ of the input state and on $\norm{A_1(z)\otimes\dots\otimes A_k(z)}=\prod_{j=1}^k\norm{A_j(z)}$.
\end{definition}

\begin{example}
\cref{ex:rootsofunity} with $t_n=ne+1$ implies that
\begin{equation}
r=\max_{z\in \unitcircle}\sum_{j=1}^k\log\norm{A_j(z)}^{2n}
\end{equation}
is an achievable exponent where $\unitcircle$ is the unit circle.
\end{example}

\begin{corollary}\label{cor:finalcountable}
Up to a polynomial factor the optimal probability in \eqref{eq:problowerbound} is
\begin{equation}\label{eq:optcGcmaineq}
    \langle c,G^{-1}c\rangle= \sum_{i=1}^{ne+1} \prod_{l\neq i} \norm{A_1(z_i)\otimes\dots\otimes A_k(z_i)}^{2n}\frac{\lvert z_l\rvert^2}{\lvert z_i-z_l\rvert^2}.
\end{equation}

\end{corollary}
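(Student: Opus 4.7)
The plan is to combine the three preceding results—Proposition 2.2, Lemma 2.3 and Lemma 2.4—and then evaluate the resulting scalar explicitly using Vandermonde/Lagrange interpolation.

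First, by Proposition 2.2 the minimum of $\langle c,G^{-1}c\rangle$ over $c$ satisfying \eqref{eq:czconditions} equals $\langle e_1,(ZGZ^*)^{-1}e_1\rangle$. Lemma 2.3 then says that, up to a subexponential (in fact constant-in-$n$) factor, we may take $|g_i|^2=t^{-k}/\norm{A(z_i)}^{2n}$. Finally, Lemma 2.4 applied to the $ne+1$ dimensional vectors
\[
\ket{v_i}=\frac{1}{\norm{A(z_i)}^{n}}\begin{bmatrix}1\\ z_i\\ \vdots\\ z_i^{ne}\end{bmatrix}
\]
shows that, up to a factor of order $t-ne$ (polynomial in $n$ whenever $t$ is), we can reduce the sum in \eqref{eq:optzgz} to exactly $t=ne+1$ terms. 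So it suffices to evaluate $\langle e_1,(ZGZ^*)^{-1}e_1\rangle$ in the square case.

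In the square case $Z$ is the $(ne+1)\times(ne+1)$ Vandermonde matrix in $z_1,\dots,z_{ne+1}$ (invertible since the $z_i$ are distinct), and $G=(ne+1)^{-k}\operatorname{diag}\!\bigl(\norm{A(z_i)}^{-2n}\bigr)$. Hence
\[
(ZGZ^*)^{-1}=(Z^*)^{-1}G^{-1}Z^{-1},
\]
and writing $w:=Z^{-1}e_1$ gives
\[
\langle e_1,(ZGZ^*)^{-1}e_1\rangle=\langle w,G^{-1}w\rangle=(ne+1)^{k}\sum_{i=1}^{ne+1}\norm{A(z_i)}^{2n}\,|w_i|^{2}.
\]
The components $w_i$ are determined by $\sum_i w_i z_i^{h}=\delta_{h,0}$ for $h=0,\dots,ne$, which is precisely Lagrange interpolation of the constant-in-$h$ data at $z=0$; consequently $w_i=L_i(0)$ where $L_i(z)=\prod_{l\neq i}\frac{z-z_l}{z_i-z_l}$ is the $i$-th Lagrange basis polynomial. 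Thus
\[
|w_i|^{2}=\prod_{l\neq i}\frac{|z_l|^{2}}{|z_i-z_l|^{2}},
\]
which is the only nontrivial computation in the argument, and plugging this in yields \eqref{eq:optcGcmaineq} up to the polynomial prefactor $(ne+1)^{k}$ accumulated from Lemmas 2.3 and 2.4.

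There is no real obstacle: the algebra of steps 1–3 is already packaged into the three preceding results, and the main content is recognising the Vandermonde system as Lagrange interpolation at $0$. The only point requiring a bit of care is bookkeeping of the polynomial factors—one should verify that the subexponential loss in Lemma 2.3 combined with the at most $(t-ne)$-factor in Lemma 2.4 is indeed polynomial in $n$, so that \eqref{eq:optcGcmaineq} correctly describes the optimal probability up to a polynomial factor as claimed.
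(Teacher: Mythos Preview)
Your proposal is correct and follows essentially the same route as the paper: reduce to $t=ne+1$ via \cref{lem:optbyt}, use the Vandermonde invertibility to write $\langle e_1,(ZGZ^*)^{-1}e_1\rangle=\langle Z^{-1}e_1,G^{-1}Z^{-1}e_1\rangle$, and read off the entries of $Z^{-1}e_1$. The paper simply states the resulting formula, whereas you make the identification $w_i=L_i(0)$ via Lagrange interpolation explicit---a nice clarification, but not a different argument.
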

\begin{proof}
    By \cref{lem:optbyt} and \cref{prop:optbyc} we have left with an optimization of the right hand side of \eqref{eq:pAz} written as $\langle c, G^{-1}c\rangle$, where it is enough to use only $ne+1$ distinct complex $z_i$ numbers. 
    When we consider only the minimal $ne+1$ number of $z_i$ factors, then $Z$ becomes the Vandermonde matrix, which is invertible.
     Then \eqref{eq:optcGc} can be rewritten as 
    \begin{equation}
    \begin{split}
        \langle c,G^{-1}c\rangle&= \langle e_1,{Z^*}^{-1}G^{-1}Z^{-1}e_1\rangle=\sum_{i=1}^{ne+1} \norm{A_1(z_i)\otimes\dots\otimes A_k(z_i)}^{2n} \prod_{l\neq i}\frac{\lvert z_l\rvert^2}{\lvert z_i-z_l\rvert^2}\\ &=
        \sum_{i=1}^{ne+1} \prod_{l\neq i} \sqrt[e]{\norm{A_1(z_i)\otimes\dots\otimes A_k(z_i)}^{2}}\frac{\lvert z_l\rvert^2}{\lvert z_i-z_l\rvert^2}.
    \end{split}
    \end{equation}
\end{proof}

\begin{corollary}\label{cor:finalexpcountable}
The immediate consequence of \cref{cor:finalcountable} is that the optimal error exponent (\cref{def:errorexp}) of the protocol described in \cref{sec:DegenLOCC}, can be written in a form of an optimization problem over a countable subset of $\complexes$ 
    \begin{equation}
    r_\textnormal{opt}=\limsup_{n\to\infty}\frac{1}{n}\log\inf_{\{z_i\}_i\subseteq \complexes}\sum_{i=1}^{ne+1} \prod_{l\neq i} \sqrt[e]{\norm{A_1(z_i)\otimes\dots\otimes A_k(z_i)}^{2}}\frac{\lvert z_l\rvert^2}{\lvert z_i-z_l\rvert^2}.
\end{equation}
\end{corollary}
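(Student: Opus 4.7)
The plan is to extract the optimal error exponent directly from the explicit formula for the maximal success probability furnished by \cref{cor:finalcountable}, by taking $-\tfrac{1}{n}\log$ and absorbing all subexponential contributions.

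I would first note that combining \cref{prop:optbyc}, \cref{lem:optzgz}, and \cref{lem:optbyt}, the best success probability of the protocol on $n$ copies equals, up to a polynomial factor in $n$, the reciprocal of
\begin{equation*}
S_n(z_1,\dots,z_{ne+1}) := \sum_{i=1}^{ne+1} \prod_{l\neq i} \sqrt[e]{\norm{A_1(z_i)\otimes\dots\otimes A_k(z_i)}^{2}}\frac{\lvert z_l\rvert^2}{\lvert z_i-z_l\rvert^2},
\end{equation*}
for any fixed choice of $ne+1$ distinct nonzero interpolation points. Since these points are free parameters that may be selected afresh for each $n$, the best probability attainable for fixed $n$ is, up to the same polynomial factor, the reciprocal of $\inf_{\{z_i\}_i\subseteq\complexes}S_n(\{z_i\}_i)$.

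Taking $-\tfrac{1}{n}\log$ of this best probability converts the reciprocal into a positive quantity and turns the polynomial pre-factor into a term of order $\tfrac{\log n}{n}$, which vanishes in the limit and thus drops out of the exponent. Invoking \cref{def:errorexp} together with the identity $-\liminf(-x_n)=\limsup x_n$, I obtain
\begin{equation*}
r_\textnormal{opt}=\limsup_{n\to\infty}\frac{1}{n}\log\inf_{\{z_i\}_i\subseteq\complexes}S_n(\{z_i\}_i),
\end{equation*}
which is exactly the claimed identity after unpacking the definition of $S_n$.

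No real obstacle is anticipated: the statement is essentially a notational rewriting of \cref{cor:finalcountable} via \cref{def:errorexp}. The only minor points worth verifying explicitly are that $\log$ commutes with $\inf$ on $\positivereals$ (by continuity and strict monotonicity, so that the infimum survives unchanged inside the logarithm), and that the polynomial pre-factors inherited from \cref{lem:optzgz} and \cref{lem:optbyt} genuinely contribute $0$ after division by $n$, so that the $\doteq$ up to subexponential factors at the probability level translates into an honest equality at the exponent level.
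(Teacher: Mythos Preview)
Your proposal is correct and matches the paper's approach exactly: the paper presents this corollary with no separate proof, merely declaring it ``the immediate consequence of \cref{cor:finalcountable},'' and the steps you spell out (taking $-\tfrac{1}{n}\log$, absorbing the polynomial prefactors, and using $-\liminf(-x_n)=\limsup x_n$) are precisely the routine details behind that phrase.
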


Next we show that the countable subset $\{z_i\}_i$ can be chosen from an increasing series of compact sets $K_n\subseteq\complexes$. 

\begin{proposition}\label{prop:countableKtoC}
Assume that for $z\to 0$ we have $\norm{A_1(z)\otimes\dots\otimes A_k(z)}\to\infty$ and that there exist $C, d, b >0$ such that $\norm{A_1(z)\otimes\dots\otimes A_k(z)}\ge C\lvert z\rvert^e$ for any $\norm{z}\ge d$ and  $\norm{A_1(z)\otimes\dots\otimes A_k(z)}\ge b$ everywhere. Then
\begin{equation}
\begin{split}
&\limsup_{n\to\infty}\frac{1}{n}\log
    \inf_{\{z_i\}_i\subseteq \complexes}\sum_{i=1}^{ne+1} \prod_{l\neq i} \sqrt[e]{\norm{A_1(z_i)\otimes\dots\otimes A_k(z_i)}}\frac{\lvert z_l\rvert^2}{\lvert z_i-z_l\rvert^2}=\\
    \inf_{K\subseteq\complexes}&\limsup_{n\to\infty}\frac{1}{n}\log\inf_{\{z_i\}_i\subseteq K}\sum_{i=1}^{ne+1} \prod_{l\neq i} \sqrt[e]{\norm{A_1(z_i)\otimes\dots\otimes A_k(z_i)}}\frac{\lvert z_l\rvert^2}{\lvert z_i-z_l\rvert^2},
\end{split}
\end{equation}
where the first infimum is taken over the compact sets $K$, then the next is taken over every $n$-element subsets of $K$. 
\end{proposition}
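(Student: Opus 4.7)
The direction $\ge$ is immediate, since restricting $\{z_i\}$ to any compact $K\subseteq\complexes$ only enlarges the inner infimum, and this is preserved by $\frac{1}{n}\log$, $\limsup_n$, and $\inf_K$. The content is the opposite direction, which I would reduce to the following claim: for every $\epsilon>0$ there is a compact $K_\epsilon\subseteq\complexes$ such that $\inf_{\{z_i\}\subseteq K_\epsilon}F_n\le e^{n\epsilon}\inf_{\{z_i\}\subseteq\complexes}F_n$ for all sufficiently large $n$, where $F_n(\{z_i\})$ denotes the sum being optimized. Taking $\frac{1}{n}\log$, $\limsup_n$, and then the infimum over $\epsilon$ then finishes the proof.

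The natural choice is $K_M=\{z\in\complexes : 1/M\le|z|\le M\}$ for $M\ge d$ large, and my plan is to start from an (essentially) optimal configuration $\{z_i\}$ and relocate any outlier into $K_M$. From \cref{ex:rootsofunity} there is an explicit $r_0$ with $\inf_\complexes F_n\le e^{n(r_0+o(1))}$, so each summand $T_i=\norm{A(z_i)}^{2n}\prod_{l\ne i}|z_l|^2/|z_i-z_l|^2$ of a near-optimal configuration is bounded above by $e^{n(r_0+o(1))}$. Outliers with $|z_i|<1/M$ are thereby excluded automatically: since $\norm{A(z)}\to\infty$ as $z\to 0$, the factor $\norm{A(z_i)}^{2n}$ in the $i$-th summand dominates any fixed exponential rate once $M$ is large, while the product $\prod_{l\ne i}|z_l|^2/|z_i-z_l|^2$ is bounded below by a subexponential quantity provided the other $z_l$ stay in general position.

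For outliers with $|z_i|>M$, I would replace $z_i$ by a well-chosen $z_i'$ on $\{|z|=M\}$. The hypothesis $\norm{A(z)}\ge C|z|^e$ at infinity, combined with the polynomial upper bound $\norm{A(z)}\le D|z|^{\deg_z A}+O(|z|^{\deg_z A-1})$ from $A(z)$ being a Laurent polynomial, splits into two regimes. If $\deg_z A>e$ then $T_i\to\infty$ as $|z_i|\to\infty$, and such outliers are again excluded by the same near-optimality argument. If $\deg_z A=e$ then $\norm{A(z)}/|z|^e$ is squeezed between constants that converge to the norm of the leading coefficient of $A$ as $|z|\to\infty$, so the swap changes $T_i$ by a factor $\alpha_M^n$ with $\alpha_M\to 1$, while the remaining summands are affected only through $|z_i|^2/|z_j-z_i|^2$, which stays close to $1$ for $z_j$ in $K_M$. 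The main obstacle will be the case of several simultaneous outliers of large modulus in the regime $\deg_z A=e$: their relocations interact through the products $\prod_{l\ne i}|z_i-z_l|^{-2}$, and to prevent degenerate denominators I would perform them sequentially on the one-dimensional curve $\{|z|=M\}$, where finitely many forbidden neighborhoods can always be avoided by a positive margin. The accumulated factor over at most $ne+1$ relocations is still of the form $\alpha_M^n$ up to a polynomial in $n$, so choosing $M$ so that $\log\alpha_M<\epsilon$ then yields $K_\epsilon=K_M$.
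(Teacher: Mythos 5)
Your $\ge$ direction and the general reduction (find $K_\epsilon$ such that $\inf_{K_\epsilon}F_n\le e^{n\epsilon}\inf_{\complexes}F_n$) are sound, but the relocation step at the heart of the $\le$ direction has a genuine gap, and it is precisely the step the paper sidesteps by taking a different route. When you relocate an outlier $z_i$ to $z_i'\in\{|z|=M\}$, two families of factors must be controlled simultaneously: the product $\prod_{l\ne i}|z_i'-z_l|^{-2}$ inside the relocated summand $T_i'$, and the factors $|z_j-z_i'|^{-2}$ inside every other summand $T_j$ ($j\ne i$). You say the forbidden neighbourhoods on $\{|z|=M\}$ can be avoided ``by a positive margin'', but there are up to $\Theta(n)$ points to avoid on a fixed circle, so the achievable margin is only $O(1/n)$; with that margin the naive bound on the accumulated degradation across $\Theta(n)$ relocations and $\Theta(n)$ factors per summand is superexponential, not $\alpha_M^n$ with $\alpha_M\to 1$. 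A mean-value/potential-theoretic argument can rescue the bound on $T_i'$ alone (the average over $z_i'\in\{|z|=M\}$ of $\sum_l\log|z_i'-z_l|^{-1}$ is $-\sum_l\max(\log M,\log|z_l|)$, so there is always a spot with $\prod_l|z_i'-z_l|\ge M^{ne}$), but that choice of $z_i'$ gives you no handle on $\max_{j\ne i}T_j'$, and you cannot in general find one $z_i'$ that is good for all of these at once. Your sketch does not address this interaction, and I don't see how to close it along the lines you describe.

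The paper avoids the problem entirely: instead of relocating outliers it \emph{drops} them. It orders the $ne+1$ points by modulus, shows (using $\norm{A(z)}\to\infty$ at $0$ and $\norm{A(z)}\ge C|z|^e$ at $\infty$) that for $m=\lfloor(1-\epsilon)n\rfloor$ the first $me+1$ points are confined to a compact annulus $C_1\le|z|\le C_2$, and then bounds the rate of the full configuration from below by $\frac{m}{n}$ times the rate of this confined sub-configuration, minus an $O(1-\frac{m}{n})$ error. Dropping a term $\log\frac{|z_l|}{|z_i-z_l|}$ with $|z_l|\ge|z_i|$ changes the inner sum by at most $1$ per dropped index, so the error is genuinely $O(\epsilon)$ with no interaction terms; and since the confined sub-configuration corresponds to a rate functional at $m\to\infty$, letting $\epsilon\to 0$ finishes the argument. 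Two smaller remarks on your sketch: (i) the lower bound $\prod_{l\ne i}|z_l|^2/|z_i-z_l|^2\ge 4^{-ne}$ that you need for small-$|z_i|$ outliers is \emph{exponential}, not ``subexponential''; that still suffices because $\norm{A(z_i)}\to\infty$ beats any fixed exponential rate, but the phrasing should be corrected; (ii) your ``$T_i\to\infty$ when $\deg_z A>e$'' claim also implicitly assumes the other $z_l$ are bounded, which needs to be established first (the paper does this via its first two bounds). If you want to keep the relocation route, the most promising fix is to drop (not relocate) the small-modulus outliers, then relocate only an $o(n)$ number of large-modulus outliers — but at that point you may as well drop all of them as the paper does.
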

\begin{proof}
In this proof we use the abbreviation $A(z)\coloneqq \norm{A_1(z)\otimes\dots\otimes A_k(z)}$.
One can see that the left hand side is trivially less than or equal to the right hand side. For the converse let $0\le m\le n$ and $z_0,\dots,z_{ne}\in\complexes$ ordered by magnitude, i.e., $\lvert z_0\rvert\le\lvert z_1\rvert\le\dots\le\lvert z_{ne}\rvert$. Let
\begin{equation}
M=\max_{i\in[0,ne]}\Bigg[\log A(z_i)+e\frac{1}{ne}\sum_{\substack{l=0 \\ l\neq i}}^{ne}\log\frac{\lvert z_l\rvert}{\lvert z_i-z_l\rvert}\Bigg].
\end{equation}

We consider three lower bounds.
First, we substitute $i=0$ instead the maximization:
\begin{equation}\label{eq:smallestzbound}
\begin{split}
M
 & \ge \log A(z_0)+e\frac{1}{ne}\sum_{l=1}^{ne}\log\frac{\lvert z_l\rvert}{\lvert z_0-z_l\rvert}  \\
 & \ge \log A(z_0)-e,
\end{split}
\end{equation}
using that
\begin{equation}
\frac{\lvert z_l\rvert}{\lvert z_0-z_l\rvert}\ge\frac{\lvert z_l\rvert}{\lvert z_0\rvert+\lvert z_l\rvert}\ge\frac{1}{2}.
\end{equation}
Since $\lim_{z\to 0}A(z)=\infty$, \eqref{eq:smallestzbound} implies that the maximum is unbounded unless $z_0$ is bounded away from $0$.

Next, we substitute $i=me$:
\begin{equation}
\begin{split}
M
 & \ge \log A(z_{me})+e\frac{1}{ne}\sum_{l=0}^{me-1}\log\frac{\lvert z_l\rvert}{\lvert z_{me}-z_l\rvert}+e\frac{1}{ne}\sum_{l=me+1}^{ne}\log\frac{\lvert z_l\rvert}{\lvert z_{me}-z_l\rvert}  \\
 & \ge \log A(z_{me})+e\frac{me}{ne}\log\frac{\lvert z_0\rvert}{\lvert z_{me}\rvert+\lvert z_0\rvert}-e\frac{ne-me}{ne}  \\
 & \ge \log \frac{A(z_{me})}{\lvert z_{me}\rvert^{\frac{m}{n}e}}+e\frac{m}{n}\log\lvert z_0\rvert-e.
\end{split}
\end{equation}
Since $A(z)\ge C\lvert z\rvert^e$ for large $\lvert z\rvert$ and $\lvert z_0\rvert$ is bounded from below, if we set $m=\lfloor(1-\epsilon)n\rfloor$ then $\lvert z_{me}\rvert$ must be bounded. In other words, $\{\nu_v\}_{n\in\naturals}$ is a tight family of measures.

Finally, we maximize over $i\in[0,me]$:
\begin{equation}
\begin{split}
M
 & \ge \max_{i\in[0,me]}\Bigg[\log A(z_i)+e\frac{1}{ne}\sum_{\substack{l=0 \\ l\neq i}}^{ne}\log\frac{\lvert z_l\rvert}{\lvert z_i-z_l\rvert}\Bigg]  \\
 & \ge \max_{i\in[0,me]}\Bigg[\log A(z_i)+e\frac{1}{ne}\sum_{\substack{l=0 \\ l\neq i}}^{me}\log\frac{\lvert z_l\rvert}{\lvert z_i-z_l\rvert}+e\frac{1}{ne}\sum_{l=me+1}^{ne}\log\frac{\lvert z_l\rvert}{\lvert z_i-z_l\rvert}\Bigg]  \\
 & \ge \max_{i\in[0,me]}\Bigg[\left(1-\frac{m}{n}\right)\log A(z_i)+\frac{m}{n}\Bigg(\log A(z_i)+e\frac{1}{me}\sum_{\substack{l=0 \\ l\neq i}}^{me}\log\frac{\lvert z_l\rvert}{\lvert z_i-z_l\rvert}\Bigg)-\left(1-\frac{m}{n}\right)e\Bigg]  \\
 & \ge \frac{m}{n}\max_{i\in[0,me]}\Bigg[\log A(z_i)+e\frac{1}{me}\sum_{\substack{l=0 \\ l\neq i}}^{me}\log\frac{\lvert z_l\rvert}{\lvert z_i-z_l\rvert}\Bigg]-\left(1-\frac{m}{n}\right)e+\left(1-\frac{m}{n}\right)\log \min_{z\in\complexes}A(z)
\end{split}
\end{equation}
Now set $m=\lfloor(1-\epsilon)n\rfloor$ and then $n\to\infty$. By the previous estimates, there are constants $0<C_1<C_2$ such that $C_1\le\lvert z_0\rvert\le\lvert z_1\rvert\le\dots\le\lvert z_{me}\rvert\le C_2$. The expression in the brackets takes the form of $M$, but this time the maximum is taken by respecting these confinements. Let us denote this expression by $M_c$, and write
\begin{equation}
M - M_c \ge \frac{1}{(1-\epsilon)}\left( \epsilon e+\epsilon \log \min_{z\in\complexes}A(z)\right).
\end{equation}
As $\epsilon\to 0$ this gives $0$, implying that $M \ge M_c$.

\end{proof}

\begin{corollary}
    Note that if the Laurent polynomial $A_1(z)\otimes\dots\otimes A_l(z)$ contains both positive and negative powers of $z$ and it is nowhere zero then the conditions of \cref{prop:countableKtoC} are satisfied for $\norm{A_1(z)\otimes\dots\otimes A_l(z)}$. By \cref{rem:degenInterestingcases} these are the only interesting cases to consider.
    Then by the previous proposition the optimal error exponent in $\cref{cor:finalexpcountable}$ can be written as 
\begin{equation}\label{eq:roptdiscretecompact}
    r_\textnormal{opt}=\inf_K\limsup_{n\to\infty}\frac{1}{n}\log\inf_{\{z_i\}_i\subseteq K}\sum_{i=1}^{ne+1} \prod_{l\neq i} \sqrt[e]{\norm{A_1(z_i)\otimes\dots\otimes A_l(z_i)}^{2}}\frac{\lvert z_l\rvert^2}{\lvert z_i-z_l\rvert^2},
\end{equation}
where the first infimum is taken over the compact sets $K$ then the next is taken over the $n$-element subsets of $K$. 
\end{corollary}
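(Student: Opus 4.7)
The plan is to verify the three hypotheses of \cref{prop:countableKtoC} under the assumptions of \cref{rem:degenInterestingcases} (that $A(z):=A_1(z)\otimes\dots\otimes A_k(z)$ has terms of both positive and negative $z$-degree, and is nowhere zero), and then to feed \cref{prop:countableKtoC} into the expression for $r_\textnormal{opt}$ already recorded in \cref{cor:finalexpcountable}. Since \cref{cor:finalexpcountable} writes $r_\textnormal{opt}$ as a $\limsup$ of an infimum over arbitrary $(ne+1)$-tuples of complex numbers, the content of the present corollary is exactly that one may restrict the infimum to increasing compact sets, which is precisely what \cref{prop:countableKtoC} supplies once its hypotheses are checked.

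Write $A(z)=\sum_{h=d_-}^{d_+} z^h A_h$ with $A_{d_-}\neq 0$ and $A_{d_+}\neq 0$; by hypothesis $d_-<0<d_+$. Factoring out $z^{d_-}$ gives $A(z)=z^{d_-}(A_{d_-}+zA_{d_-+1}+\dots)$, so $\norm{A(z)}\sim \lvert z\rvert^{d_-}\norm{A_{d_-}}\to\infty$ as $z\to 0$, giving the first hypothesis. Factoring out $z^{d_+}$ analogously gives $\norm{A(z)}\sim \lvert z\rvert^{d_+}\norm{A_{d_+}}$ as $\lvert z\rvert\to\infty$, so $\norm{A(z)}\ge C\lvert z\rvert^{d_+}$ for all $\lvert z\rvert$ large enough. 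To replace $d_+$ by $e$, I use that the identity $A(z)\psi=\sum_{h=0}^e z^h\varphi_h$ with $\varphi_e\neq 0$ forces $A_e\psi=\varphi_e\neq 0$, hence $A_e\neq 0$ and therefore $d_+\ge e$. This yields the polynomial lower bound $\norm{A(z)}\ge C\lvert z\rvert^e$ for $\lvert z\rvert\ge d$, which is the second hypothesis.

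For the third hypothesis, the map $z\mapsto\norm{A(z)}$ is continuous on $\complexes\setminus\{0\}$ and strictly positive there by the nowhere-zero assumption, and by the first two steps it also diverges both as $z\to 0$ and as $\lvert z\rvert\to\infty$. Consequently its infimum over $\complexes\setminus\{0\}$ is attained on some compact annulus, giving a strictly positive lower bound $b$. With all three hypotheses of \cref{prop:countableKtoC} verified, the proposition rewrites the unrestricted infimum in \cref{cor:finalexpcountable} as the nested infimum over compact $K$ and over $(ne+1)$-element subsets of $K$, which is exactly \eqref{eq:roptdiscretecompact}. No real obstacle arises; the only place that requires a moment of care is recognizing that $d_+\ge e$ (so that the growth estimate at infinity is strong enough), and this is an immediate consequence of the error degree being exactly $e$.
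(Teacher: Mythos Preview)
Your proposal is correct and matches what the paper does: the paper states this corollary without an explicit proof block, treating the verification of the hypotheses of \cref{prop:countableKtoC} as immediate from the assumptions in \cref{rem:degenInterestingcases}. You have simply written out those routine checks, including the one mildly non-obvious point that the top degree $d_+$ of $A(z)$ is at least the error degree $e$ (which follows from $A_e\psi=\varphi_e\neq 0$), so there is nothing to add or correct.
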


\section{Limit of optimal probabilities}\label{sec:limitofoptprob}

In the following we show that the optimization in \eqref{eq:roptdiscretecompact} can be written as an optimization of an integral over a probability measure. Although this problem resembles the topic of potential theory, and we aim for a similar equality as the one between the transfinite diameter and the logarithmic potential we introduce in \cref{subsec:transfiniteNpot}, the same approaches and facts are not directly applicable here.

Let $K\subseteq O\subseteq\complexes$ be such that $K$ is compact and $O$ is open, and let $w_1,w_2:O\to\positivereals$ be continuous functions.
We define
\begin{equation}\label{eq:deltaKw1w2def}
\delta_{K,w_1,w_2}=\limsup_{n\to\infty}\frac{1}{n}\log\max_{z_0,\dots,z_n\in K}\min_{0\le i\le n}\prod_{\substack{l=0 \\ l\neq i}}^n\lvert z_i-z_l\rvert w_1(z_i)w_2(z_l).
\end{equation}
For $\epsilon\ge 0$ let
\begin{equation}
K_\epsilon=\setbuild{z\in\complexes}{\distance(z,K)\le\epsilon},
\end{equation}
where $\distance(z,K)=\min_{w\in K}\lvert z-w\rvert$. Note that $K_\epsilon\subseteq O$ for all sufficiently small $\epsilon$.
Our goal is to prove that
\begin{equation}\label{eq:supintbetweendelta}
\delta_{K,w_1,w_2}\le\sup_{\sigma\in\distributions(K)}\inf_{z\in\support(\sigma)}\int_K\log\lvert z-t\rvert w_1(z)w_2(t)\ed\sigma(t)\le\delta_{K_\epsilon,w_1,w_2}
\end{equation}
and for all $\epsilon>0$ such that $K_\epsilon\subseteq O$ (with the logarithm extended as $\log0=-\infty$).

\subsection{Lower bound}

To prove the second inequality in \eqref{eq:supintbetweendelta}, we need to show that for all $\sigma\in\distributions(K)$ and small $\epsilon>0$ the inequality
\begin{equation}\label{eq:deltalowerbound}
\delta_{K_\epsilon,w_1,w_2}\ge\inf_{z\in\support(\sigma)}\int_K\log\lvert z-t\rvert w_1(z)w_2(t)\ed\sigma(t)
\end{equation}
holds. In the first step, we will replace $\sigma$ with a coarse-grained measure $\sigma_a$ with slightly larger support, then we approximate it with a sequence of normalized counting measures, as illustrated in \cref{fig:discretization}.

We note that the inequality is vacuous if $\sigma(\{\zeta\})>0$ for some $\zeta\in K$, since then for $z=\zeta$ the integrand is $-\infty$ on a set of positive measure (namely at $t=\zeta$). Therefore in the following we assume $\sigma(\{\zeta\})=0$ for all $\zeta\in K$.

\subsubsection{Coarse graining}

For $a>0$ we define the following measurable partition $\mathcal{A}_a=\setbuild{A_{a,x,y}}{x,y\in\integers}$ of $\complexes$, where
\begin{equation}
A_{a,x,y}=\setbuild{z\in\complexes}{(x-\textstyle\frac{1}{2})a\le\Re z<(x+\textstyle\frac{1}{2})a,(y-\textstyle\frac{1}{2})a\le\Im z<(y+\textstyle\frac{1}{2})a}.
\end{equation}
We define the coarse-grained measure $\sigma_a$ as
\begin{equation}
\sigma_a(E)=\sum_{x,y\in\integers}\sigma(A_{a,x,y})\frac{1}{a^2}\lambda(E\cap A_{a,x,y})
\end{equation}
for all Borel sets $E$, where $\lambda$ denotes the (two-dimensional) Lebesgue measure on $\complexes$. Note that $\support\sigma\subseteq\support\sigma_a\subseteq K_\epsilon$, and for all $0<a\le\epsilon/\sqrt{2}$.

\begin{figure}
\colorlet{regioncolor}{black!65}
\begin{tikzpicture}
\fill[regioncolor] (5:2.8) arc[start angle=5, end angle=75, radius=2.8] -- (75:1.5) arc[start angle=75, end angle=5, radius=1.5] -- cycle;
\draw[help lines] (-0.5,-0.5) grid (3.5,3.5);
\draw[<->] (0,-0.3)--(1,-0.3) node[below,midway] {$a$};
\end{tikzpicture}
\hspace{\stretch{1}}
\begin{tikzpicture}
\foreach \pos / \percentage in {(1,0)/54, (2,0)/57, (0,1)/38, (1,1)/99, (2,1)/34, (0,2)/25, (1,2)/34}
\fill[regioncolor!\percentage] \pos rectangle +(1,1);

\draw[help lines] (-0.5,-0.5) grid (3.5,3.5);
\draw[<->] (0,-0.3)--(1,-0.3) node[below,midway] {$a$};
\end{tikzpicture}
\hspace{\stretch{1}}
\begin{tikzpicture}
\foreach \pos / \t in {(1,0)/4, (2,0)/5, (0,1)/4, (1,1)/6, (2,1)/4, (0,2)/3, (1,2)/4} 
{
  \draw[very thin,black!25] \pos grid[step=1/\t] +(1,1);
\foreach \i in {1,...,\t}
\foreach \j in {1,...,\t}
  \fill \pos +({(\i-0.5)/\t},{(\j-0.5)/\t}) circle (0.2mm);
}

\draw[help lines] (-0.5,-0.5) grid (3.5,3.5);
\draw[<->] (0,-0.3)--(1,-0.3) node[below,midway] {$a$};
\end{tikzpicture}
\caption{The first image shows the original probability distribution and the chosen lattice structure. In the middle we see the coarse grained distribution. The last image shows the discretization of the coarse grained distribution.}\label{fig:discretization}
\end{figure}
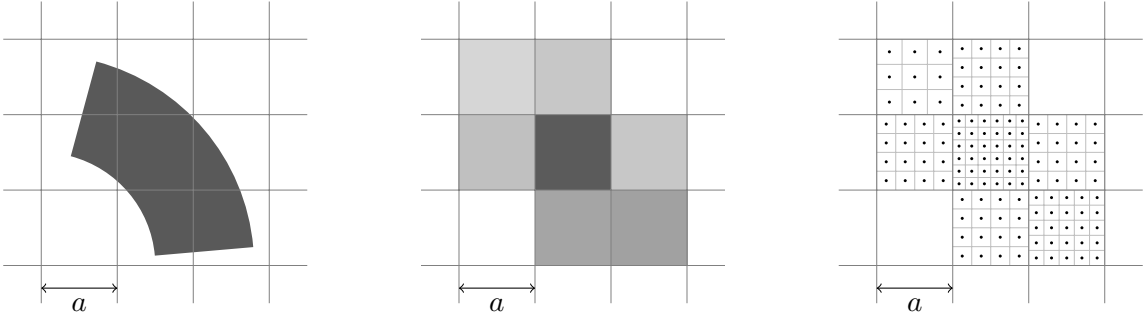

\begin{lemma}\label{lem:logintegralbound}
The function $f:\reals^2\to\reals$,
\begin{equation}
f(u,v)=\frac{1}{2}\int_{-\frac{1}{2}}^{\frac{1}{2}}\int_{-\frac{1}{2}}^{\frac{1}{2}}\log\frac{\left(\lvert u\rvert+\frac{1}{2}\right)^2+\left(\lvert v\rvert+\frac{1}{2}\right)^2}{\left(u-\xi\right)^2+\left(v-\eta\right)^2}\ed\xi\ed\eta
\end{equation}
is bounded and satisfies $f(u,v)\to 0$ as $\sqrt{u^2+v^2}\to\infty$.
\end{lemma}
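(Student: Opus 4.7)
The plan is to split $f$ as
\[
f(u,v)=\frac{1}{2}\bigl(N(u,v)-D(u,v)\bigr),
\]
where $N(u,v)=\log\bigl((|u|+\tfrac{1}{2})^2+(|v|+\tfrac{1}{2})^2\bigr)$ is the logarithm of the constant numerator, and $D(u,v)=\int_{-1/2}^{1/2}\int_{-1/2}^{1/2}\log\bigl((u-\xi)^2+(v-\eta)^2\bigr)\,d\xi\,d\eta$ is the integral of the logarithm of the denominator. For $(\xi,\eta)\in[-1/2,1/2]^2$ we have $|u-\xi|\le|u|+\tfrac{1}{2}$ and $|v-\eta|\le|v|+\tfrac{1}{2}$, so the integrand defining $f$ is pointwise non-negative; in particular $f\ge 0$, and it is enough to bound $f$ from above and to show that $f(u,v)\to 0$ as $\sqrt{u^2+v^2}\to\infty$.

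For the limit, I would apply the translation $s=u-\xi$, $t=v-\eta$ to rewrite $D(u,v)=\int_Q\log(s^2+t^2)\,ds\,dt$ with $Q=[u-\tfrac{1}{2},u+\tfrac{1}{2}]\times[v-\tfrac{1}{2},v+\tfrac{1}{2}]$. As $\sqrt{u^2+v^2}\to\infty$, the ratio $(s^2+t^2)/(u^2+v^2)$ tends to $1$ uniformly on $Q$ with error of order $1/\sqrt{u^2+v^2}$; since $|Q|=1$ this gives $D(u,v)=\log(u^2+v^2)+o(1)$. A direct Taylor expansion applied to the explicit formula for $N$ yields $N(u,v)=\log(u^2+v^2)+o(1)$ as well, and subtracting shows $f(u,v)\to 0$.

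For the boundedness I would combine this vanishing at infinity with an a priori estimate on a bounded region. Fix $R>0$ and suppose $\sqrt{u^2+v^2}\le R$: continuity of the explicit expression gives a bound on $N$, while the same substitution exhibits $D(u,v)$ as the integral of $\log(s^2+t^2)$ over a unit square contained in the disc of radius $R+1$ about the origin, so local integrability of the logarithm in $\reals^2$ provides the uniform bound $|D(u,v)|\le\int_{s^2+t^2\le (R+1)^2}|\log(s^2+t^2)|\,ds\,dt<\infty$. Combined with the limit at infinity, this yields the desired global bound on $f$. The only nontrivial step is making the ``$o(1)$'' quantitative, which reduces to the elementary expansion $\log(1+x)=O(x)$ applied with $x=O(1/\sqrt{u^2+v^2})$ uniformly on $Q$; I therefore expect no serious obstacle beyond this bookkeeping.
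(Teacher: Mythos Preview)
Your proof is correct and follows essentially the same approach as the paper's: both note the non-negativity of the integrand, handle the singularity via local integrability of the planar logarithm (the paper does this explicitly by switching to polar coordinates on the region $r\le\tfrac{1}{2}$, you invoke it abstractly as $\int_{|{\cdot}|\le R+1}|\log(s^2+t^2)|\,ds\,dt<\infty$), and obtain the limit at infinity by showing that the argument of the logarithm tends to $1$ with error $O(1/\sqrt{u^2+v^2})$. The only organizational difference is that you separate numerator and denominator and deduce boundedness by combining the limit at infinity with a compact-region estimate, whereas the paper bounds the full integrand directly for all $(u,v)$; both routes are valid and equally elementary.
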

\begin{proof}
Note that the integrand is non-negative for any $\xi,\eta\in [-\frac{1}{2},\frac{1}{2}]$. Then for the boundedness it is enough to show that it is bounded from above. Also it is enough to consider the denominator in the argument of the logarithm, because its numerator is constant as a function of $\eta$ and $\xi$.
First we use the parametrization $\xi'\coloneqq u-\xi$, $\eta'\coloneqq v-\eta$ and $r^2\coloneqq \xi^{'2}+\eta^{'2}$.
The integral can be split into a part with $r> \frac{1}{2}$ and a part with $r\le\frac{1}{2}$. In the first case the integrand is bounded by $-\log{\frac{1}{4}}$. In the second case we use polar coordinates to evaluate the integral
\begin{equation}
\int_{0}^{2\pi}\int_{0}^{R=\frac{1}{2}}-r\log\left(r^2\right)\ed r\ed\varphi = 
\pi R^2 (1-2\log R)=\frac{\pi}{4} (1-2\log \frac{1}{2})
\end{equation}
To show the limit we rewrite the argument of the logarithm by using the notation $r^2\coloneqq \lvert u\rvert^2+\lvert v\rvert^2$,
\begin{equation}
\begin{split}
    \frac{\lvert u\rvert^2+\lvert u\rvert+\frac{1}{4}+\lvert v\rvert^2+\lvert v\rvert+\frac{1}{4}}{
    \lvert u\rvert^2-2u\lvert\xi\rvert +\lvert\xi\rvert^2 +\lvert v\rvert^2-2v\lvert\eta\rvert +\lvert\eta\rvert^2}&=
    \frac{1+\frac{\lvert u\rvert+\lvert v\rvert+\frac{1}{2}}{r^2}}{
    1-\frac{2u\lvert\xi\rvert +\lvert\xi\rvert^2+2v\lvert\eta\rvert +\lvert\eta\rvert^2}{r^2}}\\
    &\le\frac{1+\frac{\lvert u\rvert+\lvert v\rvert+\frac{1}{2}}{r^2}}{
    1-\frac{\lvert u\rvert \lvert v\rvert }{r^2}}\\
    &\le\frac{1+\frac{\sqrt{2}r+\frac{1}{2}}{r^2}}{
    1-\frac{\sqrt{2}r }{r^2}}.
    \end{split}
\end{equation}
The logarithm of the right hand side vanishes as $r\to \infty$. By this and the non-negativity of the integrand $f(u,v)\to 0$.
\end{proof}

\begin{lemma}\label{lem:smallballmeasure}
Let $g(r)=\sup\setbuild{\sigma(\ball{r}{\zeta})}{\zeta\in K}$. Then $\lim_{r\to 0}g(r)=0$.
\end{lemma}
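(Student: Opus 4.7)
The plan is to argue by contradiction using the compactness of $K$ and the atomlessness of $\sigma$ (established in the paragraph preceding the coarse graining subsection, where the inequality was noted to be vacuous for measures with atoms, so we are assuming $\sigma(\{\zeta\})=0$ for all $\zeta\in K$).

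First I would observe that $g:(0,\infty)\to[0,1]$ is monotone non-decreasing, so the limit $c=\lim_{r\to 0^+}g(r)$ exists in $[0,1]$. The goal is to show $c=0$. Suppose for contradiction $c>0$. Then, by the definition of $g$ as a supremum, for each $n\in\positiveintegers$ we can pick $r_n\in(0,1/n)$ and $\zeta_n\in K$ with $\sigma(\ball{r_n}{\zeta_n})\ge c/2$. Since $K$ is compact, after passing to a subsequence we may assume $\zeta_n\to\zeta$ for some $\zeta\in K$.

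Next I would use a triangle inequality argument to transfer the measure mass to balls around the limit point. Fix any $r>0$. For every $z\in\ball{r_n}{\zeta_n}$ we have $\lvert z-\zeta\rvert\le\lvert z-\zeta_n\rvert+\lvert\zeta_n-\zeta\rvert<r_n+\lvert\zeta_n-\zeta\rvert$, and the right hand side is smaller than $r$ once $n$ is large enough (since $r_n\to 0$ and $\zeta_n\to\zeta$). Hence $\ball{r_n}{\zeta_n}\subseteq\ball{r}{\zeta}$ for all large $n$, which gives $\sigma(\ball{r}{\zeta})\ge c/2$.

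Finally I would apply continuity of measure from above: since $\sigma$ is a finite measure and $\bigcap_{r>0}\ball{r}{\zeta}=\{\zeta\}$, we obtain
\begin{equation}
\sigma(\{\zeta\})=\lim_{r\to 0^+}\sigma(\ball{r}{\zeta})\ge c/2>0,
\end{equation}
contradicting the atomlessness of $\sigma$. Therefore $c=0$, as claimed. There is no real obstacle here; the only point requiring a little care is the uniformity over $\zeta\in K$ in the definition of $g$, which is precisely what compactness of $K$ delivers via the extraction of a convergent subsequence of the $\zeta_n$.
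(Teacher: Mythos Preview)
Your proof is correct and follows essentially the same approach as the paper: extract a convergent subsequence of near-maximizers via compactness of $K$, use the triangle inequality to include the small balls $\ball{r_n}{\zeta_n}$ in balls around the limit point $\zeta$, and then invoke continuity from above together with the atomlessness assumption on $\sigma$. The only cosmetic difference is that you phrase it as a proof by contradiction with the threshold $c/2$, whereas the paper bounds $\lim_{r\to 0}g(r)$ directly by $\sigma(\{\zeta\})=0$.
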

\begin{proof}
The function $g$ is monotone increasing, therefore the limit exists.

For $n\in\positiveintegers$ let $\zeta_n\in K$ such that $g(\frac{1}{n})\le\sigma(\ball{r}{\zeta_n})+\frac{1}{n}$. Let $(\zeta_{n_i})_{i\in\naturals}$ be a convergent subsequence and let $\zeta\in K$ be its limit. We may further assume that $\lvert\zeta_{n_i}-\zeta\rvert$ is monotone decreasing. Then $\ball{\frac{1}{n_i}}{\zeta_{n_i}}\subseteq\ball{\frac{1}{n}+\lvert\zeta_{n_i}-\zeta\rvert}{\zeta}$ and the right hand side is a decreasing sequence, therefore
\begin{equation}
\begin{split}
0
 & = \sigma(\{\zeta\})  \\
 & = \sigma\left(\bigcap_{i\in\naturals}\ball{\frac{1}{n}+\lvert\zeta_{n_i}-\zeta\rvert}{\zeta}\right)  \\
 & = \lim_{i\to\infty}\sigma\left(\ball{\frac{1}{n}+\lvert\zeta_{n_i}-\zeta\rvert}{\zeta}\right)  \\
 & \ge \limsup_{i\to\infty}\sigma(\ball{\frac{1}{n_i}}{\zeta_{n_i}})  \\
 & \ge \lim_{i\to\infty}\left(g\left(\frac{1}{n_i}\right)-\frac{1}{n_i}\right)  \\
 & = \lim_{i\to\infty}g\left(\frac{1}{n_i}\right)  \\
 & = \lim_{r\to 0}g(r).
\end{split}
\end{equation}
\end{proof}

\begin{proposition}\label{prop:coarsegrainedlowerbound}
\begin{multline}
\liminf_{a\to 0}\inf_{z\in\support(\sigma_a)}\int_O\log\lvert z-t\rvert w_1(z)w_2(t)\ed\sigma_a(t)  \\  \ge\inf_{z\in\support(\sigma)}\int_K\log\lvert z-t\rvert w_1(z)w_2(t)\ed\sigma(t).
\end{multline}
\end{proposition}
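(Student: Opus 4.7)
The plan is to compare the integrals of $\log\lvert z-t\rvert$ against $\sigma_a$ and against $\sigma$ cell by cell, then pass from a pointwise bound to the infimum claim. Since $w_1(z)$ factors out of the $t$-integral, it suffices to analyse $J_a(z) \coloneqq \int \log\lvert z-t\rvert w_2(t)\,\ed\sigma_a(t)$ versus $J(z) \coloneqq \int \log\lvert z-t\rvert w_2(t)\,\ed\sigma(t)$, handling $w_1$ at the end by its continuity. For each cell $A_{a,x,y}$ with centre $c_{xy}$, write $(u_{xy},v_{xy}) = (z-c_{xy})/a$ and let $M_{xy}(z) = a\sqrt{(\lvert u_{xy}\rvert+\tfrac12)^2+(\lvert v_{xy}\rvert+\tfrac12)^2}$ be the maximum distance from $z$ to the cell. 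Since $\sigma_a$ has constant density $\sigma(A_{a,x,y})/a^2$ on the cell, the change of variables used in \cref{lem:logintegralbound} gives exactly
\begin{equation*}
\int_{A_{a,x,y}} \log\lvert z-t\rvert\,\ed\sigma_a(t) = \sigma(A_{a,x,y})\bigl[\log M_{xy}(z) - f(u_{xy},v_{xy})\bigr],
\end{equation*}
while the pointwise bound $\lvert z-t\rvert \le M_{xy}(z)$ yields $\int_{A_{a,x,y}} \log\lvert z-t\rvert\,\ed\sigma(t) \le \sigma(A_{a,x,y})\log M_{xy}(z)$. Summing over cells and replacing $w_2$ by $w_2(c_{xy})$ up to a uniform $o(1)$ error using its continuity on $K_\epsilon$ gives
\begin{equation*}
J_a(z) \ge J(z) - \sum_{x,y} \sigma(A_{a,x,y})\,w_2(c_{xy})\,f(u_{xy},v_{xy}) + o(1)
\end{equation*}
uniformly for $z\in K_\epsilon$.

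Next I show that the error term $\Sigma_a(z) \coloneqq \sum_{x,y} \sigma(A_{a,x,y}) f(u_{xy},v_{xy})$ tends to $0$ uniformly in $z$. Fix $\eta > 0$; by the decay statement in \cref{lem:logintegralbound} there is $R > 0$ with $f(u,v) < \eta$ for $\sqrt{u^2+v^2} > R$, so the \emph{far} cells contribute at most $\eta \sigma(K) = \eta$. The \emph{near} cells lie in a ball of radius $(R+\tfrac{1}{\sqrt2})a$ about $z$, whose $\sigma$-mass is at most $g((R+\tfrac{1}{\sqrt2})a) \to 0$ by \cref{lem:smallballmeasure}; combined with the uniform bound on $f$ this also contributes $o(1)$. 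Hence $\Sigma_a(z) \to 0$ uniformly and $J_a(z) \ge J(z) - o(1)$ uniformly for $z \in K_\epsilon$.

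Finally, to pass from this pointwise bound to the proposition, given $z \in \support(\sigma_a)$ the cell containing $z$ has positive $\sigma$-measure, so I can pick $t\in A_{a,x^a,y^a}\cap\support(\sigma)$ with $\lvert z-t\rvert \le \sqrt{2}a$, giving $J(t) \ge \inf_{z'\in\support(\sigma)} J(z')$. It remains to show $J(z) \ge J(t) - o(1)$ uniformly in such pairs, which I would establish by splitting $J(z) - J(t) = \int[\log\lvert z-\tau\rvert - \log\lvert t-\tau\rvert]w_2(\tau)\,\ed\sigma(\tau)$ at some radius $\rho = \rho(a)$ around $t$: on $\{\lvert\tau-t\rvert > \rho\}$ the log-difference is bounded by $\log(1+\sqrt{2}a/\rho)$, while on $\{\lvert\tau-t\rvert\le\rho\}$ the $\sigma$-mass is controlled via \cref{lem:smallballmeasure}. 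I expect this last step to be the main technical obstacle, because \cref{lem:smallballmeasure} yields only $g(r)\to 0$ without a rate; a careful choice of $\rho$ together with the non-atomicity of $\sigma$ (which excludes the trivial case where the right-hand side of the proposition equals $-\infty$) should nonetheless drive the near contribution to $o(1)$, giving the infimum inequality and thus the proposition.
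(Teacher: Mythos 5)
Your treatment of the discretization error (the $J_a$ versus $J$ comparison via the function $f$ of \cref{lem:logintegralbound}, with the fixed cutoff $\eta$ to show $\Sigma_a(z)\to 0$ uniformly) is correct and essentially the same as the paper's, and your version with $\eta$ is arguably cleaner than fixing a radius $R$ in the $z$-scale and sending $R\to 0$ at the end.

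The obstacle you flag in the final step, however, is a genuine gap, and neither non-atomicity nor a "careful choice of $\rho$" closes it. With $t$ an \emph{arbitrary} point of $A_{a,x^a,y^a}\cap\support(\sigma)$, you need to bound $J(t)-J(z)=\int\log\frac{\lvert t-\tau\rvert}{\lvert z-\tau\rvert}w_2(\tau)\ed\sigma(\tau)$ from above, and on the near set the integrand $\log\bigl(1+\frac{\lvert t-z\rvert}{\lvert z-\tau\rvert}\bigr)$ is unbounded: nothing prevents $\support(\sigma)$ from containing points $\tau$ much closer to $z$ than $t$ is (possibly in an adjacent cell), which makes the ratio $\lvert t-z\rvert/\lvert z-\tau\rvert$ arbitrarily large. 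Since \cref{lem:smallballmeasure} gives $g(r)\to 0$ with no rate, "small mass times unbounded integrand" cannot be driven to $o(1)$ by tuning $\rho$, and non-atomicity of $\sigma$ only excludes the literal $-\infty$ case without controlling this term. The missing ingredient is precisely the paper's choice: take $\tilde z$ to be a \emph{nearest} point of $\support(\sigma)$ to $z$ (it still satisfies $\lvert z-\tilde z\rvert\le\sqrt{2}a$ since the cell meets $\support\sigma$). Then $\lvert z-\tau\rvert\ge\lvert z-\tilde z\rvert$ for every $\tau\in\support(\sigma)$, so the integrand is $\le\log 2$ uniformly on the near ball $\ball{R}{z}$, giving a near contribution $\le g(R)\log 2$, which vanishes as $R\to 0$ by \cref{lem:smallballmeasure}; the far part is $\le\frac{\sqrt{2}}{\ln 2}\frac{a}{R}\to 0$ as $a\to 0$. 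With this one change your argument matches the paper's proof and goes through.
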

\begin{proof}
Let $R>0$ be arbitrary and $\epsilon>0$ such that $K_\epsilon\subseteq O$ and $a\le\epsilon/\sqrt{2}$ so that $\support(\sigma_a)\subseteq K_\epsilon$. Since $K_\epsilon$ is compact, $\log w_1$ and $\log w_2$ are uniformly continuous on $K_\epsilon$.

For an arbitrary $z\in\support(\sigma_a)$ let $\tilde{z}$ be one of the closest points to $z$ in $\support(\sigma)$. Then
\begin{multline}\label{eq:coarsegrainedintegraldifference}
\int_K\log\lvert \tilde{z}-t\rvert w_1(\tilde{z})w_2(t)\ed\sigma(t)-\int_O\log\lvert z-t\rvert w_1(z)w_2(t)\ed\sigma_a(t)  \\
\begin{aligned}
 & =
\log w_1(\tilde{z})-\log w_1(z)  \\
 & + \int_K\log\lvert \tilde{z}-t\rvert\ed\sigma(t)-\int_K\log\lvert z-t\rvert\ed\sigma(t)  \\
 & + \int_K\log\lvert z-t\rvert\ed\sigma(t)-\int_O\log\lvert z-t\rvert\ed\sigma_a(t)  \\
 & +\int_K \log w_2(t)\ed\sigma(t)-\int_O \log w_2(t)\ed\sigma_a(t).
\end{aligned}
\end{multline}
Since $\lvert\tilde{z}-z\rvert<\sqrt{2}a$ and $\log w_1$ is uniformly continuous, the first difference vanishes uniformly in $z$ as $a\to 0$.

The second difference can be bounded as
\begin{multline}
\int_K\log\lvert \tilde{z}-t\rvert\ed\sigma(t)-\int_K\log\lvert z-t\rvert\ed\sigma(t)  \\
\begin{aligned}
 & = \int_K\log\frac{\lvert \tilde{z}-z+z-t\rvert}{\lvert z-t\rvert}\ed\sigma(t)  \\
 & \le \int_K\log\left(1+\frac{\lvert \tilde{z}-z\rvert}{\lvert z-t\rvert}\right)\ed\sigma(t)  \\
 & = \int_{K\cap\ball{R}{z}}\log\left(1+\frac{\lvert \tilde{z}-z\rvert}{\lvert z-t\rvert}\right)\ed\sigma(t)+\int_{K\setminus\ball{R}{z}}\log\left(1+\frac{\lvert \tilde{z}-z\rvert}{\lvert z-t\rvert}\right)\ed\sigma(t)  \\
 & \le \sigma(\ball{R}{z})+\frac{1}{\ln 2}\int_{K\setminus\ball{R}{z}}\frac{\lvert \tilde{z}-z\rvert}{\lvert z-t\rvert}\ed\sigma(t)  \\
 & \le g(R)+\frac{\sqrt{2}}{\ln 2}\frac{a}{R},
\end{aligned}
\end{multline}
in the second inequality using that $\lvert \tilde{z}-z\rvert\le\lvert z-t\rvert$ holds by the choice of $\tilde{z}$, and $\ln(1+x)\le x$.

For the third difference in \eqref{eq:coarsegrainedintegraldifference} we use
\begin{equation}
\begin{split}
& \int_K\log\lvert z-t\rvert\ed\sigma(t)-\int_O\log\lvert z-t\rvert\ed\sigma_a(t)  \\
 & = \sum_{x,y\in\integers}\left[\int_{A_{a,x,y}}\log\lvert z-t\rvert\ed\sigma(t)-\int_{A_{a,x,y}}\log\lvert z-t\rvert\ed\sigma_a(t)\right]  \\
 & = \sum_{x,y\in\integers}\sigma(A_{a,x,y})\left[\max_{t\in A_{a,x,y}}\log\lvert z-t\rvert-\int_{A_{a,x,y}}\log\lvert z-t\rvert\ed\sigma_a(t)\right]  \\
 & = \sum_{x,y\in\integers}\sigma(A_{a,x,y})\left[\max_{t\in A_{a,x,y}}\log\lvert z-t\rvert-\frac{1}{\sigma(A_{a,x,y})}\int_{A_{a,x,y}}\log\lvert z-t\rvert\ed\sigma_a(t)\right]  \\
\end{split}
\end{equation}
In each term we make the substitution $t=(x+\xi)a+(y+\eta)ai$ so that $A_{a,x,y}$ is parametrized by $\xi,\eta\in[-\frac{1}{2},\frac{1}{2}]$. Then
\begin{equation}
\begin{split}
& \max_{t\in A_{a,x,y}}\log\lvert z-t\rvert-\frac{1}{\sigma(A_{a,x,y})}\int_{A_{a,x,y}}\log\lvert z-t\rvert\ed\sigma_a(t)  \\
 & = \max_{\xi,\eta\in[-\frac{1}{2},\frac{1}{2}]}\log\lvert z-((x+\xi)a+(y+\eta)ai)\rvert  \\ &\quad-\int_{-\frac{1}{2}}^{\frac{1}{2}}\int_{-\frac{1}{2}}^{\frac{1}{2}}\log\lvert z-((x+\xi)a+(y+\eta)ai)\rvert\ed\xi\ed\eta  \\
 & = \log\sqrt{\left(\left\lvert\frac{\Re z}{a}-x\right\rvert+\frac{1}{2}\right)^2+\left(\left\lvert\frac{\Im z}{a}-y\right\rvert+\frac{1}{2}\right)^2}  \\ &\quad-\int_{-\frac{1}{2}}^{\frac{1}{2}}\int_{-\frac{1}{2}}^{\frac{1}{2}}\log\sqrt{\left(\frac{\Re z}{a}-x-\xi\right)^2+\left(\frac{\Im z}{a}-y-\eta\right)^2}\ed\xi\ed\eta  \\
 & = f\left(\frac{\Re z}{a}-x,\frac{\Im z}{a}-y\right).
\end{split}
\end{equation}

We split the sum over $x$ and $y$ into terms with $\lvert z-(x+yi)a\rvert<R$ and $\lvert z-(x+yi)a\rvert\ge R$.
\begin{equation}
\begin{split}
\sum_{x,y\in\integers}\sigma(A_{a,x,y})f\left(\frac{\Re z}{a}-x,\frac{\Im z}{a}-y\right)
 & = \sum_{\substack{x,y\in\integers  \\  \lvert z-(x+yi)a\rvert<R}}\sigma(A_{a,x,y})f\left(\frac{\Re z}{a}-x,\frac{\Im z}{a}-y\right)  \\  &\quad+\sum_{\substack{x,y\in\integers  \\  \lvert z-(x+yi)a\rvert\ge R}}\sigma(A_{a,x,y})f\left(\frac{\Re z}{a}-x,\frac{\Im z}{a}-y\right)  \\
 & \le \sigma(\ball{R+a/\sqrt{2}}{z})\norm{f}+\sup_{\substack{u,v\in\reals  \\  u^2+v^2\ge \frac{R^2}{a^2}}}f(u,v)  \\
 & \le g(R+a/\sqrt{2})\norm{f}+\sup_{\substack{u,v\in\reals  \\  u^2+v^2\ge \frac{R^2}{a^2}}}f(u,v)
\end{split}
\end{equation}

The fourth difference in \eqref{eq:coarsegrainedintegraldifference} does not depend on $z$ and can be bounded as
\begin{equation}
\begin{split}
\int_K\log w_2(t)\ed\sigma(t)-\int_O\log w_2(t)\ed\sigma_a(t)
 & = \sum_{x,y\in\integers}\left[\int_{A_{a,x,y}} \log w_2(t)\ed\sigma(t)-\int_{A_{a,x,y}} \log w_2(t)\ed\sigma_a(t)\right]  \\
 &\le \sum_{x,y\in\integers}\sigma(A_{a,x,y})\left[\max_{t\in A_{a,x,y}}\log w_2(t)-\min_{t\in A_{a,x,y}}\log w_2(t)\right]  \\
 &\le \max_{x,y\in\integers}\left[\max_{t\in A_{a,x,y}}\log w_2(t)-\min_{t\in A_{a,x,y}}\log w_2(t)\right].
\end{split}
\end{equation}
Since the diameter of $A_{a,x,y}$ is $\sqrt{2}a$ and $\log w_2$ is uniformly continuous on $K_\epsilon$, the upper bound vanishes as $a\to 0$.

Combining the bounds, we find in the limit that
\begin{multline}
\liminf_{a\to 0}\inf_{z\in\support(\sigma_a)}\int_O\log\lvert z-t\rvert w_1(z)w_2(t)\ed\sigma_a(t)  \\
\ge\inf_{z\in\support(\sigma_a)}\int_K\log\lvert \tilde{z}(z)-t\rvert w_1(\tilde{z}(z))w_2(t)\ed\sigma(t)  \\ -\left[g(R)+\lim_{\rho\to R+}g(\rho)\norm{f}+\liminf_{a\to 0}\sup_{\substack{u,v\in\reals  \\  u^2+v^2\ge \frac{R^2}{a^2}}}f(u,v)\right]\\
\ge\inf_{\tilde{z}\in\support(\sigma)}\int_K\log\lvert \tilde{z}-t\rvert w_1(z)w_2(t)\ed\sigma(t)  -\left[g(R)+\lim_{\rho\to R+}g(\rho)\norm{f}\right].
\end{multline}
In the second inequality we used that for the map $\tilde{z}: \support(\sigma_a)\to\support(\sigma)$ we have $\Im(\tilde{z}) \subseteq \support(\sigma)$ and also that by \cref{lem:logintegralbound} the last term in the square brackets vansihes as $a\to 0$.
Finally, in the last line the term in square brackets vanishes as $R\to 0$ by \cref{lem:smallballmeasure}. 
\end{proof}

\subsubsection{Discrete approximation}

In this section we show that the integral with respect to the coarse-grained measure $\sigma_a$ is a lower bound on $\delta_{K_\epsilon,w_1,w_2}$ when $a$ is sufficiently small (so that $\support\sigma_a\subseteq K_\epsilon$). To this end, we approximate $\sigma_a$ with the normalized counting measure of a set of points distributed on square lattices within each subset $A_{a,x,y}$, as illustrated in \cref{fig:discretization}. These points will be used to bound the maximum over $z_0,\dots,z_n$ in \eqref{eq:deltaKw1w2def} from below.

With $a>0$ fixed and $N\in\naturals$, let $t_{x,y}=\lceil\sqrt{\sigma(A_{a,x,y})N}\rceil$ for each $x,y\in\integers$, and consider the following $N\le\sum_{x,y\in\integers}t_{x,y}^2$ complex numbers:
\begin{equation}
z^{(N)}_{x,y,i,j}=\left(x-\frac{1}{2}+\frac{i-\frac{1}{2}}{t_{x,y}}\right)a+\left(y-\frac{1}{2}+\frac{j-\frac{1}{2}}{t_{x,y}}\right)ai,
\end{equation}
where $x,y\in\integers$, $i,j\in[t_{x,y}]$. Note that $z^{(N)}_{x,y,i,j}\in\support(\sigma_a)$.

The next lemma relates the value of the integral over a lattice square around $z^{(N)}_{x,y,i,j}$ to the corresponding term in \eqref{eq:deltaKw1w2def}.
\begin{lemma}\label{lem:logintegralonsmallsquare}
Let $w\in\complexes$.
\begin{enumerate}
\item If $w\neq 0$, then
\begin{equation}
\int_{-\frac{1}{2}}^{\frac{1}{2}}\int_{-\frac{1}{2}}^{\frac{1}{2}}\log\left\lvert w-\frac{a}{t_{x,y}}(\xi+\eta i)\right\rvert \ed\xi\ed\eta\le\log\lvert w\rvert+\frac{1}{12\ln 2}\frac{a^2}{t_{x,y}^2}\frac{1}{\lvert w\rvert^2}.
\end{equation}
\item\label{it:logintegralonsmallsquare2} If $w=0$ and $\frac{a}{t_{x,y}}\le\sqrt{2}$, then
\begin{equation}
\int_{-\frac{1}{2}}^{\frac{1}{2}}\int_{-\frac{1}{2}}^{\frac{1}{2}}\log\left\lvert w-\frac{a}{t_{x,y}}(\xi+\eta i)\right\rvert \ed\xi\ed\eta\le 0.
\end{equation}
\end{enumerate}
\end{lemma}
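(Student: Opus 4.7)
The plan is to prove both parts by Jensen's inequality applied to the concave function $\log$ on $|w - \alpha u|^2$, where $\alpha = a/t_{x,y}$ and $u = \xi + i\eta$. The first step is the second-moment computation
\[
\int_{-1/2}^{1/2}\int_{-1/2}^{1/2} |w - \alpha u|^2 \, d\xi \, d\eta = |w|^2 + \frac{\alpha^2}{6},
\]
which I would obtain by expanding $|w - \alpha u|^2 = |w|^2 - 2\alpha\Re(\bar{w}u) + \alpha^2|u|^2$: the linear term vanishes by odd symmetry in $\xi$ and $\eta$, and $\int\int (\xi^2+\eta^2) \, d\xi d\eta = 2\int_{-1/2}^{1/2}\xi^2 \, d\xi = 1/6$.

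For part (i), Jensen's inequality applied to $\tfrac{1}{2}\log$ composed with $|w - \alpha u|^2$ yields
\[
\int\int \log|w-\alpha u| \, d\xi d\eta \le \frac{1}{2}\log\!\left(|w|^2 + \frac{\alpha^2}{6}\right) = \log|w| + \frac{1}{2}\log\!\left(1 + \frac{\alpha^2}{6|w|^2}\right),
\]
and the elementary estimate $\log(1+x) \le x/\ln 2$ (appropriate for base-two logarithms, as indicated by the $1/\ln 2$ factor in the statement) turns the right-hand side into $\log|w| + \frac{\alpha^2}{12\ln 2 \cdot |w|^2}$. Substituting $\alpha = a/t_{x,y}$ gives the claimed bound.

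For part (ii), the cleanest route bypasses Jensen entirely: on the square $[-1/2, 1/2]^2$ one has $|u| \le 1/\sqrt{2}$, so the hypothesis $\alpha \le \sqrt{2}$ forces $|\alpha u| \le 1$ pointwise, whence $\log|{-\alpha u}| = \log(\alpha|u|) \le 0$ everywhere on the square. The integral is well defined as an improper integral since $\log|u|$ is locally integrable at the origin, and the integral of a non-positive function is non-positive. The only substantive bookkeeping is checking that the numerical constant in part (i) comes out correctly: the factors $\tfrac{1}{2}$ (from Jensen applied to $|w-\alpha u|^2$), $\tfrac{1}{6}$ (from the second moment), and $\tfrac{1}{\ln 2}$ (from linearizing the logarithm) combine to exactly $\tfrac{1}{12\ln 2}$, so there is no essential obstacle beyond this arithmetic.
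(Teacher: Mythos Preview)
Your proof is correct and essentially matches the paper's argument: part (ii) is identical, and for part (i) both proofs rely on the concavity of $\log$ together with the second-moment computation $\int\!\!\int(\xi^2+\eta^2)\,d\xi\,d\eta=1/6$. The only cosmetic difference is that you apply Jensen's inequality globally and then linearize $\log(1+x)\le x/\ln 2$ once at the end, whereas the paper linearizes pointwise inside the integral before integrating; both routes yield exactly the constant $\tfrac{1}{12\ln 2}$.
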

\begin{proof}
In the first part, we use that $\log(1+u)\le\frac{1}{\ln 2}u$ to bound the integrand as follows:
\begin{multline}
\int_{-\frac{1}{2}}^{\frac{1}{2}}\int_{-\frac{1}{2}}^{\frac{1}{2}}\log\left\lvert w-\frac{a}{t_{x,y}}(\xi+\eta i)\right\rvert \ed\xi\ed\eta  \\
\begin{aligned}
 & = \log\lvert w\rvert+\frac{1}{2}\int_{-\frac{1}{2}}^{\frac{1}{2}}\int_{-\frac{1}{2}}^{\frac{1}{2}}\log\frac{(\Re w-\frac{a}{t_{x,y}}\xi)^2+(\Im w-\frac{a}{t_{x,y}}\eta)^2}{(\Re w)^2+(\Im w)^2} \ed\xi\ed\eta  \\
 & = \log\lvert w\rvert+\frac{1}{2}\int_{-\frac{1}{2}}^{\frac{1}{2}}\int_{-\frac{1}{2}}^{\frac{1}{2}}\log\left(1+\frac{-2\frac{a}{t_{x,y}}\xi\Re w-2\frac{a}{t_{x,y}}\eta\Im w+\frac{a^2}{t_{x,y}^2}(\xi^2+\eta^2)}{(\Re w)^2+(\Im w)^2}\right) \ed\xi\ed\eta  \\
 & \le \log\lvert w\rvert+\frac{1}{2\ln 2}\int_{-\frac{1}{2}}^{\frac{1}{2}}\int_{-\frac{1}{2}}^{\frac{1}{2}}\frac{-2\frac{a}{t_{x,y}}\xi\Re w-2\frac{a}{t_{x,y}}\eta\Im w+\frac{a^2}{t_{x,y}^2}(\xi^2+\eta^2)}{\lvert w\rvert^2} \ed\xi\ed\eta  \\
 & = \log\lvert w\rvert+\frac{1}{12\ln 2}\frac{a^2}{t_{x,y}^2}\frac{1}{\lvert w\rvert^2}.
\end{aligned}
\end{multline}

In \ref{it:logintegralonsmallsquare2}, the argument of the logarithm can be bounded as
\begin{equation}
\left\lvert\frac{a}{t_{x,y}}(\xi+\eta i)\right\rvert\le\frac{a}{t_{x,y}}\sqrt{\left(\frac{1}{2}\right)^2+\left(\frac{1}{2}\right)^2}\le 1,
\end{equation}
therefore the integrand is negative.
\end{proof}

\begin{proposition}\label{prop:deltalowerbound}
For all $a,\epsilon>0$ satisfying $\support(\sigma_a)\subseteq K_\epsilon$, the inequality
\begin{equation}
\delta_{K_\epsilon,w_1,w_2}\ge\inf_{z\in\support(\sigma_a)}\int_O\log\lvert z-t\rvert w_1(z)w_2(t)\ed\sigma_a(t).
\end{equation}
holds.
\end{proposition}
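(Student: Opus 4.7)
The plan is to exhibit, for each sufficiently large $N$, an explicit tuple of points witnessing the desired lower bound on $\delta_{K_\epsilon,w_1,w_2}$: the complete family of $N' := \sum_{x,y} t_{x,y}^2$ lattice points $z^{(N)}_{x,y,i,j}$ from the discretization above. All of them lie in $\support\sigma_a \subseteq K_\epsilon$ by the hypothesis $a \leq \epsilon/\sqrt{2}$. The ceiling choice $t_{x,y} = \lceil\sqrt{\sigma(A_{a,x,y}) N}\rceil$ gives $t_{x,y}^2 = N\sigma(A_{a,x,y}) + O(\sqrt{N})$ uniformly over the finitely many cells meeting $\support\sigma_a$, so $N' = N(1+o(1))$ and the normalized counting measure on the lattice points converges weakly to $\sigma_a$ as $N\to\infty$.

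The central estimate is, for any fixed lattice point $z_i = z^{(N)}_{x_0,y_0,i_0,j_0}$,
\begin{equation}
\frac{1}{N'-1}\sum_{l\neq i}\log|z_i - z_l| \geq \int_O\log|z_i - t|\,d\sigma_a(t) - o(1),
\end{equation}
uniformly in $z_i$. To prove it, interpret each lattice point $z^{(N)}_{x,y,i',j'}$ as the center of its sub-cell of side $a/t_{x,y}$, and apply \cref{lem:logintegralonsmallsquare} with $w = z_i - z^{(N)}_{x,y,i',j'}$. Part (i) gives, for $(x,y,i',j') \neq (x_0,y_0,i_0,j_0)$,
\begin{equation}
\log|z_i - z^{(N)}_{x,y,i',j'}| \geq \frac{t_{x,y}^2}{a^2}\int_{\text{sub-cell around }z^{(N)}_{x,y,i',j'}}\log|z_i - t|\,dt - \frac{1}{12\ln 2}\frac{a^2}{t_{x,y}^2|z_i - z^{(N)}_{x,y,i',j'}|^2},
\end{equation}
while part (ii) bounds directly the contribution of the omitted singular sub-cell around $z_i$ itself. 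Summing over $(i',j')$ within each cell and then over cells, and using $t_{x,y}^2 = N\sigma(A_{a,x,y}) + O(\sqrt{N})$ together with uniform boundedness of the cell integrals (finitely many cells contribute and the log singularity is integrable), the leading term equals $N\int\log|z_i - t|\,d\sigma_a(t) + O(\sqrt{N})$.

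The sum of correction terms decomposes into three $o(N)$ contributions: cells at combinatorial distance $\geq 2$ from $(x_0,y_0)$ contribute $O(1)$ in total because $|z_i - z^{(N)}_{x,y,i',j'}| \geq a$ there; the diagonal and adjacent cells each contribute $O(\log N)$ via the standard 2D lattice sum $\sum_{(k,l)\neq(0,0)} 1/(k^2+l^2) = O(\log t_{x,y})$ over a box of side $t_{x,y}$; and the omitted singular sub-cell around $z_i$ contributes an additional $O(\log N)$ by direct computation of the logarithmic singularity integral. Combining this with the trivial term $\log w_1(z_i)$ and the convergence $\frac{1}{N'-1}\sum_{l\neq i}\log w_2(z_l) \to \int\log w_2\,d\sigma_a$ (continuity of $\log w_2$ on the compact set $K_\epsilon$ and weak convergence of the empirical measure) yields, for every lattice point $z_i$,
\begin{equation}
\frac{1}{N'-1}\log\prod_{l\neq i}|z_i - z_l|w_1(z_i)w_2(z_l) \geq \int_O\log\bigl(|z_i - t|w_1(z_i)w_2(t)\bigr)\,d\sigma_a(t) - o(1).
\end{equation}
Taking the minimum over all lattice points (which lie in $\support\sigma_a$) and then $\limsup$ along the subsequence $n = N'-1$ as $N\to\infty$ gives the proposition. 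The main obstacle is controlling the corrections in the adjacent cells, where $|z_i - z^{(N)}|$ can be as small as $O(a/t_{x,y})$; this is resolved by the $O(\log t_{x,y})$ gain in the 2D lattice sum estimate.
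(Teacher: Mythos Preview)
Your proof is correct and follows essentially the same approach as the paper: both use the lattice discretization $z^{(N)}_{x,y,i,j}$, invoke \cref{lem:logintegralonsmallsquare} to compare each $\log|z_i-z_l|$ with the integral over the corresponding sub-cell, and show that the accumulated corrections are $o(N)$. The only difference is in how the correction sum is organized---you decompose by combinatorial cell distance and appeal to the lattice estimate $\sum_{(k,l)\neq 0}(k^2+l^2)^{-1}=O(\log t)$, whereas the paper introduces a radius cutoff $R=N^{-1/4}$ to separate nearby from distant points.
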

\begin{proof}
Let $N$ be so large that $\frac{a}{t_{x,y}}\le\sqrt{2}$ for all $x,y\in\integers$ satisfying $\sigma(A_{a,x,y})\neq 0$. Let $x_0,y_0\in\integers$ and $i_0,j_0\in[t_{x,y}]$ such that
\begin{equation}
\prod_{\substack{x,y,i,j  \\  (x,y,i,j)\neq(x_0,y_0,i_0,j_0)}}\lvert z^{(N)}_{x_0,y_0,i_0,j_0}-z^{(N)}_{x,y,i,j}\rvert w_1(z^{(N)}_{x_0,y_0,i_0,j_0})w_2(z^{(N)}_{x,y,i,j})
\end{equation}
is minimal. Then
\begin{multline}\label{eq:integralchunks}
\inf_{z\in\support(\sigma_a)}\int_O\log\lvert z-t\rvert w_1(z)w_2(t)\ed\sigma_a(t)  \\
\begin{aligned}
 & \le \int_O\log\lvert z^{(N)}_{x_0,y_0,i_0,j_0}-t\rvert w_1(z^{(N)}_{x_0,y_0,i_0,j_0})w_2(t)\ed\sigma_a(t)  \\
 & = \log w_1(z^{(N)}_{x_0,y_0,i_0,j_0})  \\ &\quad +\sum_{x,y\in\integers}\sigma(A_{a,x,y})\sum_{i,j=1}^{t_{x,y}}\frac{1}{t_{x,y}^2}\int_{-\frac{1}{2}}^{\frac{1}{2}}\int_{-\frac{1}{2}}^{\frac{1}{2}}\log\left\lvert z^{(N)}_{x_0,y_0,i_0,j_0}-\left(z^{(N)}_{x,y,i,j}+\frac{a}{t_{x,y}}(\xi+\eta i)\right)\right\rvert \ed\xi\ed\eta  \\ &\quad +\sum_{x,y\in\integers}\sigma(A_{a,x,y})\sum_{i,j=1}^{t_{x,y}}\frac{1}{t_{x,y}^2}\int_{-\frac{1}{2}}^{\frac{1}{2}}\int_{-\frac{1}{2}}^{\frac{1}{2}}\log w_2\left(z^{(N)}_{x,y,i,j}+\frac{a}{t_{x,y}}(\xi+\eta i)\right)\ed\xi\ed\eta
\end{aligned}
\end{multline}

The second term in \eqref{eq:integralchunks} can be bounded as
\begin{multline}
\sum_{x,y\in\integers}\sigma(A_{a,x,y})\sum_{i,j=1}^{t_{x,y}}\frac{1}{t_{x,y}^2}\int_{-\frac{1}{2}}^{\frac{1}{2}}\int_{-\frac{1}{2}}^{\frac{1}{2}}\log\left\lvert z^{(N)}_{x_0,y_0,i_0,j_0}-\left(z^{(N)}_{x,y,i,j}+\frac{a}{t_{x,y}}(\xi+\eta i)\right)\right\rvert \ed\xi\ed\eta  \\
  \le \sum_{\substack{x,y,i,j  \\  (x,y,i,j)\neq(x_0,y_0,i_0,j_0)}}\left[\sigma(A_{a,x,y})\frac{1}{t_{x,y}^2}\log\lvert z^{(N)}_{x_0,y_0,i_0,j_0}-z^{(N)}_{x,y,i,j}\rvert+\frac{\sigma(A_{a,x,y})}{12\ln 2}\frac{a^2}{t_{x,y}^4}\frac{1}{\lvert z^{(N)}_{x_0,y_0,i_0,j_0}-z^{(N)}_{x,y,i,j}\rvert^2}\right]  \\
  \le \sum_{\substack{x,y,i,j  \\  (x,y,i,j)\neq(x_0,y_0,i_0,j_0)}}\left[\frac{1}{N}\log\lvert z^{(N)}_{x_0,y_0,i_0,j_0}-z^{(N)}_{x,y,i,j}\rvert+\frac{\sigma(A_{a,x,y})}{12\ln 2}\frac{a^2}{t_{x,y}^4}\frac{1}{\lvert z^{(N)}_{x_0,y_0,i_0,j_0}-z^{(N)}_{x,y,i,j}\rvert^2}\right]
\end{multline}
With $R=n^{-1/4}$ we split the sum of the inverse quadratic terms as follows, using that the smallest distance of any pair of points is $\min_{x,y}\frac{a}{t_{x,y}}$:
\begin{multline}
\sum_{\substack{x,y,i,j  \\  (x,y,i,j)\neq(x_0,y_0,i_0,j_0)}}\frac{\sigma(A_{a,x,y})}{t_{x,y}^4}\frac{1}{\lvert z^{(N)}_{x_0,y_0,i_0,j_0}-z^{(N)}_{x,y,i,j}\rvert^2}  \\
\begin{aligned}
 & = \sum_{\substack{x,y,i,j  \\  0<\lvert z^{(N)}_{x_0,y_0,i_0,j_0}-z^{(N)}_{x,y,i,j}\rvert<R}}\frac{\sigma(A_{a,x,y})}{t_{x,y}^4}\frac{1}{\lvert z^{(N)}_{x_0,y_0,i_0,j_0}-z^{(N)}_{x,y,i,j}\rvert^2}  \\  &\quad+\sum_{\substack{x,y,i,j  \\  R\le\lvert z^{(N)}_{x_0,y_0,i_0,j_0}-z^{(N)}_{x,y,i,j}\rvert}}\frac{\sigma(A_{a,x,y})}{t_{x,y}^4}\frac{1}{\lvert z^{(N)}_{x_0,y_0,i_0,j_0}-z^{(N)}_{x,y,i,j}\rvert^2}  \\
 & \le \frac{\max_{x,y}t_{x,y}^2}{a^2}\sum_{\substack{x,y,i,j  \\  0<\lvert z^{(N)}_{x_0,y_0,i_0,j_0}-z^{(N)}_{x,y,i,j}\rvert<R}}\frac{\sigma(A_{a,x,y})}{t_{x,y}^4}+\frac{1}{R^2}\sum_{\substack{x,y,i,j  \\  R\le\lvert z^{(N)}_{x_0,y_0,i_0,j_0}-z^{(N)}_{x,y,i,j}\rvert}}\frac{\sigma(A_{a,x,y})}{t_{x,y}^4}  \\
 & \le \frac{\max_{x,y}t_{x,y}^2}{a^2}\sum_{\substack{x,y,i,j  \\  0<\lvert z^{(N)}_{x_0,y_0,i_0,j_0}-z^{(N)}_{x,y,i,j}\rvert<R}}\frac{\sigma(A_{a,x,y})}{t_{x,y}^4}+\frac{1}{R^2}\sum_{x,y}\frac{\sigma(A_{a,x,y})}{t_{x,y}^2}  \\
\end{aligned}
\end{multline}
Around each point with a given $x,y$, we can place a square of side length $\frac{a}{t_{x,y}}$, and these squares are disjoint. The total area of these squares with centers within distance $R$ from a given point is at most $\left(R+\frac{1}{\sqrt{2}}\frac{a}{t_{x,y}}\right)^2\pi$, therefore the number of such points is at most
\begin{equation}
\frac{t_{x,y}^2}{a^2}\left(R+\frac{1}{\sqrt{2}}\frac{a}{t_{x,y}}\right)^2\pi.
\end{equation}
Using this,
\begin{multline}
\frac{\max_{x,y}t_{x,y}^2}{a^2}\sum_{\substack{x,y,i,j  \\  0<\lvert z^{(N)}_{x_0,y_0,i_0,j_0}-z^{(N)}_{x,y,i,j}\rvert<R}}\frac{\sigma(A_{a,x,y})}{t_{x,y}^4}  \\
\begin{aligned}
 & \le\frac{\max_{x,y}t_{x,y}^2}{a^2}\sum_{x,y}\frac{\sigma(A_{a,x,y})}{t_{x,y}^4}\frac{t_{x,y}^2}{a^2}\left(R+\frac{1}{\sqrt{2}}\frac{a}{t_{x,y}}\right)^2\pi  \\
 & \le\frac{\max_{x,y}t_{x,y}^2}{a^4\min_{x,y}t_{x,y}^2}\left(R+\frac{1}{\sqrt{2}}\frac{a}{\min_{x,y}t_{x,y}}\right)^2\pi,
\end{aligned}
\end{multline}
which is $O(n^{-1/2})$.

The last term in \eqref{eq:integralchunks} satisfies
\begin{multline}
\sum_{x,y\in\integers}\sigma(A_{a,x,y})\sum_{i,j=1}^{t_{x,y}}\frac{1}{t_{x,y}^2}\int_{-\frac{1}{2}}^{\frac{1}{2}}\int_{-\frac{1}{2}}^{\frac{1}{2}}\log w_2\left(z^{(N)}_{x,y,i,j}+\frac{a}{t_{x,y}}(\xi+\eta i)\right)\ed\xi\ed\eta  \\
\le \frac{1}{N}\sum_{x,y\in\integers}\sum_{i,j=1}^{t_{x,y}}\log w_2(z^{(N)}_{x,y,i,j})+\max\setbuild{\lvert\log w_2(z)-\log w_2(z')\rvert}{z, z' \in \complexes: \lvert z-z'\rvert\le\frac{1}{\sqrt{2}}\frac{a}{\min_{x,y}t_{x,y}}},
\end{multline}
and the maximum on the right hand side vanishes as $N\to\infty$ because $\log w_2$ is uniformly continuous and $\min_{x,y}t_{x,y}\to\infty$.

Combining the estimates, we obtain
\begin{equation}
\begin{split}
\delta_{K_\epsilon,w_1,w_2}
 & \ge \limsup_{N\to\infty}\frac{1}{n-1}\log\prod_{\substack{x,y,i,j  \\  (x,y,i,j)\neq(x_0,y_0,i_0,j_0)}}\lvert z^{(N)}_{x_0,y_0,i_0,j_0}-z^{(N)}_{x,y,i,j}\rvert w_1(z^{(N)}_{x_0,y_0,i_0,j_0})w_2(z^{(N)}_{x,y,i,j})  \\
 & = \limsup_{N\to\infty}\frac{N}{n-1}\frac{1}{N}\log\prod_{\substack{x,y,i,j  \\  (x,y,i,j)\neq(x_0,y_0,i_0,j_0)}}\lvert z^{(N)}_{x_0,y_0,i_0,j_0}-z^{(N)}_{x,y,i,j}\rvert w_1(z^{(N)}_{x_0,y_0,i_0,j_0})w_2(z^{(N)}_{x,y,i,j})  \\
 &+ \frac{N}{n-1}\frac{1}{N}\log w_2(z^{(N)}_{x_0,y_0,i_0,j_0}) - \frac{N}{n-1}\frac{1}{N}\log w_2(z^{(N)}_{x_0,y_0,i_0,j_0}) \\
 & \ge \inf_{z\in\support(\sigma_a)}\int_O\log\lvert z-t\rvert w_1(z)w_2(t)\ed\sigma_a(t)
 -\limsup_{N\to\infty}\frac{N}{n-1}\frac{1}{N}\log w_2(z^{(N)}_{x_0,y_0,i_0,j_0})\\
 &=\inf_{z\in\support(\sigma_a)}\int_O\log\lvert z-t\rvert w_1(z)w_2(t)\ed\sigma_a(t).
\end{split}
\end{equation}
\end{proof}

Since for any $\epsilon>0$ we have $\support(\sigma_a)\subseteq K_\epsilon$ for all sufficiently small $a>0$, \cref{prop:coarsegrainedlowerbound,prop:deltalowerbound} imply that
\begin{equation}
    \delta_{K_\epsilon,w_1,w_2}\ge\inf_{z\in\support(\sigma)}\int_K\log\lvert z-t\rvert w_1(z)w_2(t)\ed\sigma(t).
\end{equation}

\subsection{Upper bound}

We turn to the proof of the first inequality in \eqref{eq:supintbetweendelta}. Following the idea of the proof of \cite[Theorem III.1.3]{saff2013logarithmic} (more precisely, the inequality $\delta_w\le c_w$), we consider the normalized counting measures corresponding to the maximizing sets of points $z_0,\dots,z_n$, and use the weak limit along a subsequence to bound the supremum over the probability measures from below. As the integrand is unbounded, we introduce a cutoff $M>0$ inside the logarithm to ensure convergence of the normalized sums to the integral along the subsequence.

For $M>0$ and $\zeta\in\complexes$ we consider the functions
\begin{equation}
h_{M,\zeta}(t)=\log\max\{M,\lvert\zeta-t\rvert\}.
\end{equation}
\begin{lemma}\label{lem:truncatedloguniformconvergence}
If $\zeta_n\to \zeta$ then $h_{M,\zeta_n}(t)\to h_{M,\zeta}(t)$ holds uniformly in $t\in\complexes$.
\end{lemma}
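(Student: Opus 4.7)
The plan is to exhibit a uniform Lipschitz bound of the form $|h_{M,\zeta_n}(t)-h_{M,\zeta}(t)|\le C(M)\,|\zeta_n-\zeta|$, from which uniform convergence in $t$ is immediate. The key observation is that the cutoff $\max\{M,\cdot\}$ pushes the argument of the logarithm into the region $[M,\infty)$, where $\log$ is Lipschitz, so we never have to deal with the singularity at $0$.

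Concretely, I would proceed in three short steps. First, the reverse triangle inequality gives
\begin{equation*}
\bigl\lvert |\zeta_n-t|-|\zeta-t|\bigr\rvert \le |\zeta_n-\zeta|
\end{equation*}
uniformly in $t$. Second, the map $\phi:[0,\infty)\to[M,\infty)$ defined by $\phi(s)=\max\{M,s\}$ is $1$-Lipschitz (it is the composition of two $1$-Lipschitz maps, or checked directly by cases). Combining these,
\begin{equation*}
\bigl\lvert \max\{M,|\zeta_n-t|\}-\max\{M,|\zeta-t|\}\bigr\rvert \le |\zeta_n-\zeta|,
\end{equation*}
uniformly in $t\in\complexes$. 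Third, both arguments lie in $[M,\infty)$, and on this interval the function $\log$ has Lipschitz constant $\tfrac{1}{M\ln 2}$ (via the mean value theorem applied to its derivative $\tfrac{1}{x\ln 2}\le\tfrac{1}{M\ln 2}$). Chaining the two Lipschitz estimates yields
\begin{equation*}
\bigl\lvert h_{M,\zeta_n}(t)-h_{M,\zeta}(t)\bigr\rvert \le \frac{1}{M\ln 2}\,|\zeta_n-\zeta|
\end{equation*}
for every $t\in\complexes$. Since the right-hand side does not depend on $t$ and tends to $0$ as $\zeta_n\to\zeta$, uniform convergence follows.

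There is no real obstacle here; the only subtlety is recognizing that the cutoff is precisely what makes the whole argument routine, whereas without it the logarithm would be unbounded near $t=\zeta$ and no uniform modulus of continuity would exist. It is worth noting that the constant $\tfrac{1}{M\ln 2}$ blows up as $M\to 0$, which is consistent with this failure in the uncutoff case.
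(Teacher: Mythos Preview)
Your proof is correct and follows essentially the same route as the paper: both arguments combine the $1$-Lipschitz property of $s\mapsto\max\{M,s\}$ with the reverse triangle inequality, and then exploit that the cutoff forces the logarithm's argument into $[M,\infty)$ where a Lipschitz-type bound is available. The only cosmetic difference is that the paper writes the difference as $\log\!\bigl(1+\tfrac{|\zeta-\zeta_n|}{M}\bigr)$ via a ratio manipulation, whereas you linearize directly via the mean value theorem to obtain $\tfrac{1}{M\ln 2}|\zeta_n-\zeta|$; since $\log(1+x)\le x/\ln 2$, your bound is just the first-order version of theirs.
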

\begin{proof}
We bound the difference $h_{M,\zeta}(t)-h_{M,\zeta_n}(t)$ as
\begin{equation}
\begin{split}
\log\max\{M,\lvert\zeta-t\rvert\} - \log\max\{M,\lvert\zeta_n-t\rvert\}
 & = \log\frac{\max\{M,\lvert\zeta-t\rvert\}}{\max\{M,\lvert\zeta_n-t\rvert\}}  \\
 & = \log\left(1+\frac{\max\{M,\lvert\zeta-t\rvert\}-\max\{M,\lvert\zeta_n-t\rvert\}}{\max\{M,\lvert\zeta_n-t\rvert\}}\right)  \\
 & \le \log\left(1+\frac{\max\{0,\lvert\zeta-t\rvert-\lvert\zeta_n-t\rvert\}}{\max\{M,\lvert\zeta_n-t\rvert\}}\right)  \\
 & \le \log\left(1+\frac{\lvert\zeta-\zeta_n\rvert}{M}\right),
\end{split}
\end{equation}
in the first inequality using that $x\mapsto\max\{M,x\}$ has Lipschitz constant $1$ and increases, and in the last step using the triangle inequality. Since the roles of $\zeta$ and $\zeta_n$ are symmetric, the same bound holds on the absolute value. The bound is independent of $t$ and vanishes as $\lvert\zeta-\zeta_n\rvert\to 0$ for any fixed $M>0$.
\end{proof}

\begin{proposition}\label{prop:deltaupperbound}
\begin{equation}\label{eq:deltaupperbound}
\delta_{K,w_1,w_2}\le\sup_{\sigma\in\distributions(K)}\inf_{z\in\support(\sigma)}\int_K\log\lvert z-t\rvert w_1(z)w_2(t)\ed\sigma(t)
\end{equation}
\end{proposition}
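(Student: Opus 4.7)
The plan is to extract a limiting empirical measure $\sigma$ from nearly optimal configurations and leverage the support of $\sigma$ to produce the bound pointwise. Choose a subsequence $(n_k)$ along which the defining $\limsup$ is attained, and select configurations $z_0^{(n_k)},\dots,z_{n_k}^{(n_k)}\in K$ that are nearly optimal for the corresponding max--min in \eqref{eq:deltaKw1w2def}. Form the normalized empirical measures $\mu_{n_k}=\frac{1}{n_k+1}\sum_{l=0}^{n_k}\delta_{z_l^{(n_k)}}\in\distributions(K)$. Since $K$ is compact, Prokhorov's theorem and passage to a further subsequence (still denoted by $n_k$) yield $\mu_{n_k}\rightharpoonup\sigma$ for some $\sigma\in\distributions(K)$. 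The measures $\mu_{n_k}^{(i)}:=\frac{1}{n_k}\sum_{l\ne i}\delta_{z_l^{(n_k)}}$ differ from $\mu_{n_k}$ by $O(1/n_k)$ in total variation and therefore also converge weakly to $\sigma$, uniformly in the choice of $i$.

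Fix any $z\in\support(\sigma)$. Every open ball around $z$ has positive $\sigma$-measure, so by the Portmanteau theorem it has positive $\mu_{n_k}$-measure for all large $k$; a diagonal argument then produces indices $i_{n_k}$ such that $z_{i_{n_k}}^{(n_k)}\to z$. Replacing the minimum in \eqref{eq:deltaKw1w2def} by the value at this particular index,
$$\delta_{K,w_1,w_2}\le\limsup_k\left[\log w_1(z_{i_{n_k}}^{(n_k)})+\int_K\log w_2(t)\ed\mu_{n_k}^{(i_{n_k})}(t)+\int_K\log|z_{i_{n_k}}^{(n_k)}-t|\ed\mu_{n_k}^{(i_{n_k})}(t)\right].$$
Continuity of $\log w_1$ sends the first term to $\log w_1(z)$, and weak convergence of $\mu_{n_k}^{(i_{n_k})}$ combined with continuity and boundedness of $\log w_2$ on $K$ sends the second to $\int_K\log w_2(t)\ed\sigma(t)$.

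The third integral is the main obstacle, since $\log|z-t|$ is unbounded below and discontinuous on the diagonal, so weak convergence alone does not control it. I dominate it by the truncation $h_{M,\zeta}(t)=\log\max\{M,|\zeta-t|\}\ge\log|\zeta-t|$, which is continuous and bounded on $K$. By \cref{lem:truncatedloguniformconvergence} the functions $h_{M,z_{i_{n_k}}^{(n_k)}}$ converge uniformly to $h_{M,z}$, so combining this with weak convergence of $\mu_{n_k}^{(i_{n_k})}$ and boundedness of $h_{M,z}$ yields
$$\limsup_k\int_K\log|z_{i_{n_k}}^{(n_k)}-t|\ed\mu_{n_k}^{(i_{n_k})}(t)\le\int_K h_{M,z}(t)\ed\sigma(t).$$
Letting $M\to 0^+$, the family $h_{M,z}$ decreases monotonically to $\log|z-t|$ while remaining uniformly bounded above on $K$, so monotone convergence (allowing the limit to take the value $-\infty$) gives $\int_K h_{M,z}(t)\ed\sigma(t)\to\int_K\log|z-t|\ed\sigma(t)$. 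Summing the three limits and using that $\sigma$ is a probability measure gives $\delta_{K,w_1,w_2}\le\int_K\log|z-t|w_1(z)w_2(t)\ed\sigma(t)$. Since $z\in\support(\sigma)$ was arbitrary, taking the infimum over $z$ and then the supremum over $\sigma$ establishes \eqref{eq:deltaupperbound}.
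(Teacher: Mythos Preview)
Your proof is correct and follows essentially the same approach as the paper: extract a weak limit $\sigma$ of the empirical measures of (near-)optimal configurations, pick $z\in\support(\sigma)$ and an index sequence $i_{n_k}$ with $z_{i_{n_k}}^{(n_k)}\to z$, bound the singular logarithm by the truncation $h_{M,\zeta}$, invoke \cref{lem:truncatedloguniformconvergence} together with weak convergence, and finally let $M\to 0$. The only cosmetic differences are that you package the leave-one-out sums as the measures $\mu_{n_k}^{(i)}$ (the paper instead carries the $\frac{n+1}{n}$ factor and subtracts the omitted term explicitly) and that you use monotone rather than dominated convergence for the $M\to 0$ step.
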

\begin{proof}
For every $n$ choose an $n+1$-element subset $\mathcal{Z}^{(n)}=\{z^{(n)}_0,z^{(n)}_1,\dots,z^{(n)}_n\}\subseteq K$ maximizing
\begin{equation}
\min_{0\le i\le n}\prod_{\substack{l=0 \\ l\neq i}}^n\left\lvert z^{(n)}_i-z^{(n)}_l\right\rvert w_1(z^{(n)}_i)w_2(z^{(n)}_l).
\end{equation}
Let $\nu_n$ be the normalized counting measure of $\mathcal{Z}^{(n)}$ on $K$ (i.e., $\nu_n(E)=\frac{1}{n+1}\lvert E\cap\mathcal{Z}^{(n)}\rvert$ for every Borel subset $E\subseteq K$).
Choose a subsequence along which
\begin{equation}
\frac{1}{n}\log\min_{0\le i\le n}\prod_{\substack{l=0 \\ l\neq i}}^n\lvert z^{(n)}_i-z^{(n)}_l\rvert w_1(z^{(n)}_i)w_2(z^{(n)}_l)\to\delta_{K,w_1,w_2}.
\end{equation}
With respect to the weak topology, $\distributions(K)$ is compact, therefore we may choose a further subsequence such that $\nu_n\to\nu$ for some $\nu\in\distributions(K)$. From now on, limits in $n$ will always be understood along this subsequence.

Let $z\in\support\nu$ and choose indices $i_n\in\{0,\dots,n\}$ such that $z^{(n)}_{i_n}\to z$. Such indices exist, since for every open neighbourhood $U$ of $z$ we have $0<\nu(U)\le\liminf_{n\to\infty}\nu_n(U)$, which implies $\mathcal{Z}^{(n)}\cap U\neq\emptyset$ for all sufficiently large $n$.

We have for every $M$ that
\begin{equation}
\begin{split}
\delta_{K,w_1,w_2}
 & = \lim_{n\to\infty}\frac{1}{n}\log\min_{0\le i\le n}\prod_{\substack{l=0 \\ l\neq i}}^n\lvert z^{(n)}_i-z^{(n)}_l\rvert w_1(z^{(n)}_i)w_2(z^{(n)}_l)  \\
 & \le \lim_{n\to\infty}\frac{1}{n}\log\prod_{\substack{l=0 \\ l\neq i_n}}^n\lvert z^{(n)}_{i_n}-z^{(n)}_l\rvert w_1(z^{(n)}_{i_n})w_2(z^{(n)}_l)  \\
 & = \lim_{n\to\infty}\frac{1}{n}\sum_{\substack{l=0 \\ l\neq i_n}}^n\left[\log\lvert z^{(n)}_{i_n}-z^{(n)}_l\rvert+\log w_1(z^{(n)}_{i_n})+\log w_2(z^{(n)}_l)\right]  \\
 & \le \lim_{n\to\infty}\Bigg[\log w_1(z^{(n)}_{i_n})+\frac{1}{n}\sum_{\substack{l=0 \\ l\neq i_n}}^n\left(h_{M,z^{(n)}_{i_n}}(z^{(n)}_l)+\log w_2(z^{(n)}_l)\right)\Bigg]  \\
 & = \log w_1(z)+\lim_{n\to\infty}\bigg[\frac{n+1}{n}\int_K\left(h_{M,z^{(n)}_{i_n}}(t)+\log w_2(t)\right)\ed\nu_n(t) \\ &\quad-\frac{1}{n}\log M-\frac{1}{n}\log w_2(z^{(n)}_i)\bigg]  \\
 & \le \log w_1(z)+\lim_{n\to\infty}\bigg[\int_K\left(h_{M,z}(t)+\log w_2(t)\right)\ed\nu_n(t) \\ &\quad+\norm[K]{h_{M,z^{(n)}_{i_n}}(t)-h_{M,z}(t)}\bigg]  \\
 & = \log w_1(z)+\int_K\left(h_{M,z}(t)+\log w_2(t)\right)\ed\nu(t).
\end{split}
\end{equation}
The first equality is due to the choice of $\mathcal{Z}^{(n)}$, in the first inequality we use that the minimum of the products is less than any particular product, the second inequality uses $\log\lvert z^{(n)}_{i_n}-z^{(n)}_l\rvert\le h_{M,z^{(n)}_{i_n}}(z^{(n)}_l)$, the third equality follows by rewriting the sum as an integral with respect to the counting measure and accounting for the omitted term $l=i_n$ (using $h_{M,z^{(n)}_{i_n}}(z^{(n)}_{i_n})=\log M$), in the third inequality we replace $h_{M,z^{(n)}_{i_n}}$ with $h_{M,z}$ in the integral and estimate the error by the supremum norm (on $K$) of the difference times the total measure $\nu_n(K)=1$, and the last equality uses that $\nu_n\to\nu$ weakly, the integrand is continuous, and \cref{lem:truncatedloguniformconvergence}.

This inequality is true for all $M>0$, therefore
\begin{equation}
\begin{split}
\delta_{K,w_1,w_2}
 & \le \lim_{M\to 0}\log w_1(z)+\int_K\left(h_{M,z}(t)+\log w_2(t)\right)\ed\nu(t)  \\
 & = \log w_1(z)+\int_K\left(\log\lvert z-t\rvert+\log w_2(t)\right)\ed\nu(t),
\end{split}
\end{equation}
where the equality follows from the dominated convergence theorem. Finally, the bound holds for all $z\in\support\nu$, therefore
\begin{equation}
\begin{split}
\delta_{K,w_1,w_2}
 & \le \inf_{z\in\support(\nu)}\int_K\log\lvert z-t\rvert w_1(z)w_2(t)\ed\nu(t)  \\
 & \le \sup_{\sigma\in\distributions(K)}\inf_{z\in\support(\sigma)}\int_K\log\lvert z-t\rvert w_1(z)w_2(t)\ed\sigma(t)
\end{split}
\end{equation}
\end{proof}

\subsection{Combining the bounds}

Combining the previously shown bounds, we can now finish the proof of the promised equality between the integral formula and the upper bound on the error exponent arising from the optimized probability bound. 
\begin{corollary}\label{cor:roptwithCompact}
    The optimal achievable error exponent in \eqref{eq:roptdiscretecompact} can be written as
    \begin{equation}\label{eq:optrintegralcompact}
    \begin{split}
        r_\textnormal{opt}&=
        2\inf_{K\subseteq\complexes} \inf_{\sigma\in\mathcal{P}(K)}\sup_{z\in \supp \sigma} 
        \log \norm{A_1(z)\otimes\dots\otimes A_k(z)}+
        e\int\log\frac{\lvert t\rvert}{\lvert z-t\rvert}\ed\sigma(t),
    \end{split}
    \end{equation}
    where the first infimum is taken over the compact subsets of $\complexes$.
\end{corollary}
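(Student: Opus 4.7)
The plan is to recognize the optimization inside \eqref{eq:roptdiscretecompact} as essentially $-2e\,\delta_{K,w_1,w_2}$ for a specific pair of weights, and then apply the double inequality \eqref{eq:supintbetweendelta}. Concretely, set
\begin{equation}
w_1(z)=\norm{A_1(z)\otimes\dots\otimes A_k(z)}^{-1/e}, \qquad w_2(t)=\frac{1}{\lvert t\rvert},
\end{equation}
both continuous and positive on the open set $O=\complexes\setminus\{0\}$ (using that $A_1(z)\otimes\dots\otimes A_k(z)$ is nowhere zero by \cref{rem:degenInterestingcases}). A direct check shows that each factor in the $i$-th term of the sum in \eqref{eq:roptdiscretecompact} equals $\left(\lvert z_i-z_l\rvert w_1(z_i)w_2(z_l)\right)^{-2}$, so the $i$-th term is $\prod_{l\neq i}\left(\lvert z_i-z_l\rvert w_1(z_i)w_2(z_l)\right)^{-2}$.

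Next I would replace the sum over $i$ by its maximum at the price of a polynomial factor $ne+1$, which is absorbed by $\frac{1}{n}\log$ and the $\limsup$; after taking the reciprocal inside the $\max$ and then a $\min$, the expression inside the $\frac{1}{n}\log$ becomes the reciprocal of
\begin{equation}
\max_{z_0,\ldots,z_{ne}\in K}\min_{0\le i\le ne}\prod_{\substack{l=0\\l\neq i}}^{ne}\lvert z_i-z_l\rvert w_1(z_i)w_2(z_l),
\end{equation}
squared. Substituting $n'=ne$, the $\limsup$ in $n$ is $e$ times the $\limsup$ in $n'$, so the contribution from each fixed $K$ is exactly $-2e\,\delta_{K,w_1,w_2}$. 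Inserting the weights $w_1,w_2$ into the middle term of \eqref{eq:supintbetweendelta} and using the identity $\log\frac{1}{\lvert z-t\rvert w_1(z)w_2(t)}=\frac{1}{e}\log\norm{A(z)}+\log\frac{\lvert t\rvert}{\lvert z-t\rvert}$, the integral formula with prefactor $2$ appears (a $\sup_\sigma\inf_z$ of $-F$ becomes $\inf_\sigma\sup_z$ of $F$).

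Finally I would close the argument on both sides by taking the infimum over compact $K\subseteq\complexes$. The upper inequality of \eqref{eq:supintbetweendelta} gives $-2e\,\delta_{K_\epsilon,w_1,w_2}\le 2\inf_{\sigma\in\distributions(K)}\sup_{z\in\supp\sigma}[\log\norm{A(z)}+e\int\log\frac{\lvert t\rvert}{\lvert z-t\rvert}\ed\sigma(t)]$; taking $\inf_K$ on both sides (and noting that any $K_\epsilon$ is itself a compact set allowed in the outer infimum) shows $r_\textnormal{opt}$ is bounded above by the right-hand side of \eqref{eq:optrintegralcompact}. The lower inequality of \eqref{eq:supintbetweendelta} symmetrically yields the reverse. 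The main obstacle here is purely bookkeeping: one has to track the factor $e$ correctly when changing variables $n'=ne$, verify that the polynomial factor $ne+1$ does not contribute to the $\limsup$ (so the sum and max are interchangeable), and then match the orders of $\inf_K$ and $\inf_\sigma/\sup_z$, which is straightforward since every weak inequality holding for all $K$ automatically passes to the infimum.
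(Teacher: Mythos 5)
Your proof takes essentially the same approach as the paper's: choose the weights $w_1(z)=\norm{A_1(z)\otimes\dots\otimes A_k(z)}^{-1/e}$ and $w_2(t)=\lvert t\rvert^{-1}$, replace the sum by a maximum at the cost of a polynomial factor $(ne+1)$, recognize the result as $-2e\,\delta_{K,w_1,w_2}$ after reparametrizing $n'=ne$, and apply the double inequality \eqref{eq:supintbetweendelta} together with $\inf_K$. The paper compresses the reciprocal-and-reparametrization bookkeeping into a direct citation of \cref{prop:deltalowerbound,prop:deltaupperbound} with the same weight choice, but the underlying identification is identical.
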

\begin{proof}
    First we use that the sum of polynomial number of non-negative terms can be estimated asymptotically with the maximum over the terms in the sum, i.e., 
    \begin{equation}
    \begin{split}
        \max_i \prod_{l\neq i} \norm{\sqrt[e]{A_1(z_i)\otimes\dots\otimes A_k(z_i)}}^{2}\frac{\lvert z_l\rvert^2}{\lvert z_i-z_l\rvert^2}&\le
        \sum_{i=1}^{ne+1} \prod_{l\neq i} \norm{\sqrt[e]{A_1(z_i)\otimes\dots\otimes A_k(z_i)}}^{2}\frac{\lvert z_l\rvert^2}{\lvert z_i-z_l\rvert^2}\\&\le
        (ne+1)\max_i \prod_{l\neq i} \norm{\sqrt[e]{A_1(z_i)\otimes\dots\otimes A_k(z_i)}}^{2}\frac{\lvert z_l\rvert^2}{\lvert z_i-z_l\rvert^2}.
    \end{split}
    \end{equation}
    The factor $(ne+1)$ vanishes after we take the logarithm and the limit.
    Now we have to deal with the singularities of the operator. If there is a $z\in \{z_i\}_i\subseteq K$ for which \\$A_1(z)\otimes\dots\otimes A_k(z)$ has a singularity, then the maximum over $i$ becomes infinity. Then it is enough to consider such $K$ sets where $A_1(z)\otimes\dots\otimes A_k(z)$ is continuous. As an immediate consequence of \cref{prop:deltalowerbound,prop:deltaupperbound} with $w_1(z)=\norm{\sqrt[e]{A_1(z)\otimes\dots\otimes A_k(z)}}^{-1}$ and $w_2(t)=\lvert t\rvert^{-1}$ we get
    \begin{equation}
            \begin{split}
        \limsup_{n\to\infty}\frac{1}{n}\log&\inf_{\{z_i\}_i\subseteq K}\sum_{i=1}^{ne+1} \prod_{l\neq i} \norm{\sqrt[e]{A_1(z)\otimes\dots\otimes A_k(z)}}^{2}\frac{\lvert z_l\rvert^2}{\lvert z_i-z_l\rvert^2}\\
        =
        2 &\inf_{\sigma\in\mathcal{P}(K)}\sup_{z\in \supp \sigma} 
        \log \norm{A_1(z)\otimes\dots\otimes A_k(z)}+
        e\int\log\frac{\lvert t\rvert}{\lvert z-t\rvert}\ed\sigma(t).
    \end{split}
    \end{equation}
\end{proof}

The next proposition concludes our final expression for the optimal error exponent.
\begin{proposition}\label{prop:rFinalForm}
Assume that $\norm{A_1(z)\otimes\dots\otimes A_k(z)}\to\infty$ whenever $\lvert z\rvert\to 0$ and $\norm{A_1(z)\otimes\dots\otimes A_k(z)}$ is nowhere zero (see \cref{rem:degenInterestingcases}).
Then
\begin{equation}
    \begin{split}
    &\inf_{K\subseteq\complexes}\inf_{\sigma\in\mathcal{P}(K)}\sup_{z\in \supp \sigma} \left[
        \log \norm{A_1(z)\otimes\dots\otimes A_k(z)}+
        e\int\log\frac{\lvert t\rvert}{\lvert z-t\rvert}\ed\sigma(t)\right]
        \\=
        &\inf_{\sigma\in\mathcal{P}(\complexes)}\sup_{z\in \supp \sigma} \left[
        \log \norm{A_1(z)\otimes\dots\otimes A_k(z)}+
        e\int_\complexes
        \log\frac{\lvert t\rvert}{\lvert z-t\rvert}\ed\sigma(t)\right],
    \end{split}
\end{equation}
where the first infimum is taken over the compact subsets of $\complexes$.

\end{proposition}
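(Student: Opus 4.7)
The direction $\ge$ (i.e., the right-hand side is at most the left-hand side) is immediate, since every compactly supported probability measure is already an element of $\mathcal{P}(\complexes)$, so the infimum on the right is taken over a strictly larger family than the iterated infimum on the left. The real content is the reverse inequality, which says that every $\sigma\in\mathcal{P}(\complexes)$ with finite value of the functional $F(\sigma):=\sup_{z\in\supp\sigma}\bigl[\log\norm{A(z)}+e\int\log\frac{\lvert t\rvert}{\lvert z-t\rvert}\ed\sigma(t)\bigr]$ can be approximated by compactly supported probability measures without an asymptotic increase in $F$.

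I would take such a $\sigma$ (if no finite-value $\sigma$ exists, both sides are $+\infty$ and there is nothing to prove) and consider the truncations $\sigma_{R,\delta}=\sigma|_{K_{R,\delta}}/\sigma(K_{R,\delta})$, where $K_{R,\delta}=\{z\in\complexes:\delta\le\lvert z\rvert\le R\}$ is compact. The hypothesis $\norm{A(z)}\to\infty$ as $\lvert z\rvert\to 0$ together with $F(\sigma)<\infty$ forces $\sigma$ to be atomless on $\supp\sigma$, to satisfy $\sigma(\{0\})=0$, and, by testing $\lvert z\rvert\to\infty$ against the lower bound $\log\norm{A(z)}\ge e\log\lvert z\rvert+\log C$ from \cref{prop:countableKtoC}, to have $\int\log\lvert t\rvert\ed\sigma(t)$ finite. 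Consequently $p_{R,\delta}:=\sigma(K_{R,\delta})\to 1$ as $R\to\infty$ and $\delta\to 0^+$. Writing $I_{\sigma_{R,\delta}}(z)-I_\sigma(z)=J_{R,\delta}(z)(1/p_{R,\delta}-1)-T_{R,\delta}(z)$, where $J_{R,\delta}$ and $T_{R,\delta}$ are the integrals of $\log\frac{\lvert t\rvert}{\lvert z-t\rvert}$ over $K_{R,\delta}$ and its complement respectively, one gets
\begin{equation*}
F(\sigma_{R,\delta})\le F(\sigma)+e\sup_{z\in\supp\sigma_{R,\delta}}\bigl[J_{R,\delta}(z)(1/p_{R,\delta}-1)-T_{R,\delta}(z)\bigr],
\end{equation*}
and it suffices to show that the supremum on the right tends to $0$ along a suitable sequence $(R_n,\delta_n)\to(\infty,0^+)$.

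Since $I_\sigma$ is bounded above on $\supp\sigma$ (immediate from $F(\sigma)<\infty$ and $\log\norm{A}\ge\log b$), the first term in the bracket is of order $1-p_{R,\delta}\to 0$ as long as $J_{R,\delta}$ is uniformly bounded above, which in turn follows once we have a uniform lower bound on $T_{R,\delta}$. Bounding $T_{R,\delta}$ from below is straightforward: the contributions from $\lvert t\rvert\ge 2R$ and $\lvert t\rvert\le\delta/2$ are each controlled by terms that vanish as $R\to\infty$ and $\delta\to 0^+$ with $\delta$ chosen small enough relative to $R$ (using $\sigma(\lvert t\rvert\ge 2R)\to 0$ and the dominated convergence argument $\int_{\lvert t\rvert\le\delta/2}\lvert\log\lvert t\rvert\rvert\ed\sigma(t)\to 0$).

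The main technical obstacle is providing an analogous upper bound on $T_{R,\delta}$, which reduces to the integrals over the thin annuli $\{R<\lvert t\rvert<2R\}$ and $\{\delta/2<\lvert t\rvert<\delta\}$ where the integrand $\log\frac{\lvert t\rvert}{\lvert z-t\rvert}$ can blow up. I would handle both by a Fubini/averaging argument: a direct change of variables gives
\begin{equation*}
\int_1^\infty\int_{R<\lvert t\rvert<2R}\log\tfrac{\lvert t\rvert}{\lvert t\rvert-R}\ed\sigma(t)\tfrac{\ed R}{R}=C_1<\infty,
\end{equation*}
so by the standard integral test $\liminf_{R\to\infty}\int_{R<\lvert t\rvert<2R}\log\frac{\lvert t\rvert}{\lvert t\rvert-R}\ed\sigma(t)=0$, yielding a sequence $R_n\to\infty$ along which the outer annular integral vanishes. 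An analogous identity integrated against $\ed\delta/\delta$ gives a sequence $\delta_n\to 0^+$ along which the inner annular integral vanishes. Combining all these estimates along the chosen sequences yields $F(\sigma_{R_n,\delta_n})\le F(\sigma)+o(1)$, completing the argument.
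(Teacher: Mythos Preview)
Your truncation strategy is the same as the paper's, and the overall scheme is sound, but you have overcomplicated two points and thereby created a ``main technical obstacle'' that does not exist.

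First, from $F(\sigma)<\infty$ you only extract $\sigma(\{0\})=0$, whereas the stronger conclusion $0\notin\supp\sigma$ follows immediately: if $0\in\supp\sigma$ then $z=0$ is admissible in the supremum, and since $\int\log\frac{\lvert t\rvert}{\lvert 0-t\rvert}\ed\sigma(t)=0$ while $\log\norm{A(0)}=+\infty$, one gets $F(\sigma)=+\infty$. With $0\notin\supp\sigma$ the inner truncation at $\delta$ is vacuous for small $\delta$, so only the one-sided truncation $\sigma_R=\sigma|_{\overline{B_R(0)}}/\sigma(\overline{B_R(0)})$ is needed (this is exactly what the paper does), and the inner-annulus Fubini argument can be discarded entirely.

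Second, your identification of an upper bound on $T_R$ as the main difficulty is mistaken: you only need $\sup_z\bigl[J_R(z)(1/p_R-1)-T_R(z)\bigr]\le o(1)$, and rewriting this as $(1/p_R-1)I_\sigma(z)-T_R(z)/p_R$ shows that a uniform \emph{lower} bound on $T_R$ suffices. For $\lvert t\rvert\ge R\ge\lvert z\rvert$ one has $\frac{\lvert t\rvert}{\lvert z-t\rvert}\ge\frac{\lvert t\rvert}{\lvert z\rvert+\lvert t\rvert}\ge\frac{1}{2}$, hence $T_R(z)\ge -(1-p_R)$ uniformly in $z$; combined with $I_\sigma\le(F(\sigma)-\log b)/e$ this already gives the required $o(1)$ bound. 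The outer Fubini/averaging argument is correct but unnecessary.

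The paper streamlines further by not splitting the supremum at all: it fixes $z_R\in\supp\sigma_R\subseteq\supp\sigma$ with $F(\sigma_R)\le\log\norm{A(z_R)}+eI_{\sigma_R}(z_R)+\epsilon$, evaluates $F(\sigma)\ge\log\norm{A(z_R)}+eI_\sigma(z_R)$ at that same point, and uses $\frac{\lvert t\rvert}{\lvert z_R-t\rvert}\ge\frac12$ on the tail to obtain
\[
F(\sigma)\ge p_R\bigl(F(\sigma_R)-\epsilon\bigr)+(1-p_R)\bigl(\log\norm{A(z_R)}-e\bigr),
\]
from which $\limsup_{R\to\infty}F(\sigma_R)\le F(\sigma)$ follows directly since $\log\norm{A}$ is bounded below. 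This avoids both the sup-splitting and any need to control $J_R$ or $T_R$ separately.
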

\begin{proof}
On one hand we trivially have that the right hand side is less than or equal to the left hand side. For the converse let $\sigma\in\distributions(\complexes)$ and suppose that
\begin{equation}
S(\sigma)\coloneqq\sup_{z\in\support(\sigma)}\left[\log A(z)+e\int_\complexes\log\frac{\lvert t\rvert}{\lvert z-t\rvert}\ed\sigma(t)\right]<\infty,
\end{equation}
where $A(z) \coloneqq \norm{A_1(z)\otimes\dots\otimes A_k(z)}$.

For the optimal $\sigma$ measure $0\not\in\support(\sigma)$ since the second term vanishes at $z=0$ and the first term is $\infty$, so it is enough to consider $\sigma$ measures satisfying this property.
In the following we show that for any such $\sigma$ and $\epsilon>0$ there exists a probability measure $\sigma_R$ with compact support such that $S(\sigma)+\epsilon\ge S(\sigma_R)$.

For all sufficiently large $R$ we have $\sigma(\closedball{R}{0})>0$, therefore we can define the conditional distribution
\begin{equation}
\sigma_R(E)=\frac{\sigma(E\cap\closedball{R}{0})}{\sigma(\closedball{R}{0})},
\end{equation}
which is supported on a compact subset of $\complexes\setminus\{0\}$. 
For $\epsilon>0$ choose $z_R\in\support(\sigma_R)$ such that
\begin{equation}
S(\sigma_R)-\epsilon\le \log A(z_R)+e\int_\complexes\log\frac{\lvert t\rvert}{\lvert z_R-t\rvert}\ed\sigma_R(t).
\end{equation}
If $R\le\lvert t\rvert$, then
\begin{equation}
\frac{\lvert t\rvert}{\lvert z_R-t\rvert}
 \ge \frac{\lvert t\rvert}{\lvert z_R\rvert+\lvert t\rvert}
 =  \frac{1}{\lvert \frac{z_R}{t}\rvert+1}
 \ge \frac{1}{2},
\end{equation}
since $\lvert z_R\rvert\le R$.
This implies
\begin{equation}
\begin{split}
S(\sigma)
 & \ge \log A(z_R)+e\int_\complexes\log\frac{\lvert t\rvert}{\lvert z_R-t\rvert}\ed\sigma(t)  \\
 & = \log A(z_R)+e\int_{\closedball{R}{0}}\log\frac{\lvert t\rvert}{\lvert z_R-t\rvert}\ed\sigma(t)+e\int_{\complexes\setminus\closedball{R}{0}}\log\frac{\lvert t\rvert}{\lvert z_R-t\rvert}\ed\sigma(t)  \\
 & \ge \log A(z_R)+e\sigma(\closedball{R}{0})\int_{\closedball{R}{0}}\log\frac{\lvert t\rvert}{\lvert z_R-t\rvert}\ed\sigma_R(t)-e(1-\sigma(\closedball{R}{0}))  \\
 & = \sigma(\closedball{R}{0})\left[\log A(z_R)+e\int_{\closedball{R}{0}}\log\frac{\lvert t\rvert}{\lvert z_R-t\rvert}\ed\sigma_R(t)\right]+(1-\sigma(\closedball{R}{0}))\left(\log A(z_R)-e\right)  \\
 & \ge \sigma(\closedball{R}{0})(S(\sigma_R)-\epsilon)+(1-\sigma(\closedball{R}{0}))(\log A(z_R)-e),
\end{split}
\end{equation}
therefore
\begin{equation}
\frac{1}{\sigma(\closedball{R}{0})}S(\sigma)-\frac{1-\sigma(\closedball{R}{0})}{\sigma(\closedball{R}{0})}(\log A(z_R)-e)+\epsilon\ge S(\sigma_R).
\end{equation}
Since $\lim_{R\to\infty}\sigma(\closedball{R}{0})=1$ and $\log A(z)$ is bounded from below, the limit of the left hand side is $S(\sigma)+\epsilon$. This holds for all $\epsilon>0$, therefore the converse inequality also holds.
\end{proof}

\begin{proof}[Proof of {\cref{thm:main}}]
Note that if $A_1(z)\otimes\dots\otimes A_k(z)$ contains both positive and negative powers of $z$ and it is nowhere zero, then the conditions of \cref{prop:rFinalForm} are satisfied for $\norm{A_1(z)\otimes\dots\otimes A_k(z)}$.
Then by the previous proposition the optimal error exponent takes its final form:
    \begin{equation}\label{eq:optrintegralFinal}
    \begin{split}
        r_\textnormal{opt}&=
        2 \inf_{\sigma\in\mathcal{P}(\complexes)}\sup_{z\in \supp \sigma} \left[
        \log \norm{A_1(z)\otimes\dots\otimes A_k(z)}+
        e\int_\complexes
        \log\frac{\lvert t\rvert}{\lvert z-t\rvert}\ed\sigma(t)\right].
    \end{split}
    \end{equation}
\end{proof}

\section{Bounds and special cases}\label{sec:bounds}

Although \eqref{eq:optrintegralFinal} is a single-letter upper bound on the optimal strong converse exponent, it involves an infimum over the set of probability measures, and in this generality we do not see a way to further simplify the expression. In the following we derive a simple lower bound on the infimum which, as we will see, is tight in the important special case when $\log \norm{A_1(z)\otimes\dots\otimes A_k(z)}$ is a centrally symmetric function of $z$.
\begin{lemma}\label{lem:Aconst}
\begin{equation}
        \inf_{\sigma\in\mathcal{P}(\complexes)}\sup_{z\in \supp \sigma} 
        \int\log\lvert t\rvert-\log\lvert z-t\rvert \ed\sigma(t)=0.
\end{equation}
\end{lemma}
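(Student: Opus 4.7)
The plan is to prove $\inf \leq 0$ and $\inf \geq 0$ separately, writing $\Phi_\sigma(z) = C_\sigma - g_\sigma(z)$ with $C_\sigma = \int \log\lvert t\rvert\,\ed\sigma(t) = g_\sigma(0)$ and $g_\sigma(z) = \int \log\lvert z-t\rvert\,\ed\sigma(t)$, the unsigned logarithmic potential of $\sigma$.

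For the upper bound $\inf \leq 0$, I would exhibit an explicit family achieving $\sup=0$, namely the uniform probability measures $\sigma_r$ on circles $\{\lvert z\rvert=r\}$ for $r>0$. Using the classical identity $\frac{1}{2\pi}\int_0^{2\pi}\log\lvert 1-e^{i\theta}\rvert\,\ed\theta = 0$, both $\int\log\lvert z-t\rvert\,\ed\sigma_r(t)$ and $\int\log\lvert t\rvert\,\ed\sigma_r(t)$ equal $\log r$ for every $z$ with $\lvert z\rvert=r$, so $\Phi_{\sigma_r}\equiv 0$ on $\supp\sigma_r$ and the infimum over $\sigma$ is at most $0$.

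For the lower bound $\inf\geq 0$, the starting observation is that $\Phi_\sigma(0)=\int(\log\lvert t\rvert-\log\lvert 0-t\rvert)\,\ed\sigma(t)=0$ always, so the claim is immediate when $0\in\supp\sigma$. When $0\notin\supp\sigma$, let $\Omega$ be the connected component of $\complexes\setminus\supp\sigma$ containing $0$; the potential $g_\sigma$ is harmonic on $\Omega$, and the expansion $g_\sigma(z)=\log\lvert z\rvert+O(1/\lvert z\rvert)$ as $\lvert z\rvert\to\infty$ forces $g_\sigma\to+\infty$ at infinity, so infinity serves as a boundary point with value $+\infty$ in the unbounded case. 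The minimum principle for the harmonic function $g_\sigma$ on $\Omega$ then yields $g_\sigma(0)\geq\inf_{\partial\Omega}g_\sigma\geq\inf_{\supp\sigma}g_\sigma$, which rephrases as $\sup_{z\in\supp\sigma}\Phi_\sigma(z)\geq 0$.

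The main obstacle is the rigorous application of the minimum principle, since $g_\sigma$ is only upper semi-continuous at points of $\partial\Omega\subseteq\supp\sigma$ and the inner boundary values $\liminf_{z\to\zeta,\,z\in\Omega}g_\sigma(z)$ may strictly exceed $g_\sigma(\zeta)$, so the classical statement does not apply verbatim. I plan to circumvent this by approximation: convolve $\sigma$ with a smooth compactly supported mollifier of shrinking support to obtain $\sigma_n=\sigma\ast\rho_n$ whose potentials $g_{\sigma_n}$ are continuous on $\complexes$, apply the classical minimum principle to each $\sigma_n$ to get $g_{\sigma_n}(0)\geq\inf_{\supp\sigma_n}g_{\sigma_n}$, and then pass to the limit $n\to\infty$ via weak convergence together with the lower semi-continuity of the logarithmic energy to recover the corresponding bound for $\sigma$.
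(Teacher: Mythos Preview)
Your approach is essentially the same as the paper's: both exhibit the uniform measure on a circle for the upper bound, and for the lower bound both observe that $\Phi_\sigma(0)=0$ and invoke a maximum/minimum principle to pass from $0$ to $\supp\sigma$. The paper's version is the one-line statement that $\sup_{z\in\supp\sigma}\Phi_\sigma(z)=\sup_{z\in\complexes}\Phi_\sigma(z)$ (harmonicity of $\Phi_\sigma$ on $\complexes\setminus\supp\sigma$ together with $\Phi_\sigma(z)\to-\infty$ as $\lvert z\rvert\to\infty$), after which substituting $z=0$ gives the bound immediately; this is exactly your component-$\Omega$ argument, stated globally.

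The boundary-regularity issue you flag is real (and the paper elides it), but your mollification workaround is heavier than needed and the limit step you sketch is not quite right. After mollifying you get $g_{\sigma_n}(0)\ge\inf_{\supp\sigma_n}g_{\sigma_n}$ with $g_{\sigma_n}$ continuous; passing to the limit requires controlling $\liminf_n\inf_{\supp\sigma_n}g_{\sigma_n}$ from \emph{below} by $\inf_{\supp\sigma}g_\sigma$, and ``lower semi-continuity of the logarithmic energy'' does not deliver that---the energy is a double integral, whereas here you need pointwise control of $g_{\sigma_n}$ at moving points of $\supp\sigma_n$, and $g_\sigma$ is only \emph{upper} semi-continuous, which goes the wrong way. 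The clean fix is to recognise that the statement $\sup_{\complexes}U^\sigma=\sup_{\supp\sigma}U^\sigma$ (equivalently $\sup_{\complexes}\Phi_\sigma=\sup_{\supp\sigma}\Phi_\sigma$) is precisely the Maximum Principle for logarithmic potentials (Maria's theorem; see e.g.\ Saff--Totik, \emph{Logarithmic Potentials with External Fields}, Theorem~II.3.2, or Ransford, \emph{Potential Theory in the Complex Plane}), and simply cite it.
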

\begin{proof}
First note that by \cref{lem:harmonic} and the fact that $\int\log\lvert t\rvert-\log\lvert z-t\rvert \ed\sigma(t)\to-\infty$ if $\lvert z\rvert\to\infty$, the supremum in $z$ over $\complexes$ is the same as the supremum over $\supp\sigma$. Therefore we have
  \begin{equation}\label{eq:integralLowerByZero}
  \begin{split}
        \sup_{z\in \supp \sigma}\int\log\lvert t\rvert-\log\lvert z-t\rvert\ed\sigma(t)
        & = \sup_{z\in \complexes}\int\log\lvert t\rvert-\log\lvert z-t\rvert\ed\sigma(t) \\
        & \ge \int\log\lvert t\rvert-\log\lvert t\rvert\ed\sigma(t)
          =0.      
  \end{split}
    \end{equation}

For the converse we choose the uniform probability measure over the unit circle $\unitcircle$ (any circle would do) centered at the origin. 
\begin{equation}
\begin{split}
\inf_K \inf_{\sigma\in\mathcal{P}(K)}\sup_{z\in \supp \sigma} 
        \int\log\lvert t\rvert-\log\lvert z-t\rvert \ed\sigma(t)
        &\le\sup_{z\in \supp \unitcircle}\int\log\lvert t\rvert-\log\lvert z-t\rvert \ed\sigma_\unitcircle(t)  \\
    &=\sup_{z\in \supp \unitcircle}\left[\log 1 -\log 1\right]
     =0.
\end{split}
\end{equation}
\end{proof}

\begin{corollary}\label{cor:optlowerbound}
    The optimal value of the error exponent in \cref{cor:roptwithCompact} can be lower bounded as
    \begin{equation}
        r_\textnormal{opt}\ge 2\log \min_z \norm{A_1(z)\otimes\dots\otimes A_k(z)}.
    \end{equation}
\end{corollary}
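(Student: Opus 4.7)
The plan is to start from the single-letter expression for $r_\textnormal{opt}$ established in \eqref{eq:optrintegralFinal}, and bound the infimum over $\sigma\in\distributions(\complexes)$ from below by separating the role of $\log\norm{A_1(z)\otimes\dots\otimes A_k(z)}$ from the role of the integral term. Write $A(z):=\norm{A_1(z)\otimes\dots\otimes A_k(z)}$ and $c:=\log\min_{z\in\complexes}A(z)$; note $c$ is finite by the assumption that $A(z)$ is nowhere zero and the fact that $A(z)\to\infty$ at $0$ and at infinity ensures the minimum is attained on a compact set.

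First I would observe that for every $z\in\complexes$ the pointwise bound $\log A(z)\ge c$ holds, so for any probability measure $\sigma$ and any $z\in\supp\sigma$,
\begin{equation}
\log A(z)+e\int_\complexes\log\frac{\lvert t\rvert}{\lvert z-t\rvert}\ed\sigma(t)\ge c+e\int_\complexes\log\frac{\lvert t\rvert}{\lvert z-t\rvert}\ed\sigma(t).
\end{equation}
Taking the supremum over $z\in\supp\sigma$ on both sides and using that $c$ is constant yields
\begin{equation}
\sup_{z\in\supp\sigma}\left[\log A(z)+e\int_\complexes\log\frac{\lvert t\rvert}{\lvert z-t\rvert}\ed\sigma(t)\right]\ge c+e\sup_{z\in\supp\sigma}\int_\complexes\log\frac{\lvert t\rvert}{\lvert z-t\rvert}\ed\sigma(t).
\end{equation}

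Next I would apply \cref{lem:Aconst}: the proof of that lemma in fact shows that for every individual $\sigma\in\distributions(\complexes)$ the inner supremum $\sup_{z\in\supp\sigma}\int\log\lvert t\rvert-\log\lvert z-t\rvert\ed\sigma(t)$ is at least $0$ (by extending the supremum from $\supp\sigma$ to $\complexes$ via the harmonic/subharmonic argument of \cref{lem:harmonic} and then evaluating at $z=0$). Since $e\ge 0$, multiplying by $e$ preserves the nonnegativity, so the right hand side above is at least $c$. Taking the infimum over $\sigma\in\distributions(\complexes)$ and multiplying by $2$ (per the formula for $r_\textnormal{opt}$) gives
\begin{equation}
r_\textnormal{opt}\ge 2c=2\log\min_{z\in\complexes}\norm{A_1(z)\otimes\dots\otimes A_k(z)},
\end{equation}
which is the claimed bound.

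There is no real obstacle here; the only point to be careful about is that the uniform pointwise lower bound on $\log A(z)$ is what lets us pull the constant $c$ out of the supremum, while the integral term is handled independently by \cref{lem:Aconst}. The bound is essentially the cost of taking the ``trivial'' measure that ignores the shape of $A(z)$ entirely and only pays the minimum norm.
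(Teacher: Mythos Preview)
Your proposal is correct and follows essentially the same approach as the paper: bound $\log A(z)$ below by its minimum $c$, pull the constant out, and use \cref{lem:Aconst} to handle the remaining integral term. The only (harmless) difference is that you explicitly invoke the pointwise inequality from the proof of \cref{lem:Aconst} (that for every $\sigma$ the supremum of the integral term is $\ge 0$), whereas the paper simply cites the lemma's statement and uses that the constant $c$ passes through both the $\sup_z$ and the $\inf_\sigma$.
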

\begin{proof}
  The right hand side of \eqref{eq:optrintegralFinal} can be lower bounded by using $\norm{A_1(z)\otimes\dots\otimes A_k(z)}\ge\min_z \norm{A_1(z)\otimes\dots\otimes A_k(z)}$, then the rest is the immediate consequence of \cref{lem:Aconst}.
\end{proof}

\begin{proposition}\label{prop:symmetricA}
    Let $A_1(z)\otimes\dots\otimes A_k(z)$ be centrally symmetric, i.e., $A_1(z)\otimes\dots\otimes A_k(z)=A_1(ze^{i\varphi})\otimes\dots\otimes A_k(ze^{i\varphi})$ for any $z\in\complexes$ and $\varphi\in\reals$. Then the optimal error exponent $r_\textnormal{opt}$ achieves its lower bound in $\cref{cor:optlowerbound}$. 
\end{proposition}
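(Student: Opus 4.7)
The plan is to use Corollary \ref{cor:optlowerbound} for the lower bound and, for the matching upper bound, to exhibit an explicit near-optimal measure $\sigma$ in the formula of \cref{thm:main}. The central symmetry assumption says that $\norm{A_1(z)\otimes\dots\otimes A_k(z)}$ depends only on $\lvert z\rvert$; combined with the standing hypotheses (the Laurent polynomial has both positive and negative powers and is nowhere zero), the function tends to $\infty$ as $\lvert z\rvert\to 0$ and grows at least like $\lvert z\rvert^e$ for large $\lvert z\rvert$, so it attains its minimum on some circle $\lvert z\rvert=r_0$.

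The main step is to take $\sigma$ to be the uniform probability measure on the circle $\lvert z\rvert=r_0$. Then, by central symmetry, $\log\norm{A_1(z)\otimes\dots\otimes A_k(z)}=\log\min_w\norm{A_1(w)\otimes\dots\otimes A_k(w)}$ for every $z\in\supp\sigma$. For the potential term, since $\lvert t\rvert=r_0$ on the support, it suffices to evaluate $\int\log\lvert z-t\rvert\ed\sigma(t)$ at points $z$ on the same circle. This is the classical Jensen-type identity
\begin{equation}
\frac{1}{2\pi}\int_0^{2\pi}\log\lvert z-r_0 e^{i\theta}\rvert\ed\theta=\log r_0\quad\text{whenever }\lvert z\rvert=r_0,
\end{equation}
which follows from the mean-value property applied to the harmonic function $\log\lvert \cdot -t\rvert$ (or equivalently from Jensen's formula). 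Consequently $\int\log\tfrac{\lvert t\rvert}{\lvert z-t\rvert}\ed\sigma(t)=\log r_0-\log r_0=0$ for every $z\in\supp\sigma$. This is the same computation that already appears in the proof of \cref{lem:Aconst} for the unit circle.

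Putting the pieces together, for this choice of $\sigma$ the bracket in \eqref{eq:optrintegralFinal} is identically equal to $\log\min_w\norm{A_1(w)\otimes\dots\otimes A_k(w)}$ on $\supp\sigma$, so the supremum equals the same value. Therefore
\begin{equation}
r_\textnormal{opt}\le 2\log\min_z\norm{A_1(z)\otimes\dots\otimes A_k(z)},
\end{equation}
which, together with \cref{cor:optlowerbound}, yields equality. There is no real obstacle here: the only nontrivial input is the circular mean-value identity, which is classical, and the existence of a minimizing radius, which follows from the assumptions already invoked in \cref{prop:countableKtoC,prop:rFinalForm}.
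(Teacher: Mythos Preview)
Your argument is correct and follows essentially the same route as the paper's own proof: choose $\sigma$ to be the uniform measure on a circle of radius $r_0$ where the (radially dependent) norm attains its minimum, use the circular mean-value identity to see that the potential term $\int\log\frac{\lvert t\rvert}{\lvert z-t\rvert}\ed\sigma(t)$ vanishes on $\supp\sigma$, and combine with \cref{cor:optlowerbound}. The only difference is cosmetic---you spell out why a minimizing radius exists and cite the Jensen/mean-value identity explicitly, whereas the paper simply evaluates the integral and refers back to \cref{lem:Aconst}.
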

\begin{proof}
    By choosing $\sigma$ to be the uniform measure on the circle with radius $r$ around the origin, we get
    \begin{equation}
    \begin{split}
        \int\log\lvert t\rvert-\log\lvert z-t\rvert \ed\sigma(t)
        & = \log r - \sup_{z\in\supp\sigma}\int\log\lvert z-re^{i\varphi}\rvert \ed\varphi  \\
        & = \log r - \sup_{z\in\supp\sigma}\log r =0.
    \end{split}
    \end{equation}

On the other hand, $\norm{A_1(z)\otimes\dots\otimes A_k(z)}$ only depends on $\lvert z\rvert$ so one can chose $r$ such that $\min_z \norm{A_1(z)\otimes\dots\otimes A_k(z)}= 
\norm{A_1(re^{i\varphi})\otimes\dots\otimes A_k(re^{i\varphi})}$.
\end{proof}

\begin{example}\label{ex:GHZtoW0}
Let $W=\frac{1}{\sqrt{k}}\left(\ket{100\dots 0} + \ket{010\dots 0} +\dots+ \ket{0\dots 01}\right)$ be the $k$-party W state and write
    \begin{multline}
    \frac{1}{z}\sqrt{\frac{2}{k}}
\left( \begin{matrix}
1 & -1 \\
z & 0
\end{matrix} \right)
\otimes \cdots \otimes
\left( \begin{matrix}
1 & -1 \\
z & 0
\end{matrix} \right)
\frac{1}{\sqrt{2}}(\ket{00\ldots0} - \ket{11\ldots1})  \\ = \frac{1}{\sqrt{k}}(\ket{10\dots}+\ket{010\dots}+\dots+\ket{0\dots 01}) + O(z).
\end{multline}
The state $\frac{1}{\sqrt{2}}(\ket{00\dots 0} -\ket{11\dots 1})$ is LU-equivalent to the generalized GHZ state, therefore this equation can be seen as a degeneration between GHZ and W. The norm of the operator on the left hand side is 
\begin{equation}
    \norm{A_1(z)\otimes\dots\otimes A_k(z)}= \sqrt{\frac{2}{k}}\frac{(4+\lvert z\rvert^2)^{\frac{k}{2}}}{\lvert z\rvert}.
\end{equation}
This is clearly centrally symmetric so \cref{prop:symmetricA} can be applied, and we get the error exponent
\begin{equation}
    r_{\GHZ\to\W}=\inf_{d>0}2\log \frac{(4+d^2)^{\frac{k}{2}}}{d}
    +\log\frac{2}{k}
    = 2(k - 1)+  k\log k - (k-1)\log(k - 1)
    +\log\frac{2}{k}
\end{equation}
where the minimum is achieved by $d=\sqrt{\frac{4}{k-1}}$. For $k=3$ this gives $r_{\GHZ\to\W}\approx 6.17$
\end{example}

Although \cref{ex:GHZtoW0} gives the best possible error exponent achievable by this particular degeneration, it is in general possible to find different degenerations between the same pair of states, which do not necessarily lead to the same error exponent. We illustrate this with another degeneration from GHZ to the W state, which is an example of a combinatorial degeneration. First we recall the definition and specialize \cref{prop:symmetricA} to combinatorial degenerations.
\begin{definition}\label{def:combinatorialDegen}
Let $I_1,\dots,I_k$ be finite index sets and let $\Phi \subseteq \Psi \subseteq I_1 \times \cdots \times I_k$. We say that $\Phi$ is a \emph{combinatorial degeneration} of $\Psi$ if there are maps $u_j :\, I_j \to \integers$ such that for all $\alpha \in I_1 \times \cdots \times I_k$, if $\alpha \in \Psi \setminus \Phi$, then $\sum_{j=1}^k u_j(\alpha_j) > 0$, and if $\alpha \in \Phi$, then $\sum_{j=1}^k u_j(\alpha_j) = 0$. 
\end{definition}
For further details on combinatorial degeneration see \cite{christandl2023universal,burgisser2013algebraic}.
Let $\{\ket{i}\}_{i\in I_j}$ be bases on the local Hilbert spaces $\mathcal{H}_j$. Then acting with the linear maps
\begin{equation}
    A_{j}\coloneqq
    \sum_{i\in I_j} z^{u_j(i)}\vectorstate{i},
\end{equation}
where $j\in [k]$, on a state with support in $\Phi$ we get a degeneration to a state with support in $\Psi$. 

\begin{proposition}\label{prop:combinatorialDegen}
Let $I_1,\dots,I_k$ be finite index sets and $\Phi \subseteq \Psi \subseteq I_1 \times \cdots \times I_k$, and suppose that $\Phi$ is a combinatorial degeneration of $\Psi$. Let
\begin{equation}
\psi=\sum_{(i_1,\dots,i_k)\in\Psi}\psi_{i_1,\dots,i_k}\ket{i_1}\otimes\dots\otimes\ket{i_k}
\end{equation}
be a unit vector, $q=\sum_{(i_1,\dots,i_k)\in\Phi}\lvert\psi_{i_1,\dots,i_k}\rvert^2$, and
\begin{equation}
\varphi=\frac{1}{\sqrt{q}}\sum_{(i_1,\dots,i_k)\in\Phi}\psi_{i_1,\dots,i_k}\ket{i_1}\otimes\dots\otimes\ket{i_k}.
\end{equation}
Then $\psi$ can be asymptotically transformed into $\varphi$ with rate $1$ and strong converse exponent at most $r=-\log q$.
\end{proposition}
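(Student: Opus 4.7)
The plan is to apply \cref{thm:main}, or more precisely its sharpening \cref{prop:symmetricA}, to the degeneration built from $A_j(z)=\sum_{i\in I_j}z^{u_j(i)}\vectorstate{i}$. Since every $A_j(z)$ is diagonal in the product basis, $\bigotimes_j A_j(z)$ is diagonal with eigenvalue $z^{\sum_j u_j(\alpha_j)}$ on $\ket{\alpha}$, so
$\bigl(\bigotimes_j A_j(z)\bigr)\psi=\sum_{\alpha\in\Psi}z^{\sum_j u_j(\alpha_j)}\psi_\alpha\ket{\alpha}$.
Its constant-in-$z$ term is $\sum_{\alpha\in\Phi}\psi_\alpha\ket{\alpha}=\sqrt{q}\,\varphi$ by the definition of combinatorial degeneration, while every other term carries a strictly positive power of $z$. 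Rescaling $A_1$ by $q^{-1/2}$ therefore turns this into a legitimate degeneration of $\psi$ onto the unit vector $\varphi$.

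Second, the diagonal structure makes the norm purely a function of $|z|$: writing $M_j:=\max_i u_j(i)$ and $m_j:=\min_i u_j(i)$, one has $\norm{A_j(z)}=\max(|z|^{M_j},|z|^{m_j})$, and hence $\norm{\bigotimes_j A_j(z)}$ is centrally symmetric. By \cref{prop:symmetricA} together with \cref{cor:optlowerbound} the achievable error exponent of the resulting protocol is then $2\log\min_{r>0}q^{-1/2}\prod_j\max(r^{M_j},r^{m_j})$.

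The computational core is the identity $\min_{r>0}\prod_j\max(r^{M_j},r^{m_j})=1$, attained at $r=1$. Passing to logarithms it suffices to show $\sum_j\max(M_j\log r,m_j\log r)\ge 0$. Fix any $\alpha\in\Phi$; the combinatorial-degeneration hypothesis gives $\sum_j u_j(\alpha_j)=0$, so $\sum_j M_j\ge 0\ge\sum_j m_j$. For $r\ge 1$ the expression equals $(\log r)\sum_j M_j\ge 0$, and for $r\le 1$ it equals $(\log r)\sum_j m_j\ge 0$, with equality at $r=1$. Thus the minimum equals $q^{-1/2}$ and the resulting exponent is $-\log q$.

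The one boundary case to worry about is when no tuple $(i_1,\dots,i_k)$ satisfies $\sum_j u_j(i_j)<0$, so that $\bigotimes_j A_j(z)$ has no negative powers of $z$ and the hypothesis of \cref{thm:main} technically fails. In that situation, however, each $u_j(\alpha_j)$ must vanish individually for $\alpha\in\Phi$, so $A_j(0)$ is a local projector and $\bigotimes_j A_j(0)\psi=\sqrt{q}\,\varphi$; the standard single-copy SLOCC bound yields a transformation with success probability $q$ and hence strong converse exponent $-\log q$ without needing the degeneration machinery at all. This edge case is thus straightforward; the main nontrivial content is the short convexity argument pinning the minimum to $r=1$.
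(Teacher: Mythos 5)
Your main argument matches the paper's proof: form the diagonal maps $A_j(z)$ from the $u_j$ (distributing the normalization $q^{-1/2}$ differently across the $k$ factors is immaterial), observe that the norm of the tensor product depends only on $\lvert z\rvert$, and invoke \cref{prop:symmetricA}. The explicit minimization---$\min_{r>0}\prod_j\max(r^{M_j},r^{m_j})=1$ via $\sum_j M_j\ge 0\ge\sum_j m_j$, which follows because any $\alpha\in\Phi$ gives $\sum_j u_j(\alpha_j)=0$---is a welcome spelling-out of the step the paper only asserts, namely that the minimum of $\norm{A_1(z)\otimes\dots\otimes A_k(z)}$ is $1/\sqrt{q}$, attained on the unit circle.

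The only misstep is in your treatment of the boundary case $\sum_j m_j=0$ (equivalently, no negative powers in $\bigotimes_j A_j(z)$). From $\sum_j u_j(\alpha_j)=0$, $u_j(\alpha_j)\ge m_j$, and $\sum_j m_j=0$, one concludes only that $u_j(\alpha_j)=m_j$ for $\alpha\in\Phi$, not that $u_j(\alpha_j)=0$; for instance $m_1=-1$, $m_2=1$ is possible. When some $m_j\ne 0$, the operator $A_j(0)$ has a pole or vanishes, so ``$A_j(0)$ is a local projector'' does not hold as stated. The fix is simple: the $u_j$ in \cref{def:combinatorialDegen} are determined only up to shifts $u_j\mapsto u_j+c_j$ with $\sum_j c_j=0$, so one may first normalize $u_j\mapsto u_j-m_j$ (equivalently, replace $A_j(z)$ by $z^{-m_j}A_j(z)$, which leaves the tensor product unchanged); then $u_j\ge 0$ with $u_j(\alpha_j)=0$ on $\Phi$, and your single-copy SLOCC argument goes through. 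The paper's own proof does not address this edge case, so your treatment is, modulo this normalization, a slight improvement in rigor.
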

\begin{proof}
Let $u_j:I_j\to\integers$ as in \cref{def:combinatorialDegen}. Consider the maps
\begin{equation}
A_j(z)\coloneqq q^{-\frac{1}{2k}}\sum_{i\in I_j} z^{u_j(i)}\vectorstate{i}.
\end{equation}
Then
\begin{equation}
\begin{split}
(A_1(z)\otimes\dots\otimes A_k(z))\psi
 & = \frac{1}{\sqrt{q}}\sum_{(i_1,\dots,i_k)\in\Psi}\psi_{i_1,\dots,i_k}z^{u_1(i_1)+\dots+u_k(i_k)}\ket{i_1}\otimes\dots\otimes\ket{i_k}  \\
 & = \varphi+O(z).
\end{split}
\end{equation}
The norm of the product map is
\begin{equation}
\norm{A_1(z)\otimes\dots\otimes A_k(z)}=\begin{cases}
    \frac{1}{\sqrt{q}}\lvert z\rvert^{u_\textnormal{max}} & \text{if $\lvert z\rvert\ge 1$}  \\
    \frac{1}{\sqrt{q}}\lvert z\rvert^{u_\textnormal{min}} & \text{otherwise},
\end{cases}
\end{equation}
where $u_\textnormal{max}=\sum_{j=1}^k\max_{i\in I_j}u_j(i)$ and $u_\textnormal{min}=\sum_{j=1}^k\min_{i\in I_j}u_j(i)$. In particular, $\norm{A_1(z)\otimes\dots\otimes A_k(z)}$ is centrally symmetric and its minimum is $\frac{1}{\sqrt{q}}$, therefore the claim follows from \cref{prop:symmetricA}.
\end{proof}

The following example shows a combinatorial degeneration (see \cref{prop:combinatorialDegen}) between the GHZ and the W state.  
\begin{example}\label{ex:GHZtoW}
    Let
    \begin{equation}
    \GHZ=\frac{1}{2}\left(
    \ket{+++}+\ket{--+}+\ket{+--}+\ket{-+-}
    \right),    
    \end{equation}
    be the $\GHZ$ state expressed in the basis $\{\ket{+},\ket{-} \}$, and 
    \begin{equation}
        \W=\frac{1}{\sqrt{3}}(\ket{+--}+\ket{-+-}+\ket{--+}),
    \end{equation}
    a $\W$ state with $\{\ket{+},\ket{-} \}$ being its canonical basis.
Let
\begin{equation}
    A_j(z)=\sqrt[3]{\frac{2}{\sqrt{3}}}
    \sum_{i\in\{+,-\}} z^{u(i)}\vectorstate{i}
\end{equation}
 for any $j\in [k]$ and
\begin{equation}
     A_1(z)\otimes A_2(z)\otimes A_3(z)=\frac{2}{\sqrt{3}} \sum_{i_1,i_2,i_3 \in \{+,-\}}
     z^{\sum_{j=1}^3 u(i_j)}\vectorstate{i_1,i_2,i_3}
     ,
\end{equation}
where $u(+)=2$ and $u(-)=-1$.
Note that any basis vector in the support of the W state is also in the support of the GHZ state. Also note that for the basis vectors in the support of the GHZ state we have $\sum_{j=1}^3 u_j(i)\ge 0$ and there is an equality iff the basis vector is also in the support of the W state. 
Acting on the GHZ state then results in
    \begin{equation}
    \begin{split}
        A_1(z)\otimes A_2(z)\otimes A_3(z)\GHZ&=\frac{2}{\sqrt{3}}
        \sum_{i_1,i_2,i_3 \in \{+,-\}}
     z^{\sum_{i=j}^3 u(i_j)}\vectorstate{i_1,i_2,i_3}
     \GHZ\\
     &=\frac{1}{\sqrt{3}}\left(\ket{+--}+\ket{-+-}+\ket{--+}+z^6\ket{+++}
    \right)
    \end{split}
    \end{equation}
The norm of the operator on the left hand side is 
\begin{equation}
    \norm{A_1(z)\otimes A_2(z)\otimes A_3(z)}= 
    \begin{cases}
    \frac{2}{\sqrt{3}} \lvert z\rvert^{6}\, &\text{if}\quad\lvert z \rvert \ge 1 \\
    \frac{2}{\sqrt{3}} \lvert z\rvert^{-3}\,&\text{if}\quad\lvert z \rvert < 1 .
    \end{cases}
\end{equation}
This is again centrally symmetric so we can apply \cref{prop:symmetricA}, and we get the error exponent 
\begin{equation}
    r_{\GHZ\to\W}=2\log\frac{2}{\sqrt{3}}\approx 0.415.
\end{equation}
\end{example}

\begin{example}\label{ex:graphtensors}
Consider the transformations studied in \cite{vrana2017entanglement}. The initial state $\psi$ is a product of $\GHZ$ states shared among subsets of the $k$ parties, encoded in a hypergraph $H$ (possibly with parallel hyperedges). The vertex set is $[k]$, and a hyperedge $E$ incident with a set of vertices corresponds to a $\GHZ$ state on that subset (an $\EPR$ pair, when the hyperedge has size $2$). The target state is the $\GHZ$ state on \emph{all} subsystems. The optimal asymptotic SLOCC rate is equal to the edge-connectivity $\lambda(H)$ of the hypergraph, defined as the largest number $l$ such that after removing any subset of at most $l-1$ hyperedges the hypergraph remains connected. As a concrete example, the complete graph $K_3$ on $3$ vertices corresponds to a triple of $\EPR$ pairs arranged as $\EPR_{AB}\otimes\EPR_{AC}\otimes\EPR_{BC}$, with edge-connectivity $\lambda(K_3)=2$.

The asymptotic SLOCC transformation with this rate is in general not possible for any finite number of copies (even allowing finite-copy degenerations \cite[Theorem 6.1]{kopparty2023geometric}). In \cite{vrana2017entanglement}, a sequence of combinatorial degenerations (in the standard basis) is found, from $\psi^{\otimes n}$ to a $\GHZ$ state with at least $C2^{n\lambda(H)}$ levels for some constant $C>0$. Since the squared coefficients in the standard basis are distributed uniformly on $2^{n\lvert E(H)\rvert}$ elements (where $E(H)$ is the set of hyperedges of $H$), and the combinatorial degeneration is to a subset of at least $C2^{n\lambda(H)}$ elements, such a degeneration implies via \cref{prop:combinatorialDegen} that a transformation from $\psi$ to $\GHZ$ is possible at rate
\begin{equation}
R\ge\frac{1}{n}\log\left(C2^{n\lambda(H)}\right)
\end{equation}
and error exponent
\begin{equation}
r\le-\frac{1}{n}\log\frac{C2^{n\lambda(H)}}{2^{n\lvert E(H)\rvert}}.
\end{equation}
Letting $n\to\infty$, we obtain a transformation at the optimal asymptotic SLOCC rate $R=\lambda(H)$, with a strong converse exponent $\lvert E(H)\rvert-\lambda(H)$.
\end{example}

As mentioned above, the formula for the optimal error exponent is reminiscent of the equality between the weighted transfinite diameter and the weighted capacity. While it does not seem to be possible to reduce our result to that equality, we can derive a lower bound on the error exponent in terms of a weighted capacity, either by estimating directly the integral formula, or by symmetrizing the two weights $w_1,w_2$ in \eqref{eq:deltaKw1w2def}.
\begin{proposition}
    The optimal value of the error exponent in \eqref{eq:optrintegralcompact} is lower bounded by the weighted capacity 
    \begin{equation}
    \begin{split}
        r_\textnormal{opt}&=2\inf_{\sigma\in\mathcal{P}(\complexes)}\sup_{z\in \supp \sigma} 
        \log \norm{A_1(z)\otimes\dots\otimes A_k(z)}+
        e\int_\complexes \log\frac{\lvert t\rvert}{\lvert z-t\rvert}\ed\sigma(t)\\
        &\ge
        2\inf_\sigma e\iint \log \frac{1}{w(z)w(t)\lvert z-t\rvert}\ed\sigma(z)\ed\sigma(t),
    \end{split}
    \end{equation}
with weights $w(z)=1/\sqrt{\norm{A_1(z)\otimes\dots\otimes A_k(z)}^{\frac{1}{e}}\lvert z\rvert}$.
\end{proposition}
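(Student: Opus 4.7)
The plan is to obtain the bound by the simple observation that for any probability measure $\sigma$ and any function $f$, one has $\sup_{z\in\supp\sigma} f(z) \ge \int f(z)\,d\sigma(z)$, i.e.\ the supremum dominates the $\sigma$-average. I will apply this with
\[
f(z)=\log \norm{A_1(z)\otimes\dots\otimes A_k(z)}+e\int_{\complexes}\log\frac{\lvert t\rvert}{\lvert z-t\rvert}\,\ed\sigma(t),
\]
which is exactly the function whose supremum appears in \eqref{eq:optrintegralFinal}. This will produce for each fixed $\sigma$ a lower bound on the bracket in $r_\textnormal{opt}$; after taking infimum over $\sigma$ and multiplying by $2$ the stated inequality follows.

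The second step is a routine algebraic rearrangement. Integrating $f$ against $\sigma$ yields
\[
\int \log\norm{A(z)}\,\ed\sigma(z)+e\int\log\lvert t\rvert\,\ed\sigma(t)+e\iint\log\frac{1}{\lvert z-t\rvert}\,\ed\sigma(z)\,\ed\sigma(t),
\]
where I abbreviate $A(z)=A_1(z)\otimes\dots\otimes A_k(z)$. Using symmetry in $z,t$ in the first two single integrals to rewrite each as half the sum of the two orderings, and pulling out the factor $e$, this expression equals
\[
e\iint\left[\frac{1}{2e}\log\norm{A(z)}+\frac{1}{2e}\log\norm{A(t)}+\frac{1}{2}\log\lvert z\rvert+\frac{1}{2}\log\lvert t\rvert+\log\frac{1}{\lvert z-t\rvert}\right]\ed\sigma(z)\,\ed\sigma(t),
\]
which with $w(z)=1/\sqrt{\norm{A(z)}^{1/e}\lvert z\rvert}$ is precisely $e\iint\log\frac{1}{w(z)w(t)\lvert z-t\rvert}\,\ed\sigma(z)\,\ed\sigma(t)$.

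Combining the two steps, for every $\sigma\in\distributions(\complexes)$
\[
\sup_{z\in\supp\sigma}\left[\log\norm{A(z)}+e\int\log\frac{\lvert t\rvert}{\lvert z-t\rvert}\,\ed\sigma(t)\right]\ge e\iint\log\frac{1}{w(z)w(t)\lvert z-t\rvert}\,\ed\sigma(z)\,\ed\sigma(t),
\]
and taking $\inf_\sigma$ of both sides and multiplying by $2$, using \cref{thm:main} on the left, gives the claimed inequality. There is no substantive obstacle: the only point to check is that the manipulation is justified even when $\sigma$ has atoms or assigns mass near $0$, but in those cases the integrals on the right of \eqref{eq:optrintegralFinal} already take the value $+\infty$, so the inequality is vacuous for such $\sigma$ and the infimum is unaffected by restricting to $\sigma$ without point masses and with $0\notin\supp\sigma$, for which everything is finite.
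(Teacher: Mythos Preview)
Your proof is correct and takes essentially the same approach as the paper: bound the supremum over $z\in\supp\sigma$ below by the $\sigma$-average, then symmetrize the resulting single integrals in $z$ and $t$ to recognize the weighted energy integral. Your treatment is in fact slightly more careful than the paper's, which does not comment on the degenerate cases (atoms or $0\in\supp\sigma$).
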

\begin{proof}
This is done by lower bounding the supremum in $z$ by the integral in $z$ over the probability measure $\sigma$. Then by algebraic manipulations we get 
\begin{equation}\label{eq:integrallowerboundbycapacity}
\begin{split}
    &2\inf_{\sigma\in\mathcal{P}(\complexes)}\sup_{z\in \supp \sigma} 
        \log \norm{A_1(z)\otimes\dots\otimes A_k(z)}+
        e\int_\complexes \log\frac{\lvert t\rvert}{\lvert z-t\rvert}\ed\sigma(t)\ge\\
        &2e\left(\inf_{\sigma\in\mathcal{P}(\complexes)}\iint 
        \log\frac{1}{\lvert z-t\rvert}\ed\sigma(t)\ed\sigma(z)
        +\frac{1}{2}\int\log\sqrt[e]{\norm{A_1(z)\otimes\dots\otimes A_k(z)}}\ed\sigma(z)\right. \\
        &\left.+\frac{1}{2}\int\log\lvert z\rvert\ed\sigma(z)
        +\frac{1}{2}\int\log\sqrt[e]{\norm{A_1(t)\otimes\dots\otimes A_k(t)}}\ed\sigma(t)
        +\frac{1}{2}\int\log\lvert t\rvert\ed\sigma(t)
        \right)=\\
        &2e\left(\inf_{\sigma\in\mathcal{P}(\complexes)}\iint 
        \log\frac{1}{w(z)w(t)\lvert z-t\rvert}\ed\sigma(t)\ed\sigma(z)
        \right).
\end{split}
\end{equation}
\end{proof}

Alternatively, one can show the same inequality by using the inequality between the arithmetic and geometric means:
\begin{multline}
\sum_{i=1}^{ne+1} \prod_{k\neq i} \sqrt[e]{\norm{A_1(z_i)\otimes\dots\otimes A_k(z_i)}^{2}}\frac{\lvert z_k\rvert^2}{\lvert z_i-z_k\rvert^2}  \\
\begin{aligned}
& > \left(\prod^{ne}_{j=0} \prod_{k\neq j} \sqrt[e]{\norm{A_1(z_j)\otimes\dots\otimes A_k(z_j)}^2} \frac{\lvert z_k\rvert^2 }{\lvert z_j-z_k\rvert^2}\right)^{\frac{1}{ne+1}}  \\
& = \left(\prod^{ne}_{j=0} \prod_{k\neq j}  \frac{1}{w^2(z_j)w^2(z_k)\lvert z_j-z_k\rvert^2}\right)^{\frac{1}{ne+1}},
\end{aligned}
\end{multline}
where the equality is the result of a similar symmetrizing we have done with the integrals in \eqref{eq:integrallowerboundbycapacity}.
Then the rest is done by the fact that the weighted transfinite diameter equals with the weighted capacity (see \cref{subsec:transfiniteNpot}).

We have seen in the previous section that the optimal achievable error exponent is given by \cref{prop:symmetricA} when $\norm{A_1(z)\otimes\dots\otimes A_k(z)}$ is centrally symmetric. Unfortunately, in the absence of such symmetry, the problem becomes much more complicated. Still, in those cases we can formulate a calculable upper bound for the optimal error exponent.

\begin{proposition}\label{prop:logcircle}
    For any $R>0$ the optimal error exponent admits the upper bound
    \begin{equation}
    \begin{split}
         r_\textnormal{opt}&= 2\inf_{\sigma\in\mathcal{P}(\complexes)}\sup_{z\in \supp \sigma} 
        \log\norm{A_1(z)\otimes\dots\otimes A_k(z)}+
        e\int_\complexes \log\frac{\lvert t\rvert}{\lvert z-t\rvert}\ed\sigma(t)\\&\le
        2\int_0^{2\pi} \log\norm{A_1(Re^{i\varphi})\otimes\dots\otimes A_k(Re^{i\varphi})}\ed\varphi.
    \end{split}
    \end{equation}
\end{proposition}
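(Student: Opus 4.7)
The plan is to specialize the infimum in \eqref{eq:optrintegralFinal} to a probability measure $\sigma$ supported on the circle $C_R = \{z \in \complexes : \lvert z \rvert = R\}$, chosen so that on $\supp\sigma$ the objective is as flat as possible. If the construction makes it exactly constant on the support, then the supremum equals the mean value of $\log\norm{A_1(z)\otimes\cdots\otimes A_k(z)}$ over $C_R$, which yields the asserted integral bound.

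Concretely, I would write $\sigma$ as having density $\rho(\varphi)$ on $C_R$ against Lebesgue arclength, with $\int_0^{2\pi}\rho(\varphi)\ed\varphi = 1$. Using the classical Fourier identity $\log\lvert Re^{i\theta}-Re^{i\varphi}\rvert = \log R - \sum_{n\ge1}\frac{\cos(n(\theta-\varphi))}{n}$, for $z = Re^{i\theta}\in C_R$ one obtains
\begin{equation*}
\int \log\frac{\lvert t\rvert}{\lvert z-t\rvert}\ed\sigma(t) = \sum_{n\ge1}\frac{1}{n}\int_0^{2\pi}\cos(n(\theta-\varphi))\rho(\varphi)\ed\varphi,
\end{equation*}
so on $C_R$ the potential-theoretic term is entirely determined by the Fourier coefficients of $\rho$. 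Next I would expand $u(\theta) \coloneqq \log\norm{A_1(Re^{i\theta})\otimes\cdots\otimes A_k(Re^{i\theta})}$ in Fourier series with mean $\bar u$, and match coefficients so that $u(\theta) + e\int\log\frac{\lvert t\rvert}{\lvert z-t\rvert}\ed\sigma(t) \equiv \bar u$ on $C_R$. Because the cosine and sine components decouple frequency by frequency, this reduces to a one-equation-per-frequency linear system that is explicitly solvable for the Fourier coefficients of $\rho$.

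The hard part will be ensuring that the resulting $\rho$ is a genuine probability density, i.e.\ non-negative. Total mass one comes automatically from the zero-frequency equation, but non-negativity does not: the Fourier coefficients of $\rho$ produced by the matching are those of $u$ multiplied by a factor proportional to $n/e$, which amplifies high frequencies and can push $\rho$ below zero. I would overcome this either by regularizing $u$ to a Ces\`aro/Fej\'er mean $u_N$ --- for which the matched $\rho_N$ is a trigonometric polynomial that can be rendered non-negative by perturbing the zero-frequency part by a vanishing amount --- and then passing $N\to\infty$ using the weak lower semicontinuity of the objective in $\sigma$; or alternatively by invoking the existence of a weighted Frostman equilibrium measure on $C_R$ (with external field proportional to $u$) from the general theory of weighted potentials, which directly yields a probability measure on $C_R$ whose support carries the equilibrium condition $\log\norm{A_1(z)\otimes\cdots\otimes A_k(z)} + e U^\sigma(z) = \text{const}$.

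Once such a $\sigma$ is obtained, the supremum of $\log\norm{A_1(z)\otimes\cdots\otimes A_k(z)} + e U^\sigma(z)$ over $\supp\sigma$ equals $\bar u$, and substituting back into \eqref{eq:optrintegralFinal} gives $r_\textnormal{opt}\le 2\bar u$, which is the proposition's bound (with $\int_0^{2\pi}\ldots\ed\varphi$ interpreted against the appropriate normalization of arclength on $C_R$).
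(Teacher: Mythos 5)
Your overall strategy matches the paper's: put $\sigma$ on the circle $C_R$, expand both the external field $u(\theta)=\log\norm{A_1(Re^{i\theta})\otimes\cdots\otimes A_k(Re^{i\theta})}$ and the density $\rho$ in Fourier series, and match coefficients so that the objective is (approximately) constant on $C_R$, equal to the mean of $u$. You also correctly identify that the only real difficulty is showing that the matched $\rho$ can be taken to be a genuine non-negative density. But your proposed fixes for that difficulty don't close the gap, and you miss the one observation the paper actually uses.

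The matching forces $\hat{\rho}_m = \tfrac{2\lvert m\rvert}{e}\hat u_m$ for $m\neq 0$. A F\'ejer/Ces\`aro regularization $u_N$ only multiplies $\hat u_m$ by factors $\le 1$, so the matched $\rho_N$ still has Fourier coefficients of order $\lvert m\rvert\lvert\hat u_m\rvert/e$, and there is no reason for $\sum_{m\neq 0}\lvert\hat\rho_{N,m}\rvert$ to be $\le 1$; ``perturbing the zero-frequency part'' cannot restore non-negativity without destroying normalization. Your second suggestion, invoking the weighted Frostman equilibrium measure on $C_R$ with external field proportional to $u$, also does not directly give the claim: the equilibrium measure's support may be a proper subset of $C_R$, on which the objective is constant, but that constant is in general the (unknown) modified Robin constant, not $\bar u$, so one does not recover the asserted bound $2\int_0^{2\pi}\log\norm{\cdots}\ed\varphi$.

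The missing idea in the paper is that $e$ is not fixed: the right-hand side of \eqref{eq:degenpolynomial} can always be padded with higher-degree zero-coefficient terms, which increases the error degree $e$ without changing the degeneration. Since $\hat\rho_m \propto 1/e$, one can, after truncating the Fourier series at any finite $n$, take $e$ large enough that $\sum_{0<\lvert m\rvert\le n}\lvert\hat\rho_m\rvert\le 1$, which makes $\rho=\tfrac{1}{2\pi}(1+\sum_{m\neq0}\hat\rho_m e^{im\varphi})$ a genuine probability density. Because the stated bound does not involve $e$, this enlargement is harmless. Once this is in place the rest of your outline (truncate the Fourier series of $u$, send $n\to\infty$) does carry through. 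One further detail the paper also handles, which you omit, is that $\log\norm{A_1(z)\otimes\cdots\otimes A_k(z)}$ may have logarithmic singularities on $C_R$; the paper truncates with $A_\epsilon(z)=\max\{\norm{\cdots},\epsilon\}$ and then shows these singularities are integrable in the $\epsilon\to0$ limit, though under the standing nowhere-zero assumption this step can be skipped.
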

\begin{proof}
    Let $\sigma$ be the probability measure on the circle $C_R$ with radius $R$ defined by a density function $\rho(\varphi)$ and let $\varphi_z$ and $\varphi_t$ be the angles corresponding to $z$ and $t$ respectively. For $\epsilon>0$ we define $A_\epsilon(z)\coloneqq\max\{\norm{A_1(z)\otimes\dots\otimes A_k(z)},\epsilon\}$. 
    Using $A_\epsilon(z)$ instead of $\norm{A_1(z)\otimes\dots\otimes A_k(z)}$ we get an upper bound for the error exponent. With this the integrand becomes continuous on $C_R$, therefore it can be uniformly approximated by its Fourier series, in other words for any $\epsilon>0$ we can choose $\lvert n\rvert$ large enough so that the partial sum from $-n$ to $n$ of the Fourier series of $\log \norm{A_1(Re^{i\varphi})\otimes\dots\otimes A_k(Re^{i\varphi})}$ differs from the function by $\epsilon$ at most for any $z\in C_R$. Our aim is to eliminate any non-constant Fourier terms from this series with the integral term by a suitable choice of measure.
    Now we rewrite the integral using the law of cosines:
    \begin{equation}
    \begin{split}
        \int_{C_R} \log\frac{\lvert t\rvert}{\lvert z-t\rvert}\ed\sigma(t)&=
        \int_{C_R} \log\frac{R}{\sqrt{2}R\sqrt{1-\cos(\varphi_z-\varphi_t)}}\ed\sigma(t)\\&=
        \int_0^{2\pi} \log\frac{1}{\sqrt{2(1-\cos(\varphi_z-\varphi_t))}}\rho(\varphi_t)\ed\varphi,
    \end{split}
    \end{equation}
    where $\varphi$ is the angle between $z$ and $t$ on the circle. Note that here we have a convolution of two functions which in Fourier space converts to a multiplication. To apply this fact, first we calculate the Fourier coefficients of the kernel function:
    \begin{equation}
        c_0=\int_0^{2\pi} \log\frac{1}{\sqrt{2(1-\cos(\varphi))}}\ed\varphi=
        0
    \end{equation}
    and 
    \begin{equation}
        c_m=\int_0^{2\pi} \log\frac{1}{\sqrt{2(1-\cos(\varphi))}}e^{-im\varphi}\ed\varphi=
        \frac{1}{2\lvert m\rvert},
    \end{equation}
    for $m\neq 0$, $-n\le m\le n$.
    Let $\hat{\rho}_m$ be the Fourier polynomial of the density function $\rho$ and $\hat{A}_m$ be the Fourier polynomial of $-\log\norm{A_1(z)\otimes\dots\otimes A_k(z)}$, both with the cutoff $-n\le m\le n$. We want to achieve $\hat{A}_m$ by the given integral transformation. By the convolution theorem we have $\hat{A}_0=0$ and $\hat{A}_m=\frac{e\hat{\rho}_m}{2\lvert m\rvert}$.
    This means that by choosing 
    \begin{equation}
    \hat{\rho}_m=\frac{2\lvert m\rvert\hat{A}_m}{e}
    \end{equation}
for any $m\neq 0$, the non-constant terms from the Fourier series of $\log\norm{A_1(z)\otimes\dots\otimes A_k(z)}$ are eliminated. The only problem with this is that it is not guaranteed that $\rho$ will be a density function of a probability measure. To this end we chose $\hat{\rho}_0=1$ to ensure normality, and $\sum_{n\neq 0}\lvert\hat{\rho}_m\rvert=\sum_{m\neq 0}\frac{2\lvert m\hat{g}_m\rvert}{e}\le 1$ to ensure non-negativity. The second condition can be fulfilled only by choosing $e$ large enough, but that choice can be made since the polynomial on the right hand side of \eqref{eq:degenpolynomial} can always be expanded by larger power terms with zero coefficients, which enlarges its error degree $e$.

At the end, for any $\epsilon$ we can choose $n$ and $e$ large enough so that there exists a probability distribution with density function $\rho$ such that
\begin{multline}
        \inf_{\sigma\in\mathcal{P}(\complexes)}\sup_{\varphi_z} 
        \log A_\epsilon(Re^{i\varphi_z})+
        e\int_\complexes \log\frac{\lvert t\rvert}{\lvert z-t\rvert}\ed\sigma(t)  \\
\begin{aligned}
        & \le \sup_{\varphi_z} \log A_\epsilon(Re^{i\varphi_z})-\left(\log A_\epsilon(Re^{i\varphi_z})-\int_0^{2\pi} \log A_\epsilon(Re^{i\varphi})\ed\varphi\right)+\epsilon  \\
        &=\int_0^{2\pi} \log A_\epsilon(Re^{i\varphi})\ed\varphi+\epsilon
    \end{aligned}
    \end{multline}
The only thing left to show is the fact that the average of $A_\epsilon(z)$ in the limit $\epsilon\to 0$ is the same as the average of $\norm{A_1(z)\otimes\dots\otimes A_k(z)}$. When a polynomial described in \eqref{eq:degenpolynomial} takes the value zero at some point $z_0$ then let $\alpha\ge 1$ be the largest integer for which $(z-z_0)^\alpha$ can be factored out from the polynomial. Then by factoring this out from $\norm{A_1(z)\otimes\dots\otimes A_k(z)}$ as $\lvert(z-z_0) \rvert^\alpha$ the remaining function lacks singularity at $z_0$. By taking the logarithm we get a singular term in a form $\alpha\log\lvert z-z_0 \rvert$, which in terms of angles translates to $\lvert z-z_0 \rvert=R \sqrt{2(1-\cos\varphi)}$ where $\varphi$ is the angle between $z_0$ and $z$. But
\begin{equation}
    \int_{-\epsilon}^\epsilon \log R \sqrt{2(1-\cos\varphi)}\ed\varphi=2\epsilon\log R +
    2\epsilon\log \epsilon -2\epsilon,
\end{equation}
which goes to $0$ as $\epsilon\to 0$ so this singularity does not contribute to the integral. Then we have
\begin{equation}
    \lim_{\epsilon\to 0}\int_0^{2\pi} \log A_\epsilon(Re^{i\varphi})\ed\varphi=\int_0^{2\pi} \log
    \norm{A_1(Re^{i\varphi})\otimes\dots\otimes A_k(Re^{i\varphi})}
    \ed\varphi.
\end{equation}
\end{proof}

\begin{remark}
    Note that by operating under the assumptions of \cref{rem:degenInterestingcases} in \cref{prop:logcircle} we do not have to deal with the singularities of $\log\norm{A_1(z)\otimes\dots\otimes A_k(z)}$, and the proof can be simplified by immediately considering the Fourier series of $\log\norm{A_1(z)\otimes\dots\otimes A_k(z)}$.
\end{remark}

\begin{corollary}\label{cor:logcircle}
    Let $C_R$ be a circle around the origin with radius $R$. Then the error exponent
    \begin{equation}
        r=2\inf_{R>0}\int_0^{2\pi} \log \norm{A_1(Re^{i\varphi})\otimes\dots\otimes A_k(Re^{i\varphi})}\ed\varphi,
    \end{equation}
    is achievable.
\end{corollary}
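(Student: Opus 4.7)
The plan is to derive this directly from Proposition \ref{prop:logcircle}, which already establishes the bound for any single value of $R>0$. Since the upper bound holds for every choice of $R$, and the optimal error exponent $r_\textnormal{opt}$ is independent of $R$, I can simply take the infimum over $R>0$ on the right-hand side, which yields
\begin{equation*}
r_\textnormal{opt}\le 2\inf_{R>0}\int_0^{2\pi}\log\norm{A_1(Re^{i\varphi})\otimes\dots\otimes A_k(Re^{i\varphi})}\ed\varphi.
\end{equation*}

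The remaining step is to interpret this inequality in terms of achievability. By \cref{def:errorexp} and the discussion leading to \eqref{eq:optrintegralFinal}, $r_\textnormal{opt}$ is an achievable error exponent of the family of protocols from \cref{sec:DegenLOCC}, and any $r\ge r_\textnormal{opt}$ is also achievable (since the limit inferior in the definition of an achievable exponent is monotone in this sense: if probability $p_n$ witnesses $r_\textnormal{opt}$, the same $p_n$ also witnesses any larger value). Hence the right-hand side above is itself an achievable error exponent.

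The only subtlety that needs checking is that the infimum over $R>0$ is finite (otherwise the statement is vacuous): this follows from the assumptions of \cref{rem:degenInterestingcases} which guarantee that $\log\norm{A_1(z)\otimes\dots\otimes A_k(z)}$ is continuous and finite on circles of any positive radius, so the integral in $\varphi$ is finite for each fixed $R$. No further work is required; the corollary is an immediate packaging of \cref{prop:logcircle}, and there is no substantive obstacle beyond invoking the previous proposition.
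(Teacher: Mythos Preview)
Your proposal is correct and matches the paper's own proof, which simply states that the corollary is an ``immediate consequence of \cref{prop:logcircle}.'' Your additional remarks on achievability and finiteness are reasonable elaborations but not strictly needed, as the paper treats the passage from the per-$R$ bound to the infimum as self-evident.
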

\begin{proof}
    Immediate consequence of \cref{prop:logcircle}.
\end{proof}

\section{Error exponents for transformation rates below one}\label{sec:errorexptrfratelessone}

In this section, we study the achievable error exponents for protocols that utilize degeneration when the transformation rate $R < 1 $. By relaxing the rate, we can aim for smaller error exponents than those observed in the  $R = 1$ case. Indeed, this reduction is achievable by discarding $(1-R)n$ copies of the initial state and applying the protocol with $R = 1$ to the remaining copies, resulting in a success probability of $2^{-rRn+o(n)}$.

Similarly, for protocols characterized by rate-exponent pairs  $(R_1, r_1)$ and $(R_2, r_2)$, time-sharing can be employed. By dividing the initial state into groups of $pn$ and $(1-p)n$ copies, and applying the respective protocols to these subsets, the resulting rate is $pR_1 + (1-p)R_2$ with a success probability of $2^{-r_1pn+o(n)} \cdot 2^{-r_2(1-p)n+o(n)} = 2^{-\left(r_1p + r_2(1-p)\right)n+o(n)}$. This demonstrates the convexity of the achievable region in the $(R, r)$ plane.

In the following we construct a protocol which gives a better error exponent for $R<1$ than the previously introduced time-sharing protocol. 
Given $n\in\naturals$ and complex numbers $z_1,\dots,z_{t}$, we form the operators $\mathbb{C}^{t}\otimes\mathcal{H}_j^{\otimes n}\to\mathbb{C}^2\otimes\mathbb{C}^{t}\otimes\mathcal{H}_j^{\otimes n}$
\begin{align}
K^{(1)}_{j,m} & = \sum_{i=1}^{t}\ket{1}\otimes\ketbra{i}{i}_j\otimes\underbrace{I\otimes\dots\otimes I}_{m-1}\otimes \frac{A_j(z_i)}{\norm{A_j(z_i)}}\otimes\underbrace{I\otimes\dots\otimes I}_{n-m}  \\
K^{(0)}_{j,m} & = \sum_{i=1}^{t}\ket{0}\otimes\ketbra{i}{i}_j\otimes\underbrace{I\otimes\dots\otimes I}_{m-1}\otimes\sqrt{I-\frac{A_j(z_i)^*A_j(z_i)}{\norm{A_j(z_i)}^2}}\otimes\underbrace{I\otimes\dots\otimes I}_{n-m}.
\end{align}
For different $j$ or $m$ these commute (since the only ``shared'' subsystem is $\complexes^{t}$ where they are all diagonal) and $(K^{(0)}_{j,m})^*K^{(0)}_{j,m}+(K^{(1)}_{j,m})^*K^{(1)}_{j,m}\le I$, therefore for every $j$ the products
\begin{equation}
K^x_j=\prod_{m=1}^n K^{(x_m)}_{j,m}
\end{equation}
for $x\in\{0,1\}^n$ are the Kraus operators of a local channel $T_j$ from $\boundeds(\mathbb{C}^{t}\otimes\mathcal{H}_j^{\otimes n})\to\boundeds((\mathbb{C}^{2})^{\otimes n}\otimes\mathbb{C}^{t}\otimes\mathcal{H}_j^{\otimes n})$. We apply these to $n$ copies of the initial state and a GHZ state and let each party measure all the $n$ classical flags. For the outcome $x_1,\dots,x_j\in\{0,1\}^n$ let
\begin{equation}\label{eq:xentryOfPsi}
\begin{split}
\Psi_{x_1,\dots,x_j}&\coloneqq\bra{x_1,\dots,x_j}(T_1\otimes\dots\otimes T_k)\left(\sum_{i,i'=1}w_i\overline{w_{i'}}\ketbra{i}{i'}\otimes\vectorstate{\psi}^{\otimes n}\right)\ket{x_1,\dots,x_j}  \\
 &= \sum_{i,i'=1}^{t}w_i\overline{w_{i'}}\ketbra{i}{i'}\otimes\bigotimes_{m=1}^n\ket{\psi_{x_{1,m},x_{2,m},\dots,x_{k,m}}(z_i)}
 \bra{\psi_{x_{1,m},x_{2,m},\dots,x_{k,m}}(z_{i'})}\\
 &= \left(\bigoplus_{i}w_i\bigotimes_{m=1}^n\ket{\psi_{x_{1,m},x_{2,m},\dots,x_{k,m}}(z_i)}\right)
\left( \bigoplus_{i'}\overline{w_{i'}}\bigotimes_{m=1}^n\bra{\psi_{x_{1,m},x_{2,m},\dots,x_{k,m}}(z_{i'})}\right).
\end{split}
\end{equation}
Here the $2^k$ vectors $\psi_{x_{1,m},x_{2,m},\dots,x_{k,m}}(z_i)$ indexed by bit strings are obtained by applying one of $\frac{A_j(z_i)}{\norm{A_j(z_i)}}$ and $\sqrt{I-\frac{A_j(z_i)^*A_j(z_i)}{\norm{A_j(z_i)}^2}}$ at each of the $j$ factors depending on the respective bit being $0$ or $1$.

In the following we work with fixed outcome bit strings $x_1,\dots,x_j\in\{0,1\}^n$, such that they contain $\floor{Rn}$ times the all-$1$ flag $(1,\dots, 1)$, i.e., the number of $m$ indices such that $(x_{1,m},\dots,x_{j,m})=(1,\dots, 1)$ is $\floor{Rn}$.
We want to trace out the rest, but this can not be done trivially on the tensor sum, because these states are entangled. Let us consider one term at a time. By tracing out the unwanted copies we get the transformation
\begin{equation}\label{eq:onesummandtransform}
\begin{split}
\bigotimes_{m=1}^n\ket{\psi_{x_{1,m},x_{2,m},\dots,x_{k,m}}(z_i)} &\to \ket{\psi_{1\dots 1}(z_i)}^{\otimes \floor{Rn}}
    \norm{\bigotimes_{m= \floor{Rn}+1}^n\psi_{x_{1,m},x_{2,m},\dots,x_{k,m}}(z_i)}\\
    &=\ket{\psi_{1\dots 1}(z_i)}^{\otimes \floor{Rn}}
    \prod_{m= \floor{Rn}+1}^n\norm{\psi_{x_{1,m},x_{2,m},\dots,x_{k,m}}(z_i)}.
\end{split}
\end{equation} 
Here $\ket{\psi^{(i)}_{1\dots 1}}$ denotes the one-copy output state, which has a flag $1$ on each of its output bits. Without the loss of generality we ordered these all-$1$ flag bits to be the first $\floor{Rn}$ states.

To apply this to the tensor sum we use \cite[Proposition 2.14]{jensen2019asymptotic} to conclude that the transformation
\begin{multline}\label{eq:transformationoneoutput}
    \sum_{i=1}^{t}w_i\ket{i}\otimes\bigotimes_{m=1}^n\ket{\psi_{x_{1,m},x_{2,m},\dots,x_{k,m}}(z_i)}  \\  \to
    \sum_{i=1}^{t}w_i
    \left(
    \prod_{m= \floor{Rn}+1}^n\norm{\psi_{x_{1,m},x_{2,m},\dots,x_{k,m}}(z_i)}
    \right)
    \ket{i}\otimes\ket{\psi_{1\dots 1}(z_i)}^{\otimes \floor{Rn}}
\end{multline}
can also be performed with probability $1$. In the final step of the protocol we act with a rank-$1$ projection $\frac{1}{\sqrt{t}}\sum_{i=1}^t \ket{i}$ on the GHZ part.  
At the end we are left with the state
\begin{equation}\label{eq:finalvarphi}
    \sum_{i=1}^{t}c_i \norm{A_1(z_i)}^{ \floor{Rn}}\dots \norm{A_k(z_i)}^{ \floor{Rn}} \ket{\psi_{1\dots 1}(z_i)}^{\otimes \floor{Rn}}=
        \sum_{i=1}^{t}c_i \left(A_1(z_i)\otimes\dots\otimes A_k(z_i)\ket{\psi}\right)^{\otimes \floor{Rn}},
\end{equation}
where
\begin{equation}
    c_i=\frac{1}{\sqrt{t}}\frac{w_i\left(
    \prod_{m= \floor{Rn}+1}^n\norm{\psi_{x_{1,m},x_{2,m},\dots,x_{k,m}}(z_i)}
    \right)}{\norm{A_1(z_i)}^{ \floor{Rn}}\dots \norm{A_k(z_i)}^{ \floor{Rn}}}.
\end{equation}
Choosing $c_i$ and $z_i$ such that they admit \eqref{eq:czconditions} we proceed the same way as in \eqref{eq:procedureTophi}:
\begin{equation}
\begin{split}
\sum_{i=1}^{t}c_i \left(A_1(z_i)\otimes\dots\otimes A_k(z_i)\ket{\psi}\right)^{\otimes \floor{Rn}}
  &= \sum_{i=1}^t c_i \sum_{h=0}^{\floor{Rn}e} z_i^h\varphi_h^{\otimes \floor{Rn}}  \\
 &= \sum_{h=0}^{\floor{Rn}e}\left(\sum_{i=1}^t z_i^hc_i\right)\varphi_h^{\otimes \floor{Rn}}  \\
 &= \varphi^{\otimes \floor{Rn}}.
\end{split}
\end{equation}

\subsection{Optimizing the probabilities}

The key observation on optimizing the probability of the previously described protocol is that the problem formally coincides with the one we have solved in the $R=1$ case. The main difference is that here we have created $ \floor{Rn}$ copies instead of $n$.
The probability of the transformation given by the protocol is encoded in the norm of the input GHZ state, since $\psi$ and $\varphi$ were chosen to be unit vectors.

Until this point we have considered fixed $x_1,\dots,x_k\in\{0,1\}^n$ outcomes, but any permutation of these $n$ number of $k$-indices leads to the same probability.
There are in total $2^{k}$ number of different $k$-tuples of bits, and we have sampled this set $n$ times.
Let $n_{(b_1,\dots,b_k)}$ be the number of occurrences of the bit string $(b_1,\dots,b_k)$ in $(x_1,\dots,x_k)$.
The total number of permutations of these $k$-tuples is
\begin{equation*}
\binom{n}{n_{(0,\dots,0)}, n_{(0,\dots,0,1)},\dots, n_{(1,\dots,1)}}\approx 2^{nH(P_n)},
\end{equation*}
 where $\entropy(P_n)$ is the Shannon entropy of the empirical probability distribution 
 \begin{equation}
 P_n(b_1,\dots,b_k)\coloneqq\frac{n_{(b_1,\dots,b_k)}}{n}
 \end{equation}
 defined by:
\begin{equation}
\entropy(P_n) = -\sum_{(b_1,\dots,b_k) \in \{0,1\}^k} P_n(b_1,\dots,b_k) \log P_n(b_1,\dots,b_k),
\end{equation}
where we used base-$2$ logarithm. Here $\approx$ denotes the equality up to a polynomial factor.
By accepting any of these permutations, the individual probabilities sum up and we gain this factor in the total probability:
\begin{equation}\label{eq:psmallR}
\begin{split}
    p(R)
    &\gtrapprox 2^{nH(P_n)}\left(\sum_{i=1}^t \lvert c_i\rvert^2\frac{\norm{A_1(z_i)}^{2\floor{Rn}}\dots \norm{A_k(z_i)}^{2 \floor{Rn}}}{
    \prod_{m= \floor{Rn}+1}^n\norm{\psi_{x_{1,m},x_{2,m},\dots,x_{k,m}}(z_i)}^2
    }\right)^{-1} \cdot\frac{1}{t}\\
    &\gtrapprox 2^{nH(P_n)}\left(\sum_{i=1}^t \lvert c_i\rvert^2\frac{\norm{A_1(z_i)}^{2\floor{Rn}}\dots \norm{A_k(z_i)}^{2 \floor{Rn}}}{
    \prod_{(b_1,\dots,b_k)\in\{0,1\}^{k}\setminus (1,\dots, 1)}
    \norm{\psi_{b_1,\dots, b_k}(z_i)}^{2P_n(b_1,\dots,b_k)n}
    }\right)^{-1},
\end{split}
\end{equation}
where $\gtrapprox$ denotes $\ge$ up to a polynomial factor.

\begin{proposition}
    The success probability in \eqref{eq:psmallR} of the protocol described in \cref{sec:errorexptrfratelessone} results in the error exponent
\begin{equation}\label{eq:optrRfinal}
\begin{split}
r_\textnormal{opt}(R)&= -h(R) +
\inf_{P_{\textnormal{cond}}\in\mathcal{P}(\{0,1\}^k\setminus (1,\dots,1))}\inf_{\sigma\in\mathcal{P}(\complexes)}\sup_{z\in \supp \sigma} \bigg[
        2R\sum_{j=1}^k\log \norm{A_j(z)}
        \\&\quad\,+2Re\int_\complexes
        \log\frac{\lvert t\rvert}{\lvert z-t\rvert}\ed\sigma(t)
        +(1-R) \relativeentropy{P_{\textnormal{cond}}}{\{\norm{\psi_{b_1,\dots, b_k}(z
        )}^2\}_{b_1,\dots, b_k}} \bigg]
        .
\end{split}
\end{equation}
\end{proposition}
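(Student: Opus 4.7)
The plan is to mirror the optimization performed for the rate-$1$ protocol in \cref{prop:optbyc,lem:optbyt,cor:finalcountable}, applied now to the probability bound \eqref{eq:psmallR}, and then to invoke the integral representation developed in \cref{sec:limitofoptprob}. First I would fix an outcome pattern $x_1,\dots,x_k\in\{0,1\}^n$ with $\floor{Rn}$ all-ones flags and empirical distribution $P_n$, which decomposes as $P_n(1,\dots,1)=\floor{Rn}/n$ and $P_n(b)=(1-\floor{Rn}/n)\,P_{\textnormal{cond}}(b)$ for $b\neq(1,\dots,1)$. Since the final projection in \eqref{eq:finalvarphi} produces $\floor{Rn}$ copies of $\varphi$, the amplitudes $c_i$ must satisfy $Zc=e_1$ with $Z$ the $(\floor{Rn}e+1)\times t$ Vandermonde matrix.

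With $M$ the diagonal matrix whose entries are $M_i=\prod_j\norm{A_j(z_i)}^{2\floor{Rn}}/\prod_{b\neq(1,\dots,1)}\norm{\psi_b(z_i)}^{2nP_n(b)}$, the bound \eqref{eq:psmallR} reads $p(R)\gtrapprox 2^{nH(P_n)}\langle c,Mc\rangle^{-1}$. Repeating the proof of \cref{prop:optbyc} with $G^{-1}$ replaced by $M$ yields the optimal value $\langle e_1,(ZM^{-1}Z^*)^{-1}e_1\rangle$, and \cref{lem:optbyt} allows restriction to $t=\floor{Rn}e+1$ interpolation points. The analogue of \cref{cor:finalcountable} then produces the explicit expression
\begin{equation*}
\sum_{i=1}^{\floor{Rn}e+1}M_i\prod_{l\neq i}\frac{\lvert z_l\rvert^2}{\lvert z_i-z_l\rvert^2},
\end{equation*}
which, up to a polynomial factor, agrees with its maximum over $i$.

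Taking $\tfrac{1}{n}\log$ and passing to the limit, the contribution of this part converges by \cref{sec:limitofoptprob} (noting that \cref{prop:deltaupperbound,prop:deltalowerbound,prop:countableKtoC,prop:rFinalForm} are formulated for arbitrary positive continuous weights $w_1,w_2$) to
\begin{equation*}
\inf_{\sigma\in\distributions(\complexes)}\sup_{z\in\supp\sigma}\left[2R\sum_{j=1}^k\log\norm{A_j(z)}-2(1-R)\sum_{b}P_{\textnormal{cond}}(b)\log\norm{\psi_b(z)}+2Re\int_\complexes\log\frac{\lvert t\rvert}{\lvert z-t\rvert}\ed\sigma(t)\right],
\end{equation*}
applied with the weights $w_1(z)=\prod_j\norm{A_j(z)}^{-R/e}\prod_b\norm{\psi_b(z)}^{(1-R)P_{\textnormal{cond}}(b)/e}$ and $w_2(t)=\lvert t\rvert^{-1}$. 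Combining with the decomposition $H(P_n)=h(\floor{Rn}/n)+(1-\floor{Rn}/n)H(P_{\textnormal{cond}})$ together with the algebraic identity
\begin{equation*}
-(1-R)H(P_{\textnormal{cond}})-2(1-R)\sum_b P_{\textnormal{cond}}(b)\log\norm{\psi_b(z)}=(1-R)\relativeentropy{P_{\textnormal{cond}}}{\{\norm{\psi_b(z)}^2\}_b}
\end{equation*}
yields the stated formula after taking the infimum over $P_{\textnormal{cond}}$, which parameterizes the freedom in choosing the type of the accepted outcome pattern.

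The main obstacle I anticipate is verifying that the compact-set reduction in \cref{prop:countableKtoC} and the passage to $\distributions(\complexes)$ in \cref{prop:rFinalForm} remain valid with the new composite weight $w_1$: these require growth and regularity estimates not just on $\norm{A_j(z)}$ but also on the norms $\norm{\psi_b(z)}$, which are built from $A_j(z)/\norm{A_j(z)}$ and its orthogonal complement via Kraus-type blocks. A secondary, easier issue is that $R$ and $P_{\textnormal{cond}}$ are generally not exactly realized by any empirical distribution on $n$ samples, so one must take a sequence of types approximating $(R,P_{\textnormal{cond}})$ and control the error caused by $\floor{Rn}/n\neq R$ and $P_n\neq P$ in the limit.
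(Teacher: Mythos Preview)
Your proposal is correct and follows essentially the same route as the paper: reduce the probability bound \eqref{eq:psmallR} to the rate-$1$ optimization with a modified weight $A_P(z)$, invoke the chain leading to the integral formula, then split $H(P)=h(R)+(1-R)H(P_{\textnormal{cond}})$ and rewrite the cross term as the Kullback--Leibler divergence before optimizing over $P_{\textnormal{cond}}$. Your anticipated obstacle about the composite weight is in fact easy to resolve and is exactly what the paper does: since each $\psi_b(z)$ is obtained by applying contractions to the unit vector $\psi$, one has $\norm{\psi_b(z)}\le 1$, so the denominator of $A_P(z)$ is bounded above by $1$; hence $A_P(z)\ge\prod_j\norm{A_j(z)}$ inherits the growth and nonvanishing conditions needed for \cref{prop:countableKtoC} and \cref{prop:rFinalForm}. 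Your second concern about realizing $(R,P_{\textnormal{cond}})$ by types is handled in the paper via a simple $\epsilon$-approximation $P_n\to P$, just as you suggest.
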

\begin{proof}
 Let
\begin{equation}\label{eq:defAP}
\begin{split}
    A_{P}(z)&\coloneqq \frac{\prod_{j=1}^k\norm{A_j(z)}}{
    \prod_{(b_1,\dots,b_k)\in\{0,1\}^{k}\setminus (1,\dots, 1)}\norm{\psi_{b_1,\dots, b_k}(z)}^{\frac{P(b_1,\dots, b_k)}{R}}
    },
\end{split}
\end{equation}
where $P$ is a probability distribution over the $k$-bit strings such that $P(1,\dots, 1)=R$.
For a fixed $P$, we choose a series of empirical probability distributions $P_n$ such that they converge to $P$.
Then for any $0<\epsilon$ there exist $n$ large enough so that
\begin{equation}
    \prod_{(b_1,\dots,b_k)\in\{0,1\}^{k}\setminus (1,\dots, 1)}
    \norm{\psi_{b_1,\dots, b_k}(z_i)}^{2P_n(b_1,\dots,b_k)}\ge
    2^{ -\epsilon}
    \prod_{(b_1,\dots,b_k)\in\{0,1\}^{k}\setminus (1,\dots, 1)}\norm{\psi_{b_1,\dots, b_k}(z_i)}^{2P(b_1,\dots, b_k)}.
\end{equation}
Using this we further modify the right hand side of \eqref{eq:psmallR} to get
\begin{equation}
\begin{split}
    p 
    &\ge  
    2^{nH(P_n)}
    \left(\sum_{i=1}^t \lvert c_i\rvert^2\frac{\norm{A_1(z_i)}^{2\floor{Rn}}\dots \norm{A_k(z_i)}^{2\floor{Rn}}}{
    \left(
    \prod_{(b_1,\dots,b_k)\in\{0,1\}^{k}\setminus (1,\dots, 1)}\norm{\psi_{b_1,\dots, b_k}(z_i)}^{P(b_1,\dots, b_k)}
    \right)^{2n}\cdot 2^{ -2n\epsilon}
    }\right)^{-1} \\
    &\ge  
    2^{nH(P_n)}
    \left(\sum_{i=1}^t \lvert c_i\rvert^2\frac{\norm{A_1(z_i)}^{2Rn}\dots \norm{A_k(z_i)}^{2Rn}}{
    \left(
    \prod_{(b_1,\dots,b_k)\in\{0,1\}^{k}\setminus (1,\dots, 1)}\norm{\psi_{b_1,\dots, b_k}(z_i)}^{\frac{P(b_1,\dots, b_k)}{R}}
    \right)^{2Rn}
    }\right)^{-1} \cdot 2^{ -2n\epsilon} \\
    &\ge
    2^{nH(P_n)}\left(\sum_{i=1}^t \lvert c_i\rvert^2 A_{P}(z_i)^{2Rn}\right)^{-1}\cdot 2^{ -2n\epsilon}.
\end{split}
\end{equation}

Note that we did not lose probability in the first order of the exponent by the previous bounds. 
In the asymptotic limit the first term leads to $H(P)$ and $2^{ -2n\epsilon}$ leads to $2\epsilon$ in the error exponent. In the latter $\epsilon>0$ is arbitrary, so this term can be omitted. 

One can see that the second term takes the same form as \eqref{eq:problowerbound}.
To optimize this expression in the complex numbers $z_i$ we can directly use the results of the previous sections. \cref{cor:finalcountable} deals with this optimization, but we need to ensure that its conditions are satisfied. By assuming that the Laurent polynomial $A_1(z)\otimes\dots\otimes A_k(z)$ contains positive and negative powers and it is nowhere zero (see \cref{rem:degenInterestingcases}), the conditions of \cref{cor:finalcountable} are met for $\prod_{j=1}^k\norm{A_j(z)}$. Because the denominator of \eqref{eq:defAP} is bounded from above these conditions also hold for $A_P$. 

We write the optimal error exponent achievable by this protocol
\begin{equation}\label{eq:optimalrR}
\begin{split}
r(P)&= -H(P) + 2R \inf_{\sigma\in\mathcal{P}(\complexes)}\sup_{z\in \supp \sigma} \left[
        \log A_{P}(z)+
        e\int_\complexes
        \log\frac{\lvert t\rvert}{\lvert z-t\rvert}\ed\sigma(t)\right]\\&=
        -H(P) + 2R \inf_{\sigma\in\mathcal{P}(\complexes)}\sup_{z\in \supp \sigma} \bigg[
        \sum_{j=1}^k\log \norm{A_j(z)}\\&\quad\,- \sum_{(b_1,\dots,b_k)\in\{0,1\}^{k}\setminus (1,\dots, 1)}\frac{P(b_1,\dots, b_k)}{R}\log \norm{\psi_{b_1,\dots, b_k}(z)} 
        +
        e\int_\complexes
        \log\frac{\lvert t\rvert}{\lvert z-t\rvert}\ed\sigma(t)\bigg]
        .
\end{split}
\end{equation}

The entropy of the probability distribution $P$ can be split by using the chain rule $H(P)=h(R)+(1-R)H(P_{\textnormal{cond}})$ where the first term is the binary entropy of $R$ and the second is the entropy of the conditional probability distribution on the bit strings excluding the all-$1$ $k$-tuples, i.e., $P_{\textnormal{cond}}(b_1,\dots, b_k)=P(b_1,\dots, b_k)/(1-R)$ if $(b_1,\dots, b_k)\neq (1,\dots, 1)$ and $0$ otherwise. Then
\begin{equation}\label{eq:rPsplit}
\begin{split}
r(P)&=  -h(R)-(1-R)H(P_{\textnormal{cond}}) + 
\inf_{\sigma\in\mathcal{P}(\complexes)}\sup_{z\in \supp \sigma} \bigg[
        2R\sum_{j=1}^k\log \norm{A_j(z)}
        \\&\quad\,+
        2Re\int_\complexes
        \log\frac{\lvert t\rvert}{\lvert z-t\rvert}\ed\sigma(t)
        -(1-R) \sum_{(b_1,\dots,b_k)\in\{0,1\}^{k}}2P_{\textnormal{cond}}(b_1,\dots, b_k)\log \norm{\psi_{b_1,\dots, b_k}(z
        )} \bigg]
    \\&=  -h(R) +
    \inf_{\sigma\in\mathcal{P}(\complexes)}\sup_{z\in \supp \sigma} \bigg[
        2R\sum_{j=1}^k\log \norm{A_j(z)}+
        2Re\int_\complexes
        \log\frac{\lvert t\rvert}{\lvert z-t\rvert}\ed\sigma(t)
        \\&\quad\,-(1-R) \sum_{(b_1,\dots,b_k)\in\{0,1\}^{k}}2P_{\textnormal{cond}}(b_1,\dots, b_k)\log \norm{\psi_{b_1,\dots, b_k}(z
        )}+H(P_{\textnormal{cond}})\bigg] 
       \\&=  -h(R) +
\inf_{\sigma\in\mathcal{P}(\complexes)}\sup_{z\in \supp \sigma} \bigg[
        2R\sum_{j=1}^k\log \norm{A_j(z)}+
        2Re\int_\complexes
        \log\frac{\lvert t\rvert}{\lvert z-t\rvert}\ed\sigma(t)
        \\&\quad\,+(1-R) \relativeentropy{P_{\textnormal{cond}}}{\{\norm{\psi_{b_1,\dots, b_k}(z
        )}^2\}_{b_1,\dots, b_k}} \bigg]
        ,
\end{split}
\end{equation}
where $\relativeentropy{P}{Q}$ is the Kullback--Leibler divergence between the finitely supported probability distribution $P$ and the finitely supported non-negative function $Q$:
\begin{equation}
\relativeentropy{P}{Q} = \sum_{x \in \supp P} P(x) \log \frac{P(x)}{Q(x)}.
\end{equation}
By optimizing this expression in $P_{\textnormal{cond}}$ we conclude our statement.
\end{proof}

\begin{proposition}\label{prop:smallRsymmetric}
    Assume that $\prod_{j=1}^k\norm{A_j(z)}$ is centrally symmetric, i.e., $\prod_{j=1}^k\norm{A_j(z)}=\prod_{j=1}^k\norm{A_j(ze^{i\varphi})}$ for any $\varphi\in\reals$, and $\norm{\psi_{b_1,\dots, b_k}(z)}^2$ is also centrally symmetric for every $b_1,\dots, b_k\in\{0,1\}$.
    Then the optimal error exponent in \eqref{eq:optrRfinal} takes the form
\begin{equation}\label{eq:rRsymmetricform}
   r_\textnormal{opt}(R)=
    -h(R) +
\inf_{z\in \complexes} \bigg[
        2R\sum_{j=1}^k\log \norm{A_j(z)}
        -(1-R) \log\left(1-\norm{\psi_{1,\dots, 1}(z
        )}^2\right)\bigg]
\end{equation}
\end{proposition}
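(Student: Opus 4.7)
The plan is to exploit central symmetry to collapse the joint optimization in \eqref{eq:optrRfinal} over $P_\textnormal{cond}$ and $\sigma$ to the single radial infimum in \eqref{eq:rRsymmetricform}. The first step is to rewrite the relative entropy by pulling out its normalizing constant. Since $K_j^{(0)}$ and $K_j^{(1)}$ are the Kraus operators of a channel, $\sum_{b_1,\dots,b_k\in\{0,1\}}\norm{\psi_{b_1,\dots,b_k}(z)}^2=1$, so the restricted weights $\bigl(\norm{\psi_{b_1,\dots,b_k}(z)}^2\bigr)_{(b_1,\dots,b_k)\neq(1,\dots,1)}$ sum to $1-\norm{\psi_{(1,\dots,1)}(z)}^2$. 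Letting $\tilde{Q}_z$ denote the corresponding normalized probability distribution on $\{0,1\}^k\setminus\{(1,\dots,1)\}$, a direct expansion gives
\begin{equation}
\relativeentropy{P_\textnormal{cond}}{\{\norm{\psi_{b_1,\dots,b_k}(z)}^2\}_{b_1,\dots,b_k}}=\relativeentropy{P_\textnormal{cond}}{\tilde{Q}_z}-\log\bigl(1-\norm{\psi_{(1,\dots,1)}(z)}^2\bigr).
\end{equation}
Setting $F(z):=2R\sum_{j=1}^k\log\norm{A_j(z)}-(1-R)\log\bigl(1-\norm{\psi_{(1,\dots,1)}(z)}^2\bigr)$, the bracket inside $\sup_{z\in\supp\sigma}$ of \eqref{eq:optrRfinal} becomes $F(z)+2Re\int\log\frac{\lvert t\rvert}{\lvert z-t\rvert}\ed\sigma(t)+(1-R)\relativeentropy{P_\textnormal{cond}}{\tilde{Q}_z}$. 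By the central symmetry hypothesis, both $F$ and the distribution $\tilde{Q}_z$ depend only on $\lvert z\rvert$.

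For the upper bound $r_\textnormal{opt}(R)\le -h(R)+\inf_zF(z)$, I pick $z^*$ (nearly) minimizing $F$, let $\sigma$ be the uniform probability measure on the circle $\{z\in\complexes:\lvert z\rvert=\lvert z^*\rvert\}$, and set $P_\textnormal{cond}=\tilde{Q}_{z^*}$. The Gauss mean value property for the logarithmic potential gives $\int\log\frac{\lvert t\rvert}{\lvert z-t\rvert}\ed\sigma(t)=0$ for every $z$ on this circle, while central symmetry yields $F(z)=F(z^*)$ and $\tilde{Q}_z=\tilde{Q}_{z^*}$, so $\relativeentropy{P_\textnormal{cond}}{\tilde{Q}_z}=0$. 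Hence the supremum over $\supp\sigma$ equals $F(z^*)$, and letting $z^*$ approach a minimizer of $F$ produces the desired upper bound.

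For the lower bound, let $P_\textnormal{cond}$ and $\sigma$ be arbitrary. Using $\relativeentropy{P_\textnormal{cond}}{\tilde{Q}_z}\ge 0$ and $F(z)\ge\inf_{z'}F(z')$ pointwise, the bracket is bounded below by $\inf_{z'}F(z')+2Re\int\log\frac{\lvert t\rvert}{\lvert z-t\rvert}\ed\sigma(t)$. Taking the infimum over $\sigma$ and invoking \cref{lem:Aconst}, which gives $\inf_\sigma\sup_{z\in\supp\sigma}\int\log\frac{\lvert t\rvert}{\lvert z-t\rvert}\ed\sigma(t)=0$, completes the argument, exactly as in the proof of \cref{cor:optlowerbound}. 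The only point requiring any care is the Kraus completeness $\sum_b\norm{\psi_b(z)}^2=1$, which is what makes the rewrite via $\tilde{Q}_z$ meaningful and causes the divergence to vanish at the chosen optimum; the remaining ingredients are nonnegativity of the relative entropy, the Gauss mean value for the logarithmic potential, and the previously established \cref{lem:Aconst}.
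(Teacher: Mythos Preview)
Your proof is correct and uses essentially the same ingredients as the paper: the Kraus completeness $\sum_b\norm{\psi_b(z)}^2=1$, the uniform measure on a circle for the upper bound, and \cref{lem:Aconst} for the lower bound. The only organizational difference is that the paper first invokes \cref{prop:symmetricA} to collapse the $\sigma$-optimization to $\inf_z$ (since $A_P$ is centrally symmetric), then swaps $\inf_{P_\textnormal{cond}}$ with $\inf_z$ and evaluates $\inf_{P_\textnormal{cond}}D(P_\textnormal{cond}\|Q_z)=-\log\sum_b Q_z(b)$, whereas you treat both infima at once via a direct upper/lower bound; the underlying computations are the same.
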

\begin{proof} 
By the assumptions $A_P(z)$ on the right hand side of \eqref{eq:optimalrR} is centrally symmetric and we can directly apply \cref{prop:symmetricA} to that equation to get
\begin{equation}
   r(P)= -H(P) + 2R \inf_{z\in\complexes}
        \log A_{P}(z)
\end{equation}
Then, similarly to \eqref{eq:rPsplit}, we use the chain rule for the Shannon entropy to write
\begin{equation}
        r(R,P_{\textnormal{cond}})=-h(R) + \inf_{z\in\complexes}\left[
        2R\sum_{j=1}^k\log \norm{A_j(z)}+(1-R) \relativeentropy{P_{\textnormal{cond}}}{(\norm{\psi_{b_1,\dots, b_k}(z
        )}^2)_{b_1,\dots, b_k}}
        \right]
        .
\end{equation}
The rest is done by taking the infimum over $P_{\textnormal{cond}}\in\mathcal{P}(\{0,1\}^k\setminus (1,\dots,1))$. Note that 
\begin{equation}
\relativeentropy{P_{\textnormal{cond}}}{(\norm{\psi_{b_1,\dots, b_k}(z
        )}^2)_{b_1,\dots, b_k}}\ge -\log\sum_{b_1,\dots, b_k}\norm{\psi_{b_1,\dots, b_k}(z
        )}^2 
\end{equation}
and choosing
\begin{equation}
    P_{\textnormal{cond}}(b_1,\dots, b_k)\coloneqq \frac{\norm{\psi_{b_1,\dots, b_k}(z
        )}^2}{\sum_{b_1,\dots, b_k}\norm{\psi_{b_1,\dots, b_k}(z
        )}^2 }
\end{equation}
we have
\begin{equation}
\relativeentropy{P_{\textnormal{cond}}}{(\norm{\psi_{b_1,\dots, b_k}(z
        )}^2)_{b_1,\dots, b_k}}= -\log\sum_{b_1,\dots, b_k}\norm{\psi_{b_1,\dots, b_k}(z
        )}^2.
\end{equation}
From the observation that 
\begin{equation}
\bra{\psi}I_1\otimes\dots\otimes\frac{A_j(z_i)^*A_j(z_i)}{\norm{A_j(z_i)}^2}\otimes\dots\otimes I_k\ket{\psi}+\bra{\psi}I_1\otimes\dots\otimes\left(I-\frac{A_j(z_i)^*A_j(z_i)}{\norm{A_j(z_i)}^2}\right)\otimes\dots\otimes I_k\ket{\psi}=
\norm{\psi}^2,
\end{equation}
meaning that 
\begin{equation}
    \sum_{b_1,\dots,b_k\in \{0,1\}}
    \norm{\psi_{b_1,b_2,\dots,b_k}(z_i)}^2=1,
\end{equation}
follows
\begin{equation}
     \log\sum_{b_1,\dots, b_k}\norm{\psi_{b_1,\dots, b_k}(z
        )}^2= \log\left(1-\norm{\psi_{1,\dots, 1}(z
        )}^2\right),
\end{equation}
which concludes our proof.
\end{proof}
\begin{remark}
    Observe that if $A_j(z)^*A_j(z)$ is centrally symmetric then it implies this symmetry on the norm $\norm{A_j(z)}$, and also on $\norm{\psi_{b_1,\dots, b_k}(z)}^2$ which is obtained by applying the operators $\frac{A_j(z)^*A_j(z)}{\norm{A_j(z)}^2}$ or $\sqrt{I-\frac{A_j(z)^*A_j(z)}{\norm{A_j(z)}^2}}$ for each $j\in [k]$ on the input state $\psi$ and taking its inner product with $\psi$. Therefore the central symmetry of $A_j(z)^*A_j(z)$ is a stronger, but more convenient condition than what is used in \cref{prop:smallRsymmetric}. Although the central symmetry of $A_j(z)^*A_j(z)$ may not hold for all degenerations with centrally symmetric norms, it holds for any combinatorial degeneration (see \cref{def:combinatorialDegen}).
\end{remark}

\begin{example}\label{ex:GHZtoWSmallR}
    We can repeat \cref{ex:GHZtoW} considering a transformation rate $0<R<1$. Using that $A_j(z)^*A_j(z)$ is centrally symmetric we can apply $\cref{prop:smallRsymmetric}$.
    \begin{equation}
    \begin{split}
        \norm{\psi_{1\dots 1}(\lvert z\rvert)}^2&=
        \frac{\norm{A_1(z)\otimes\dots\otimes A_k(z) \psi}^2}{\norm{A_1(z)\otimes\dots\otimes A_k(z)}^2}
        =
        \frac{1}{\norm{A_1(z)\otimes\dots\otimes A_k(z)}^2}
        \frac{3+\lvert z\rvert^{12}}{3}
        \\[10pt] &=
        \begin{cases}    
        \frac{3+\lvert z\rvert^{12}}{4\lvert z\rvert^{12}}=
        \frac{3}{4\lvert z\rvert^{12}}+\frac{1}{4}
        \quad\text{if}\quad\lvert z\rvert\ge 1\\
        \frac{3+\lvert z\rvert^{12}}{4\lvert z\rvert^{-6}}=
        \frac{3\lvert z\rvert^6}{4}+\frac{\lvert z\rvert^{18}}{4}
        \quad\text{if}\quad\lvert z\rvert< 1
        \end{cases}
        \end{split}
    \end{equation}
     The results of the numerical optimization of \eqref{eq:rRsymmetricform} are shown on \cref{fig:GHZtoWcombinat}.
\end{example}

\begin{figure}
\begin{center}
\begin{tikzpicture}
\begin{axis}[
  rateplotstyle,
  ymin=0,
  ymax=1.1,
  extra y ticks={0},
  xtick={0,0.05,0.1,0.15,0.2,0.25,0.3,0.35,0.415 },
  xticklabels={0,0.05,0.1,0.15,0.2,0.25,0.3,0.35,0.415},
  xmin=0,
  xmax=0.445,
  xlabel={$\quad r$},
  ylabel={$R$},
  samples=100,
]
\path[name path=TOP] (\pgfkeysvalueof{/pgfplots/xmin},\pgfkeysvalueof{/pgfplots/ymax})--(\pgfkeysvalueof{/pgfplots/xmax},\pgfkeysvalueof{/pgfplots/ymax});
\path[name path=BOTTOM] (\pgfkeysvalueof{/pgfplots/xmin},1)--(\pgfkeysvalueof{/pgfplots/xmax},1);
\addplot[plotline,name path=UBOUND] {1};
\addplot[notachievableregion, pattern=north east lines] fill between [of=BOTTOM and TOP];

\addplot[plotline, name path=INVERTEDFUNC] table [x=r, y=R] {\invertedfunctiondata};
\path[name path=AXIS] (axis cs:0,0) -- (axis cs:0.415,0);
\addplot[achievableregion] fill between [of=INVERTEDFUNC and AXIS];
\draw[thick, dashed] (axis cs:0.415, 0) -- (axis cs:0.415,1);
\draw[thick, dotted] (axis cs:0, 0) -- (axis cs:0.415,1);
\path[name path=RIGHTACHIEVABLE] (axis cs:0.415,0) -- (axis cs:0.445,0) -- (axis cs:0.445,1) -- (axis cs:0.415,1) -- cycle;
\addplot[achievableregion] fill between [of=RIGHTACHIEVABLE and AXIS];

\end{axis}
\end{tikzpicture}
\end{center}
\caption{The rate-error exponent plane for transforming \GHZ{} states into \W{} states. The curve shows the achievable $(R,r)$ pairs calculated by numerical optimization of \cref{prop:smallRsymmetric}. For $R=1$ we recover the error exponent $r\approx 0.415$ calculated in \cref{ex:GHZtoW}, then for larger error exponents the rate $R=1$ is also achievable. The shaded region below the curve is also achievable, but the lined region ($R>1$) is not achievable due to the equality of the local ranks.
The dotted line shows the trade-off curve of the time-sharing protocol utilizing the known protocol for $R=1$.   
The achievability of the white region is not determined by these bounds.}
\label{fig:GHZtoWcombinat}
\end{figure}
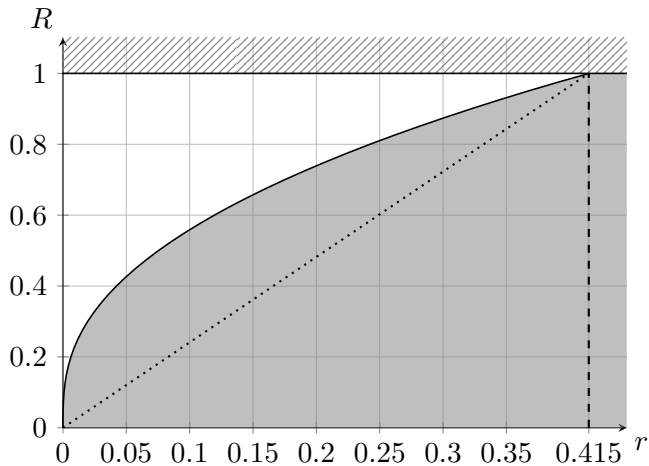

\begin{corollary}
    Assume the symmetry conditions of \cref{prop:smallRsymmetric}. Then the best achievable error exponent is
    \begin{equation}
    r_{\textnormal{opt}}=  -\sup_{z\in\complexes}\log
        \bigg(
        \exp\left(
        -2\sum_{j=1}^k\log \norm{A_j(z)}
        \right)
        + \exp\left(
        \vphantom{\sum_{j=1}^k}
        \log\left(1-\norm{\psi_{1,\dots, 1}(z
        )}^2\right)\right)  \bigg).
    \end{equation}
\end{corollary}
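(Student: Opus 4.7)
The plan is to derive $r_\textnormal{opt}$ by minimizing the bound $r_\textnormal{opt}(R)$ from \cref{prop:smallRsymmetric} over the rate $R \in [0,1]$, which by the time-sharing and discarding arguments at the start of \cref{sec:errorexptrfratelessone} equals the best error exponent achievable by this family of degeneration-based protocols. After exchanging the two infima (over $R$ and over $z$), the problem decouples, for each fixed $z \in \complexes$, into a one-dimensional optimization in $R$ that admits a closed-form solution via a standard log-sum-divergence identity.

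I would begin by introducing the shorthand $a(z) = 2\sum_{j=1}^k \log \norm{A_j(z)}$ and $b(z) = -\log(1-\norm{\psi_{1,\dots,1}(z)}^2)$, so that \cref{prop:smallRsymmetric} reads $r_\textnormal{opt}(R) = -h(R) + \inf_{z\in\complexes}[R\,a(z) + (1-R)\,b(z)]$. Taking the infimum over $R$ and swapping with the infimum over $z$,
\begin{equation*}
r_\textnormal{opt} = \inf_{z\in\complexes}\,\inf_{R\in[0,1]} \bigl[-h(R) + R\,a(z) + (1-R)\,b(z)\bigr].
\end{equation*}

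For the inner one-dimensional optimization, the key observation is the following identity: with $Z(z) = 2^{-a(z)} + 2^{-b(z)}$ and the two-point reference distribution $q(z) = \bigl(2^{-a(z)}/Z(z),\, 2^{-b(z)}/Z(z)\bigr)$, a direct expansion yields, for every $R \in [0,1]$ and $p = (R, 1-R)$,
\begin{equation*}
\relativeentropy{p}{q(z)} = -h(R) + R\,a(z) + (1-R)\,b(z) + \log Z(z).
\end{equation*}
Since $\relativeentropy{p}{q(z)} \ge 0$ with equality at $R = 2^{-a(z)}/Z(z) \in [0,1]$, the inner infimum equals $-\log Z(z)$. Substituting back and rewriting $2^{-a(z)}$ and $2^{-b(z)}$ in the $\exp$-$\log$ form used in the statement produces exactly the claimed expression.

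The whole argument is essentially a short calculation once the divergence identity is identified, and I do not anticipate any serious obstacle. The only minor points to verify are that the minimizing $R = 2^{-a(z)}/Z(z)$ always lies in $[0,1]$, which follows from the positivity of $2^{-a(z)}$ and $2^{-b(z)}$, and that the infima in $R$ and $z$ can indeed be exchanged, which is straightforward since both are infima of a single real-valued function of $(R,z)$ over a product domain.
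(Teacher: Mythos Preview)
Your proposal is correct and follows essentially the same approach as the paper: move $-h(R)$ inside the infimum over $z$, swap the two infima, and solve the resulting one-variable optimization in $R$ via the log-sum-exp identity. The only cosmetic difference is that the paper quotes the identity $\sup_{p\in[0,1]}[h(p)+px_1+(1-p)x_2]=\log(2^{x_1}+2^{x_2})$ from \cite{strassen1991degeneration}, whereas you reprove it on the spot by recognizing the objective as $\relativeentropy{p}{q(z)}-\log Z(z)$ and invoking nonnegativity of the divergence; these are the same computation.
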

\begin{proof}
    We take the infimum of the right hand side of $\eqref{eq:rRsymmetricform}$ over $R$. Note that the entropy $h(R)$ can be moved inside the infimum over $z$, which can be interchanged with the infimum over $R$. Using
\begin{equation}
        \sup_{p\in [0,1]}\left[ h(p)+ px_1 + (1-p)x_2\right]
        =\log\big( 2^{x_1}+2^{x_2}\big)
\end{equation}
from \cite[Eq. (2.13)]{strassen1991degeneration} we get
    \begin{equation}
    \begin{split}
&\inf_{R\in [0,1]} \bigg[ 
        2R\sum_{j=1}^k\log \norm{A_j(z)}
        -(1-R) \log\left(1-\norm{\psi_{1,\dots, 1}(z
        )}^2\right)  -h(R)\bigg]\\
        =&-
        \sup_{R\in [0,1]} \bigg[ 
        -2R\sum_{j=1}^k\log \norm{A_j(z)}
        +(1-R) \log\left(1-\norm{\psi_{1,\dots, 1}(z
        )}^2\right)  +h(R)\bigg]\\
        =&-\log
        \bigg[
        \exp\left(
        -2\sum_{j=1}^k\log \norm{A_j(z)}
        \right)
        + \exp\left(
        \vphantom{\sum_{j=1}^k}
        \log\left(1-\norm{\psi_{1,\dots, 1}(z
        )}^2\right)\right)  \bigg]
    \end{split}
    \end{equation}
\end{proof}

\section{Conclusion}

We have shown that if a state $\varphi$ arises as a degeneration of another state $\psi$, then an asymptotic probabilistic transformation from $\psi$ to $\varphi$ is possible with a finite strong converse exponent, more precisely, with an exponent bounded by a single-letter formula given in \cref{thm:main}. In addition, we provide a bound on the trade-off relation between the transformation rate $R\in(0,1)$ and the error exponent, which is more favourable than the bound that follows from a convexity (time-sharing) argument.

However, our bound on the success probability remains exponentially small even for very small rates $R$, where one would expect that the probability of success in fact approaches $1$ exponentially fast. In contrast, if a single-copy LOCC transformation from $\psi$ to $\varphi$ is possible with probability $p$, then for any rate $R<p$ there exists a protocol for an asymptotic transformation that succeeds with probability $1-2^{-nd(R\|p)}$, where $d(q\|p)=p\log\frac{p}{q}+(1-p)\log\frac{1-p}{1-q}$. It would be most interesting to construct a protocol from a degeneration that, for some nonzero rate $R$ succeeds with probability $1-2^{-\Omega(n)}$. This would apply even in cases when there is a degeneration from $\psi$ to $\varphi$ (so that rate $1$ is achievable with asymptotic SLOCC), but there is no SLOCC transformation from $\psi^{\otimes n}$ to $\varphi^{\otimes n}$ for any $n$.

Concerning the other limiting case of large strong converse exponents, it is not even known if the optimal asymptotic SLOCC rate can be achieved with a success probability that goes to $0$ exponentially (and not faster). Denoting the largest achievable rate for a given strong converse exponent $r$ by $R^*(\psi\to\varphi,r)$, this amounts to asking if the function $R^*(\psi\to\varphi,r)$ is eventually constant. When the optimal rate is achieved by a finite-copy SLOCC transformation (restriction), then by running the protocol independently on copies (blocks) already gives and exponential lower bound, and our result shows that exponential behaviour also follows when the optimal rate arises from a finite-copy degeneration. However, this leaves open the possibility that the largest success probability at the optimal SLOCC rate decreases faster than any exponential in case the optimal rate can only be achieved by a sequence of degenerations. As we have seen in \cref{ex:graphtensors}, a finite error exponent at the optimal rate may still be possible even when no degeneration exists for any finite number of copies with the same rate.

\section*{Acknowledgement}

The project has been implemented with the support provided by the Ministry of Culture and Innovation of Hungary from the National Research, Development and Innovation Fund, financed under the FK~146643 funding scheme. Supported by the \'UNKP-23-5-BME-458 New National Excellence Program of the Ministry for Culture and Innovation from the source of the National Research, Development and Innovation Fund, and the J\'anos Bolyai Research Scholarship of the Hungarian Academy of Sciences, and the National Research, Development and Innovation Office within the Quantum Information National Laboratory of Hungary (Grant No.~2022-2.1.1-NL-2022-00004).

\bibliography{references}{}

\newcommand{\etalchar}[1]{$^{#1}$}
\begin{thebibliography}{HKM{\etalchar{+}}02}

\bibitem[Ald84]{alder1984grenzrang}
Alexander Alder.
\newblock {\em Grenzrang und {G}renzkomplexit{\"a}t aus algebraischer und topologischer {S}icht}.
\newblock PhD thesis, Zentralstelle der Studentenschaft, 1984.

\bibitem[BBC{\etalchar{+}}93]{bennett1993teleporting}
Charles~H Bennett, Gilles Brassard, Claude Cr{\'e}peau, Richard Jozsa, Asher Peres, and William~K Wootters.
\newblock Teleporting an unknown quantum state via dual classical and {E}instein--{P}odolsky--{R}osen channels.
\newblock {\em Physical Review Letters}, 70(13):1895, 1993.
\newblock \href {https://doi.org/10.1103/PhysRevLett.70.1895} {\path{doi:10.1103/PhysRevLett.70.1895}}.

\bibitem[BBPS96]{bennett1996concentrating}
Charles~H Bennett, Herbert~J Bernstein, Sandu Popescu, and Benjamin Schumacher.
\newblock Concentrating partial entanglement by local operations.
\newblock {\em Physical Review A}, 53(4):2046, 1996.
\newblock \href {https://arxiv.org/abs/quant-ph/9511030} {\path{arXiv:quant-ph/9511030}}, \href {https://doi.org/10.1103/PhysRevA.53.2046} {\path{doi:10.1103/PhysRevA.53.2046}}.

\bibitem[BCRL79]{bini1979complexity}
Dario~Andrea Bini, Milvio Capovani, Francesco Romani, and Grazia Lotti.
\newblock ${O}(n^{2.7799})$ complexity for $n\times n$ approximate matrix multiplication.
\newblock {\em Inf. Proc. Letters}, 8(5):234--235, 1979.
\newblock \href {https://doi.org/10.1016/0020-0190(79)90113-3} {\path{doi:10.1016/0020-0190(79)90113-3}}.

\bibitem[BCS13]{burgisser2013algebraic}
Peter B{\"u}rgisser, Michael Clausen, and Mohammad~A Shokrollahi.
\newblock {\em Algebraic complexity theory}, volume 315.
\newblock Springer Science \& Business Media, 2013.

\bibitem[BDSW96]{bennett1996mixed}
Charles~H Bennett, David~P DiVincenzo, John~A Smolin, and William~K Wootters.
\newblock Mixed-state entanglement and quantum error correction.
\newblock {\em Physical Review A}, 54(5):3824, 1996.
\newblock \href {https://arxiv.org/abs/quant-ph/9604024} {\path{arXiv:quant-ph/9604024}}, \href {https://doi.org/10.1103/PhysRevA.54.3824} {\path{doi:10.1103/PhysRevA.54.3824}}.

\bibitem[Ber09]{bernstein2009matrix}
Dennis~S Bernstein.
\newblock {\em Matrix mathematics: Theory, Facts, and Formulas}.
\newblock Princeton University Press, second edition, 2009.
\newblock \href {https://doi.org/10.1515/9781400833344} {\path{doi:10.1515/9781400833344}}.

\bibitem[Bin80]{bini1980relations}
Dario Bini.
\newblock Relations between exact and approximate bilinear algorithms. {A}pplications.
\newblock {\em Calcolo}, 17(1):87--97, 1980.
\newblock \href {https://doi.org/10.1007/BF02575865} {\path{doi:10.1007/BF02575865}}.

\bibitem[BPR{\etalchar{+}}00]{bennett2000exact}
Charles~H Bennett, Sandu Popescu, Daniel Rohrlich, John~A Smolin, and Ashish~V Thapliyal.
\newblock Exact and asymptotic measures of multipartite pure-state entanglement.
\newblock {\em Physical Review A}, 63(1):012307, 2000.
\newblock \href {https://arxiv.org/abs/quant-ph/9908073} {\path{arXiv:quant-ph/9908073}}, \href {https://doi.org/10.1103/PhysRevA.63.012307} {\path{doi:10.1103/PhysRevA.63.012307}}.

\bibitem[BV22]{bugar2022interpolating}
D\'avid Bug\'ar and P{\'e}ter Vrana.
\newblock Interpolating between {R}{\'e}nyi entanglement entropies for arbitrary bipartitions via operator geometric means.
\newblock preprint, 2022.
\newblock \href {https://arxiv.org/abs/2208.14438} {\path{arXiv:2208.14438}}.

\bibitem[BV24]{bugar2024explicit}
D{\'a}vid Bug{\'a}r and P{\'e}ter Vrana.
\newblock Explicit error bounds for entanglement transformations between sparse multipartite states.
\newblock {\em IEEE Transactions on Information Theory}, 2024.

\bibitem[CCD{\etalchar{+}}10]{chen2010tensor}
Lin Chen, Eric Chitambar, Runyao Duan, Zhengfeng Ji, and Andreas Winter.
\newblock Tensor rank and stochastic entanglement catalysis for multipartite pure states.
\newblock {\em Physical Review Letters}, 105(20):200501, 2010.
\newblock \href {https://arxiv.org/abs/1003.3059} {\path{arXiv:1003.3059}}, \href {https://doi.org/10.1103/PhysRevLett.105.200501} {\path{doi:10.1103/PhysRevLett.105.200501}}.

\bibitem[CDS08]{chitambar2008tripartite}
Eric Chitambar, Runyao Duan, and Yaoyun Shi.
\newblock Tripartite entanglement transformations and tensor rank.
\newblock {\em Physical Review Letters}, 101(14):140502, 2008.
\newblock \href {https://arxiv.org/abs/0805.2977} {\path{arXiv:0805.2977}}, \href {https://doi.org/10.1103/PhysRevLett.101.140502} {\path{doi:10.1103/PhysRevLett.101.140502}}.

\bibitem[CK11]{csiszar2011information}
Imre Csisz{\'a}r and J{\'a}nos K{\"o}rner.
\newblock {\em Information theory: coding theorems for discrete memoryless systems}.
\newblock Cambridge University Press, Cambridge, UK, second edition, 2011.
\newblock \href {https://doi.org/10.1017/CBO9780511921889} {\path{doi:10.1017/CBO9780511921889}}.

\bibitem[CVZ18]{christandl2016asymptotic}
Matthias Christandl, P{\'e}ter Vrana, and Jeroen Zuiddam.
\newblock Asymptotic tensor rank of graph tensors: beyond matrix multiplication.
\newblock {\em computational complexity}, 2018.
\newblock \href {https://arxiv.org/abs/1609.07476} {\path{arXiv:1609.07476}}, \href {https://doi.org/10.1007/s00037-018-0172-8} {\path{doi:10.1007/s00037-018-0172-8}}.

\bibitem[CVZ23]{christandl2023universal}
Matthias Christandl, P{\'e}ter Vrana, and Jeroen Zuiddam.
\newblock Universal points in the asymptotic spectrum of tensors.
\newblock {\em Journal of the American Mathematical Society}, 36:31--79, 2023.
\newblock \href {https://arxiv.org/abs/1709.07851} {\path{arXiv:1709.07851}}, \href {https://doi.org/10.1090/jams/996} {\path{doi:10.1090/jams/996}}.

\bibitem[DVC00]{dur2000three}
Wolfgang D{\"u}r, Guifre Vidal, and J~Ignacio Cirac.
\newblock Three qubits can be entangled in two inequivalent ways.
\newblock {\em Physical Review A}, 62(6):062314, 2000.
\newblock \href {https://arxiv.org/abs/quant-ph/0005115} {\path{arXiv:quant-ph/0005115}}, \href {https://doi.org/10.1103/PhysRevA.62.062314} {\path{doi:10.1103/PhysRevA.62.062314}}.

\bibitem[GL24]{gharahi2024persistent}
Masoud Gharahi and Vladimir Lysikov.
\newblock Persistent tensors and multiqudit entanglement transformation.
\newblock {\em Quantum}, 8:1238, 2024.
\newblock \href {https://arxiv.org/abs/2211.00652} {\path{arXiv:2211.00652}}, \href {https://doi.org/10.22331/q-2024-01-31-1238} {\path{doi:10.22331/q-2024-01-31-1238}}.

\bibitem[HKM{\etalchar{+}}02]{hayashi2002error}
Masahito Hayashi, Masato Koashi, Keiji Matsumoto, Fumiaki Morikoshi, and Andreas Winter.
\newblock Error exponents for entanglement concentration.
\newblock {\em Journal of Physics A: Mathematical and General}, 36(2):527, 2002.
\newblock \href {https://arxiv.org/abs/quant-ph/0206097} {\path{arXiv:quant-ph/0206097}}, \href {https://doi.org/10.1088/0305-4470/36/2/316} {\path{doi:10.1088/0305-4470/36/2/316}}.

\bibitem[JV19]{jensen2019asymptotic}
Asger~Kj{\ae}rulff Jensen and P{\'e}ter Vrana.
\newblock The asymptotic spectrum of {LOCC} transformations.
\newblock {\em IEEE Transactions on Information Theory}, 66(1):155--166, January 2019.
\newblock \href {https://arxiv.org/abs/1807.05130} {\path{arXiv:1807.05130}}, \href {https://doi.org/10.1109/TIT.2019.2927555} {\path{doi:10.1109/TIT.2019.2927555}}.

\bibitem[KMZ23]{kopparty2023geometric}
Swastik Kopparty, Guy Moshkovitz, and Jeroen Zuiddam.
\newblock Geometric rank of tensors and subrank of matrix multiplication.
\newblock {\em Discrete Analysis}, 2023.
\newblock \href {https://arxiv.org/abs/2002.09472} {\path{arXiv:2002.09472}}, \href {https://doi.org/10.19086/da.73322} {\path{doi:10.19086/da.73322}}.

\bibitem[MK01]{morikoshi2001deterministic}
Fumiaki Morikoshi and Masato Koashi.
\newblock Deterministic entanglement concentration.
\newblock {\em Physical Review A}, 64(2):022316, 2001.
\newblock \href {https://arxiv.org/abs/quant-ph/0107120} {\path{arXiv:quant-ph/0107120}}, \href {https://doi.org/10.1103/PhysRevA.64.022316} {\path{doi:10.1103/PhysRevA.64.022316}}.

\bibitem[Ran95]{ransford1995potential}
Thomas Ransford.
\newblock {\em Potential theory in the complex plane}.
\newblock Number~28 in London Mathematical Society Student Texts. Cambridge University Press, 1995.
\newblock \href {https://doi.org/10.1017/CBO9780511623776} {\path{doi:10.1017/CBO9780511623776}}.

\bibitem[Sch81]{schonhage1981partial}
Arnold Sch{\"o}nhage.
\newblock Partial and total matrix multiplication.
\newblock {\em SIAM Journal on Computing}, 10(3):434--455, 1981.
\newblock \href {https://doi.org/10.1137/0210032} {\path{doi:10.1137/0210032}}.

\bibitem[ST13]{saff2013logarithmic}
Edward~B Saff and Vilmos Totik.
\newblock {\em Logarithmic potentials with external fields}, volume 316.
\newblock Springer Science \& Business Media, 2013.

\bibitem[Str87]{strassen1987relative}
Volker Strassen.
\newblock Relative bilinear complexity and matrix multiplication.
\newblock {\em Journal f{\"u}r die reine und angewandte Mathematik}, 375:406--443, 1987.
\newblock \href {https://doi.org/10.1515/crll.1987.375-376.406} {\path{doi:10.1515/crll.1987.375-376.406}}.

\bibitem[Str91]{strassen1991degeneration}
Volker Strassen.
\newblock Degeneration and complexity of bilinear maps: some asymptotic spectra.
\newblock {\em Journal f{\"u}r die reine und angewandte Mathematik}, 413:127--180, 1991.
\newblock \href {https://doi.org/10.1515/crll.1991.413.127} {\path{doi:10.1515/crll.1991.413.127}}.

\bibitem[VC15]{vrana2015asymptotic}
P{\'e}ter Vrana and Matthias Christandl.
\newblock Asymptotic entanglement transformation between {W} and {GHZ} states.
\newblock {\em Journal of Mathematical Physics}, 56(2):022204, 2015.
\newblock \href {https://arxiv.org/abs/1310.3244} {\path{arXiv:1310.3244}}, \href {https://doi.org/10.1063/1.4908106} {\path{doi:10.1063/1.4908106}}.

\bibitem[VC17]{vrana2017entanglement}
P{\'e}ter Vrana and Matthias Christandl.
\newblock Entanglement distillation from {G}reenberger--{H}orne--{Z}eilinger shares.
\newblock {\em Communications in Mathematical Physics}, 352(2):621--627, 2017.
\newblock \href {https://arxiv.org/abs/1603.03964} {\path{arXiv:1603.03964}}, \href {https://doi.org/10.1007/s00220-017-2861-6} {\path{doi:10.1007/s00220-017-2861-6}}.

\bibitem[Vra23]{vrana2023family}
P{\'e}ter Vrana.
\newblock A family of multipartite entanglement measures.
\newblock {\em Communications in Mathematical Physics}, 402(1):637--664, 2023.
\newblock \href {https://arxiv.org/abs/2008.11108} {\path{arXiv:2008.11108}}, \href {https://doi.org/10.1007/s00220-023-04731-8} {\path{doi:10.1007/s00220-023-04731-8}}.

\bibitem[WRZ05]{walther2005local}
P~Walther, KJ~Resch, and Anton Zeilinger.
\newblock Local conversion of {G}reenberger--{H}orne--{Z}eilinger states to approximate {W} states.
\newblock {\em Physical Review Letters}, 94(24):240501, 2005.
\newblock \href {https://arxiv.org/abs/quant-ph/0506190} {\path{arXiv:quant-ph/0506190}}, \href {https://doi.org/10.1103/PhysRevLett.94.240501} {\path{doi:10.1103/PhysRevLett.94.240501}}.

\bibitem[YCGD10]{yu2010tensor}
Nengkun Yu, Eric Chitambar, Cheng Guo, and Runyao Duan.
\newblock Tensor rank of the tripartite state {$|W\rangle^{\otimes n}$}.
\newblock {\em Physical Review A}, 81(1):014301, 2010.
\newblock \href {https://arxiv.org/abs/0910.0986} {\path{arXiv:0910.0986}}, \href {https://doi.org/10.1103/PhysRevA.81.014301} {\path{doi:10.1103/PhysRevA.81.014301}}.

\bibitem[YGD14]{yu2014obtaining}
Nengkun Yu, Cheng Guo, and Runyao Duan.
\newblock Obtaining a {W} state from a {G}reenberger--{H}orne--{Z}eilinger state via stochastic local operations and classical communication with a rate approaching unity.
\newblock {\em Physical Review Letters}, 112(16):160401, 2014.
\newblock \href {https://arxiv.org/abs/1309.4833} {\path{arXiv:1309.4833}}, \href {https://doi.org/10.1103/PhysRevLett.112.160401} {\path{doi:10.1103/PhysRevLett.112.160401}}.

\end{thebibliography}

\newpage
\appendix

\section{Transfinite diameter and logarithmic capacity}\label{subsec:transfiniteNpot}

The transfinite diameter measures the ``size'' of a compact set in the complex plane or more generally in Euclidean space. For a compact set $K \subset \mathbb{C}$, the transfinite diameter $d(K)$ is defined as
\begin{equation}
d(K) = \lim_{n \to \infty} \left( \sup_{z_1, z_2, \ldots, z_n \in K} \prod_{1 \leq i < j \leq n} \lvert z_i - z_j\rvert \right)^{\frac{2}{n(n-1)}}.
\end{equation}
This limit exists and provides a measure of the distribution of points in $K$. This concept can be generalized to the weighted transfinite diameter
\begin{equation}
d_w(K) = \lim_{n \to \infty} \left( \sup_{z_1, z_2, \ldots, z_n \in K} \prod_{1 \leq i < j \leq n} \bigg(w(z_i)w(z_j)\lvert z_i - z_j\rvert\bigg)^{\frac{2}{n(n-1)}} \right),
\end{equation}
where $w: \complexes\to\reals$ is the weight function.

There is an equivalent characterisation of the transfinite diameter, namely the logarithmic capacity. For a compact set $K \subset \mathbb{C}$, let
\begin{equation}
V(K)\coloneqq \inf_{\mu\in\mathcal{P}(K)} I(\mu) = \iint \log \frac{1}{\lvert z - t\rvert} \ed\mu(z) \ed\mu(t).
\end{equation}
The capacity $c(K)$ is then given by:
\begin{equation}
c(K) = e^{-I(\mu_K)}.
\end{equation}
It has been shown that $c(K)=d(K)$ for a compact set $K$ (\cite{ransford1995potential}).
While the potential integral describes the potential caused by the charges, the weight functions $w(z)$ in the weighted logarithmic capacity 
\begin{equation}\label{eq:weightedCap}
    I_w(\mu) \coloneqq \iint \log \frac{1}{w(z)w(t)\lvert z - t\rvert}\ed\mu(z)\ed\mu(t)
\end{equation}
account for the potential caused by an external field. 
Similarly to the uniformly weighted case we have $c_w(K)\coloneqq e^{-I_w(\mu_K)}=d_w(K)$ (\cite{saff2013logarithmic}).

For a measure $\mu$ on a compact set $K \subset \complexes$, the potential $U^\mu$ at a point $z \in \complexes$ is given by:
\begin{equation}
U^\mu(z) = \int_K \log \frac{1}{\lvert z - t\rvert}\ed\mu(t).
\end{equation}
This integral describes how the measure $\mu$ influences the potential at $x$.

A function $u: \complexes \to \mathbb{R}$ is harmonic if it satisfies Laplace's equation:
\begin{equation}
\Delta u = 0,
\end{equation}
where $\Delta$ is the Laplacian operator. A key property of a harmonic function is that it attains its maximum and minimum on the boundary of its domain. In the following we show that the potential function is harmonic outside of $\supp\mu$, from which follows that it can not have local extrema outside of $\supp\mu$ except if it is constant there.

\begin{lemma}\label{lem:harmonic}
    Let $t$ and $z$ be complex numbers. Then $U^\mu(z)=-\int\log{\lvert z-t\rvert}\ed\sigma(t)$ is harmonic outside the support of $\sigma$, i.e., $\Delta U^\mu(z) = 0$ for any $z\in \complexes\setminus\supp\sigma$.
\end{lemma}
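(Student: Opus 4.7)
The plan is to reduce the claim to the standard fact that $\log\lvert z-t\rvert$ is harmonic in $z$ away from $t$, and then justify differentiating under the integral sign. First I would fix $z_0\in\complexes\setminus\supp\sigma$ and choose $r>0$ small enough that the closed disk $\closedball{r}{z_0}$ is disjoint from $\supp\sigma$. Since $\supp\sigma$ is a closed set disjoint from $\closedball{r}{z_0}$, for all $z\in\closedball{r}{z_0}$ and $t\in\supp\sigma$ the distance $\lvert z-t\rvert$ is bounded below by some $\delta>0$, so $\log\lvert z-t\rvert$ and its partial derivatives in $z$ up to second order are uniformly bounded on $\closedball{r}{z_0}\times\supp\sigma$.

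Next I would verify by direct computation that for any fixed $t\neq z$ the function $z\mapsto\log\lvert z-t\rvert$ is harmonic. Writing $z=x+iy$, $t=a+ib$, we have $\log\lvert z-t\rvert=\tfrac{1}{2}\log\bigl((x-a)^2+(y-b)^2\bigr)$, and a short calculation gives
\begin{equation}
\partial_x^2\log\lvert z-t\rvert+\partial_y^2\log\lvert z-t\rvert=0
\end{equation}
wherever $z\neq t$. (Alternatively, this is the real part of the holomorphic function $\log(z-t)$ on any simply connected region not containing $t$, hence harmonic.)

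Finally, the uniform bound on second partials on $\closedball{r}{z_0}\times\supp\sigma$ combined with the finiteness of $\sigma$ allows differentiation under the integral sign: for $z$ in the open disk $\ball{r}{z_0}$,
\begin{equation}
\Delta U^\mu(z)=-\int\Delta_z\log\lvert z-t\rvert\ed\sigma(t)=\int 0\ed\sigma(t)=0,
\end{equation}
which proves the claim at $z_0$, and hence everywhere on $\complexes\setminus\supp\sigma$. The only step that requires some care is the interchange of $\Delta$ and the integral, but this is routine given that we work at positive distance from $\supp\sigma$; there is no real obstacle since the singularity of the integrand at $t=z$ is excluded by assumption.
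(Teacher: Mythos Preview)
Your proposal is correct and follows essentially the same approach as the paper: verify by direct computation that $z\mapsto\log\lvert z-t\rvert$ is harmonic for $z\neq t$, and then pass the Laplacian through the integral. Your version is in fact slightly more careful than the paper's, which dismisses the interchange of $\Delta$ and the integral with the phrase ``by the linearity of the integration,'' whereas you supply the uniform bound on a disk at positive distance from $\supp\sigma$ that actually justifies differentiation under the integral sign.
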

\begin{proof}
Let $z = x + iy$ and $t = a + ib$, where $x, y, a, b$ are real numbers and $z\neq t$. First we aim to show that the function $u(z) = \log \lvert z - t\rvert$ is harmonic with respect to $z$. 
The function $u(z)$ can be written as:
\begin{equation}
u(z)=u(x, y) = \log \lvert z - t\rvert = \log \sqrt{(x - a)^2 + (y - b)^2}.
\end{equation}
To prove that $u(x, y)$ is harmonic, we need to show that it satisfies the Laplace equation
\begin{equation}
\frac{\partial^2 u}{\partial x^2} + \frac{\partial^2 u}{\partial y^2} = 0.
\end{equation}
First, we compute the partial derivatives of $u(x, y)$
\begin{equation}
\frac{\partial u}{\partial x} = \frac{1}{2} \cdot \frac{2(x - a)}{(x - a)^2 + (y - b)^2} = \frac{x - a}{(x - a)^2 + (y - b)^2},
\end{equation}
\begin{equation}
\frac{\partial u}{\partial y} = \frac{1}{2} \cdot \frac{2(y - b)}{(x - a)^2 + (y - b)^2} = \frac{y - b}{(x - a)^2 + (y - b)^2}.
\end{equation}
Next, we compute the second-order partial derivatives using the quotient rule:
\begin{equation}
\frac{\partial^2 u}{\partial x^2} = \frac{(x - a)^2 + (y - b)^2 - (x - a) \cdot 2(x - a)}{\left( (x - a)^2 + (y - b)^2 \right)^2} = \frac{(y - b)^2 - (x - a)^2}{\left( (x - a)^2 + (y - b)^2 \right)^2},
\end{equation}
\begin{equation}
\frac{\partial^2 u}{\partial y^2} = \frac{(x - a)^2 + (y - b)^2 - (y - b) \cdot 2(y - b)}{\left( (x - a)^2 + (y - b)^2 \right)^2} = \frac{(x - a)^2 - (y - b)^2}{\left( (x - a)^2 + (y - b)^2 \right)^2}.
\end{equation}
Adding these second-order partial derivatives together, we get
\begin{equation}
\frac{\partial^2 u}{\partial x^2} + \frac{\partial^2 u}{\partial y^2} = \frac{(y - b)^2 - (x - a)^2}{\left( (x - a)^2 + (y - b)^2 \right)^2} + \frac{(x - a)^2 - (y - b)^2}{\left( (x - a)^2 + (y - b)^2 \right)^2} = 0.
\end{equation}
Hence, we have shown that
\begin{equation}
\frac{\partial^2 u}{\partial x^2} + \frac{\partial^2 u}{\partial y^2} = 0
\end{equation}
for any $z\neq t$. By the linearity of the integration then it also follows that $\int\log{\lvert z-t\rvert}\ed\sigma(t)$ is harmonic for $z\not\in\supp\sigma$.
\end{proof}

\end{document}